\newcommand{\trotter}{\textup{trot}}
\newcommand{\best}{\eta^\star}
\newcommand{\trotterstep}[1]{\tau^{\trotter}_{#1}}
\newcommand{\F}{\textup{F}}
\newcommand{\MP}{\textup{MP}}
\newcommand{\Trunc}{\textup{Trunc}}
\newcommand{\hc}{\mathcal{h}}
\newcommand{\kk}{k}
\newcommand{\op}{\textup{op}}
\newcommand{\tmax}{t_\textup{max}}
\newcommand{\ceil}[1]{\left\lceil #1 \right\rceil}
\renewcommand{\gamma}{\upgamma}
\newcommand{\dt}{\delta t}
\title{Fast convergence of Majorana Propagation \\
for weakly interacting fermions 
}
\author[1]{Giorgio~Facelli}
\author[1]{Hamza~Fawzi}
\author[2]{Omar~Fawzi}
\affil[1]{Department of Applied Mathematics and Theoretical Physics, University of Cambridge, United Kingdom}
\affil[2]{Inria, ENS Lyon, UCBL, LIP, F-69342 Lyon Cedex 07, France}
\date{\today}
\begin{document}

\maketitle

\begin{abstract}
Simulating the time dynamics of an observable under Hamiltonian evolution is one of the most promising candidates for quantum advantage as we do not expect efficient classical algorithms for this problem except in restricted settings. Here, we introduce such a setting by showing that Majorana Propagation, a simple algorithm combining Trotter steps and truncations, efficiently finds a low-degree approximation of the time-evolved observable as soon as such an approximation exists. This provides the first provable guarantee about Majorana Propagation for Hamiltonian evolution. As an application of this result, we prove that Majorana Propagation can efficiently simulate the time dynamics of any sparse quartic Hamiltonian up to time $\tmax(u)$ depending on the interaction strength $u$. For a time horizon $t \leq \tmax(u)$, the runtime of the algorithm is $N^{O(\log(t/\eps))}$ where $N$ is the number of Majorana modes and $\eps$ is the error measured in the normalized Frobenius norm. Importantly, in the limit of small $u$, $\tmax(u)$ goes to $+\infty$, formalizing the intuition that the algorithm is accurate at all times when the Hamiltonian is quadratic.
\end{abstract}

\section{Introduction}

Simulating the time evolution of a quantum system is one of the most promising applications of quantum computers~\cite{feynman1982simulating,lloyd1996universal} and today's intermediate-scale quantum computers are already achieving remarkable performance~\cite{kim2023evidence,alam2025fermionic,alam2025programmable} with limited error correction mechanisms. To benchmark such quantum algorithms, it is essential to develop \emph{classical} algorithms for quantum simulation and understand their limits. 
Simulating the time dynamics of a local observable under Hamiltonian evolution is a BQP-complete problem so we do not expect efficient classical algorithms for this problem except in some restricted settings such as short-time dynamics~\cite{wild2023classical}.

We focus in this work on a system of interacting fermions described by a quartic Hamiltonian $H$ in $N$ Majorana modes $\gamma_1,\ldots,\gamma_N$ (assuming $N$ is even). Given an observable $A$, our objective is to approximate the Heisenberg evolution
\be
\label{eq:intro1timeevolution}
A(t) := e^{\i H t} A e^{-\i H t}.
\ee
Note that any observable $A$ admits a unique expansion as $A = \sum_{X \subset [N]} a_X \gamma_X$ where $a_X \in \CC$ and $\gamma_X$ is a Hermitian Majorana string of length $|X|$ (see Section~\ref{sec:preliminaries} for a precise definition). The \emph{degree} of $A$ is defined as the largest $|X|$ such that $a_X \neq 0$.

\subsection{Contributions}

\paragraph{Majorana Propagation for real-time quantum dynamics} We analyze a simple algorithm to compute an approximation of the time-evolved observable $A(t)$. The idea of the algorithm is to perform a discretization of the time evolution via Trotterization and apply a \emph{propagation} algorithm in the basis of Majorana operators. Recently proposed propagation algorithms for circuit simulation proceed by tracking the evolution of an observable in a well-chosen basis and performing truncations to ensure the observable remains low-complexity and thus to keep the algorithm efficient. Such algorithms have been studied for simulating circuits with low Wigner negativity~\cite{pashayan2015estimating}, near-Clifford circuits~\cite{bennink2017unbiased,rall2019simulation,beguvsic2025simulating} and for noisy/random circuits~\cite{aharonov2023polynomial,fontana2025classical,angrisani2025classically}. In very recent work, the same methodology was applied in the fermionic setting for random circuits~\cite{miller2025simulation} and Hamiltonian systems~\cite{danna2025majorana}.

Specifically, the Majorana Propagation (MP) algorithm we analyze in this paper produces an approximation $A_{\MP}(t)$ of $A(t)$
of the form\footnote{assuming $t/\dt$ is an integer for simplicity}
\be
\label{eq:MP-output}
A_{\MP}^{\ell,\dt}(t) := (\Trunc_{\ell} \circ \trotterstep{\dt})^{t/\dt} (A) \,,
\ee
where $\trotterstep{\dt}$ is a Trotter approximation of the Heisenberg evolution $A\mapsto e^{\i H \dt} A e^{-\i H \dt}$, and $\Trunc_{\ell}:\cA \to \cA$ is some truncation map on the space of observables $\cA$ that depends on a threshold value $\ell$. Several truncation strategies have been discussed in previous works such as degree-based truncation (see e.g.,~\cite{rudolph2025pauli} for a discussion), or more sophisticated rules based on the number of unpaired Majoranas~\cite{danna2025majorana}. The truncation we will consider in this work is based solely on the degree, where for $A = \sum_{X \subset [N]} a_{X} \gamma_X$, $\Trunc_{\ell}(A)$ simply discards all Majorana strings of degree larger than $\ell$ in the Majorana expansion of $A$ (see Section \ref{sec:preliminaries} for more details).

For any time $t \geq 0$, the MP algorithm with parameter $\ell$ produces a degree-$\ell$ approximation of $A(t)$. Observe that the time-evolved observable $A(t) = e^{\i H t} A e^{-\i H t}$ can have maximal degree $N$ as soon as $t > 0$. In order to measure the accuracy of the MP algorithm, it is then natural to introduce the following quantity, which quantifies the error of the \emph{best} degree-$\ell$ approximant of $A(t)$ for a given norm $\|\cdot\|$:\footnote{The maximization over $s \in [0,t]$ in the definition of $\best(H,A,t,\ell)$ is to guarantee that $\best$ is nondecreasing in $t$.}
\be
\label{eq:bestell}
\best(H,A,t,\ell) = \max_{s \in [0,t]} \min_{P \in \cA\,:\,\deg(P)\leq \ell} \|P - A(s)\|.
\ee
It is clear, by definition of the MP algorithm, that the error incurred by the MP algorithm over $[0,t]$ is at least as large as $\best(H,A,t,\ell)$, since the latter gives the error of the \emph{best} degree-$\ell$ approximant. One may ask whether $\best(H,A,t,\ell)$, and the best approximant $P$ in \eqref{eq:bestell} can be computed efficiently solely using the input $H,A$ and $t$. Our main result shows that the error incurred by the output of the MP algorithm cannot be much larger than the error of the best approximant when $\|\cdot\|$ is the (normalized) Frobenius norm, denoted $\|\cdot\|_{\F}$.
\begin{theorem}[Analysis of Majorana Propagation for time dynamics -- Informal version of Thm.~\ref{thm:majorana-prop}]
\label{thm:informalMPintro}
    Let $H$ be a $\Delta$-sparse quartic fermionic Hamiltonian in $N$ Majorana modes (see~\eqref{eq:vertexbounded} for the definition of $\Delta$-sparse) and $A$ an arbitrary observable with $\norm{A}_\F=1$. Let $t \geq 0$ be a time horizon.
    Then the Majorana Propagation algorithm with time step $\dt$ and truncation degree $\ell \geq \deg(A)$ has time and memory complexity at most $(t/\dt) \cdot N^{O(\ell)}$ and produces a degree-$\ell$ observable $A^{\dt, \ell}_{\MP}(t)$ that satisfies
\be
\label{eq:boundthm2}
    \|A^{\dt, \ell}_{\MP}(t) - A(t)\|_\F  = 
    O\Big(\ell^2 t \dt + (t/\dt)\best_F(H,A,t,\ell) \Big),
\ee
where $A(t)$ is the time-evolved observable \eqref{eq:intro1timeevolution}, and $\best_{\F}$ is the quantity \eqref{eq:bestell} when $\|\cdot \| = \|\cdot \|_{\F}$.
\end{theorem}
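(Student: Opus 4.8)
The plan is to run the exact Heisenberg trajectory and the Majorana Propagation trajectory side by side and bound their discrepancy by a single additive recursion over the $n := t/\dt$ steps. Write $\Phi_s$ for the exact Heisenberg flow $A \mapsto e^{\i H s} A e^{-\i H s}$, and set $B_j := \Phi_{j\dt}(A) = A(j\dt)$ and $\tilde B_j := (\Trunc_\ell \circ \trotterstep{\dt})^{j}(A)$, so that $B_n = A(t)$ and $\tilde B_n = A^{\dt,\ell}_{\MP}(t)$. Let $E_j := \|\tilde B_j - B_j\|_\F$; since $\ell \geq \deg(A)$ we have $\Trunc_\ell(A) = A$ and hence $E_0 = 0$. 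First I would insert two intermediate observables and use the triangle inequality to write $E_{j+1} \leq (\mathrm I) + (\mathrm{II}) + (\mathrm{III})$, where $(\mathrm I) = \|\Trunc_\ell\trotterstep{\dt}(\tilde B_j) - \Trunc_\ell\Phi_{\dt}(\tilde B_j)\|_\F$, $(\mathrm{II}) = \|\Trunc_\ell\Phi_{\dt}(\tilde B_j) - \Trunc_\ell\Phi_{\dt}(B_j)\|_\F$, and $(\mathrm{III}) = \|\Trunc_\ell\Phi_{\dt}(B_j) - \Phi_{\dt}(B_j)\|_\F$.

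The heart of the argument is the estimate of these three terms. For $(\mathrm{III})$, the crucial observation is that $\|\cdot\|_\F$ is the norm of the Frobenius inner product, for which distinct Majorana strings are orthogonal; hence $\Trunc_\ell$ is exactly the orthogonal projection onto the span of Majorana strings of degree $\leq\ell$, so $(\mathrm{III}) = \min_{\deg P \leq \ell} \|P - A((j{+}1)\dt)\|_\F \leq \best_F(H,A,t,\ell)$, the last step using that $(j{+}1)\dt \in [0,t]$ together with the maximization over $s \in [0,t]$ in the definition of $\best_F$. This is the mechanism by which the MP error cannot beat the best degree-$\leq\ell$ approximant but also does not lose much more than it. For $(\mathrm{II})$, since $\Trunc_\ell$ is a contraction and $\Phi_{\dt}$ is a Frobenius isometry (conjugation by the unitary $e^{\i H\dt}$ preserves singular values), $(\mathrm{II}) \leq \|\Phi_{\dt}(\tilde B_j) - \Phi_{\dt}(B_j)\|_\F = \|\tilde B_j - B_j\|_\F = E_j$ — this is precisely what keeps the recursion additive instead of exponentially amplifying. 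For $(\mathrm I)$, $\Trunc_\ell$ is again a contraction, so $(\mathrm I) \leq \|\trotterstep{\dt}(\tilde B_j) - \Phi_{\dt}(\tilde B_j)\|_\F$, the one-step Trotter error applied to $\tilde B_j$, and $\tilde B_j$ has degree $\leq \ell$ and $\|\tilde B_j\|_\F \leq 1$ (both $\Trunc_\ell$ and $\trotterstep{\dt}$ are norm-nonincreasing). Granting $(\mathrm I) = O(\ell^2\dt^2)$, one obtains the recursion $E_{j+1} \leq E_j + O(\ell^2\dt^2) + \best_F(H,A,t,\ell)$, and $E_0 = 0$ yields $E_n \leq n\cdot O(\ell^2\dt^2) + n\,\best_F(H,A,t,\ell) = O(\ell^2 t\dt) + (t/\dt)\best_F(H,A,t,\ell)$, which is \eqref{eq:boundthm2}.

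The main obstacle is the local Trotter bound $(\mathrm I) = O(\ell^2\dt^2)$: it must be independent of $N$ and only polynomial in $\ell$, even though a degree-$\leq\ell$ observable $C = \sum_X c_X \gamma_X$ can be supported on all $N$ modes, so crude estimates such as $\|[H,C]\|_\F \leq 2\|H\|_{\op}\|C\|_\F$ are useless since $\|H\|_{\op}$ is extensive. I would argue string by string: for a single Majorana string $\gamma_X$ with $|X|\leq \ell$, both $\trotterstep{\dt}(\gamma_X)$ and $\Phi_{\dt}(\gamma_X)$ only involve the $O(\ell\Delta)$ Hamiltonian terms meeting $X$ and agree to first order in $\dt$, so their difference is $O(\dt^2)$ with prefactor $O(\ell^2)$ — one factor of $\ell$ from each of the two nested commutators with the local part of $H$ (whose operator norm is $O(\ell\Delta)$), the higher Trotter orders being absorbed for $\dt$ below a threshold. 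Passing to the superposition $C$ requires an orthogonality/Schur-test step: $\langle \trotterstep{\dt}(\gamma_X)-\Phi_{\dt}(\gamma_X),\, \trotterstep{\dt}(\gamma_{X'})-\Phi_{\dt}(\gamma_{X'})\rangle$ vanishes unless $\gamma_{X'}$ is reachable from $\gamma_X$ by a bounded local modification, so the interference graph has degree $\mathrm{poly}(\ell,\Delta)$ and $\|\trotterstep{\dt}(C)-\Phi_{\dt}(C)\|_\F = O(\ell^2\dt^2)\|C\|_\F$. Handling all Trotter orders cleanly here, rather than just the leading $\dt^2$ term, is the part I expect to be most delicate.

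Finally, for the complexity claim: a degree-$\leq\ell$ observable is represented by its $\sum_{k\leq\ell}\binom Nk = N^{O(\ell)}$ Majorana coefficients; realizing $\trotterstep{\dt}$ as $O_\Delta(N)$ conjugations by exponentials of individual quartic Majorana monomials grows the degree by at most a constant factor depending on $\Delta$ and the number of stored terms by a factor $2^{O_\Delta(\ell)} = N^{O(\ell)}$ before $\Trunc_\ell$ brings the degree back to $\ell$; so each of the $t/\dt$ steps runs in time and memory $N^{O(\ell)}$, giving the stated $(t/\dt)\cdot N^{O(\ell)}$ bound.
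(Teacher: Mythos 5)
The skeleton of your argument is the same as the paper's: the same one-step triangle-inequality decomposition into a local Trotter error term, a recursive term handled by the $\Trunc_\ell$-contraction plus Frobenius-isometry of $\Phi_\dt$, and a truncation term identified with the best degree-$\ell$ approximation error via the orthogonal-projection property of $\Trunc_\ell$, then iterate $t/\dt$ times from $E_0 = 0$. Your bounds on $(\mathrm{II})$ and $(\mathrm{III})$ coincide with the paper's and are correct.

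Where you diverge is in the key technical lemma, the local Trotter bound $(\mathrm I) = O(\ell^2\dt^2)$, and this is where your proposal has a genuine gap. The paper proves this in two clean steps: first a system-size-independent commutator bound $\|[H,A]\|_\F \leq 2\Delta\sqrt{d(d+2)}\|A\|_\F$ for $\deg A\leq d$ (Theorem~\ref{thm:commutatorfrobeniusbound}), obtained by a Cauchy--Schwarz argument in the Majorana basis using $\Delta$-sparsity (the count $|\{X\in\cX : [\gamma_X,\gamma_Z]\neq0\}|\leq\Delta|Z|$); and second an exact Taylor-with-integral-remainder identity for the product $e^{t\cL_G}\cdots e^{t\cL_1}$ (the lemma inside the proof of Theorem~\ref{thm:trotter-error}). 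Pushing the unitaries through the Frobenius norm reduces the Trotter error to a finite sum of $\|\cL_i\cL_j(A)\|_\F$ terms, each bounded by two applications of the commutator bound since $\deg\cL(A)\leq\deg A+2$. Crucially this is \emph{exact}, valid for all $\dt$, and already handles the superposition of Majorana strings — the Cauchy--Schwarz step inside Theorem~\ref{thm:commutatorfrobeniusbound} is precisely the ``interference control'' your sketch needs.

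Your proposed route — estimate $\|\trotterstep{\dt}(\gamma_X) - \Phi_\dt(\gamma_X)\|$ string by string, then stitch them together via a Schur test on an interference graph — is plausible but not worked out, and it has two concrete difficulties you should notice. First, the string-by-string estimate naturally gives an operator-norm bound of order $\ell^2\Delta^2\dt^2$, and passing to $\|\trotterstep{\dt}(C)-\Phi_\dt(C)\|_\F \lesssim \ell^2\dt^2\|C\|_\F$ for a superposition $C=\sum_X c_X\gamma_X$ is exactly an $L^1$-to-$L^2$ issue: the naive triangle inequality gives $\sum_X|c_X|\cdot O(\ell^2\dt^2)$, which is not comparable to $\|C\|_\F$, and the Schur/orthogonality step you gesture at is the whole content of the lemma, left unproved. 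Second, your ``higher Trotter orders absorbed for $\dt$ below a threshold'' signals a truncated-Taylor argument; the theorem statement has no such threshold, and the paper's integral-remainder formula avoids it entirely. A Duhamel-type exact remainder can rescue your string-wise approach, but then you are essentially redoing the paper's computation in a more roundabout way. In short: the reduction scaffolding and the roles of $\Trunc_\ell$-contraction, unitary isometry, and $\best_\F$ are exactly right; the piece you label ``most delicate'' is indeed the nontrivial content of the theorem, and it needs the commutator Frobenius bound plus an exact remainder, not a per-string estimate with an implicit step-size restriction.
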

Some comments are in order:
\begin{itemize}
\item The only assumption on the Hamiltonian, beyond the fact that it is quartic, is that it is $\Delta$-sparse, which means that any Majorana mode $\gamma_i$ for $i \in [N]$ is present in at most $\Delta$ terms of the Hamiltonian.
Note that this goes beyond any assumption on geometric locality, hence Lieb-Robinson-type bounds are not readily available in this case.
\item The right-hand side of \eqref{eq:boundthm2} is composed of two terms: the first term is due to the discretization of the time dynamics in intervals of equal length $\dt$, whereas the second term is due to the truncation map which restricts the time-evolved operator to be of degree at most $\ell$. A simple calculation shows that the right-hand side of \eqref{eq:boundthm2} is minimized at
\be
\label{eq:optimaldeltat}
\dt^\star = \Theta\left(\frac{\sqrt{\best_\F(H,A,t,\ell)}}{\ell \Delta } \right) \,,
\ee
and the error \eqref{eq:boundthm2} reduces to:
\be
\label{eq:boundthm3}
\|A^{\dt^\star, \ell}_{\MP}(t)-A(t)\|_{\F} = O\left(\ell t \sqrt{\best_\F(H,A,t,\ell)}\right).
\ee
From the expression above, we see that with an adequate step size, the error of MP is essentially governed by the error of the best degree-$\ell$ approximation of $A(t)$.
\item The MP algorithm considered in the theorem above uses a specific Trotter decomposition strategy whereby the Hamiltonian terms are partitioned into $G \propto \Delta$ groups, i.e., $H = H_1 + \dots + H_G$ such that in each $H_i$, the Majorana strings are pairwise disjoint  (such a partitioning is guaranteed to exist by the $\Delta$-sparsity assumption), and the small-time evolution $e^{\i H \dt} A e^{-\i H \dt}$ is approximated by
\[
e^{\i H \dt} A e^{-\i H \dt} \approx e^{\i H_G \dt} \dots e^{\i H_1 \dt} A e^{-\i H_1 \dt} \dots e^{-\i H_G \dt} =: \trotterstep{\dt}(A).
\]
This particular partitioning is used in the analysis of the algorithm to control the growth of $\deg \trotterstep{\dt}(A)$.
\item The Frobenius norm between observables can be interpreted in terms of the mean squared difference between the corresponding expectation values for a randomly chosen state; see Section~\ref{sec:norms} for details.
\end{itemize}

\paragraph{Application to weakly interacting Hamiltonians} As an application of Theorem \ref{thm:informalMPintro}, we consider the case of weakly interacting Hamiltonians of the form
\be
\label{eq:HH0uV2}
H = H_0 + uV
\ee
where $H_0$ is quadratic, $V$ is quartic, and $u \leq 1$. It is known that if $u=0$, then $\deg(A(t)) = \deg(A)$ for all $t \geq 0$, or equivalently that
\be
\label{eq:bestquadratic}
\best(H,A,t,\ell) = 0 \quad \forall t \geq 0 \text{ and } \ell \geq \deg(A).
\ee
Using perturbation-theory techniques, we show that for any observable $A$, there is a $\tmax(u) > 0$ such that $A(t) = e^{\i (H_0 + uV) t} A e^{-\i (H_0 + u V) t}$ can be very well approximated by a low-degree Majorana polynomial up to time $\tmax(u)$. More precisely, we prove the following.
\begin{theorem}[Low-degree approximability of $A(t)$ for weakly interacting interacting Hamiltonians]
\label{thm:intro-weak-interaction}
Let $H = H(u) = H_0 + uV$ be a $\Delta$-sparse fermionic Hamiltonian where $H_0$ and $V$ are quadratic and quartic interactions, respectively. Let $A$ be an arbitrary observable. Then, for all $t < \tmax(u) := \log(e/u) / (8 e^2 \Delta (\deg A + 2))$,
\be
\label{eq:weakinteractiontruncationerrorfrob}
\best_\F(H(u),A,t,\ell) \leq \dfrac{\left(t/\tmax(u) \right)^{(\ell-\deg(A))/2}}{1-t/\tmax(u)} \|A\|_{\F}.
\ee
\end{theorem}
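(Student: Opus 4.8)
The plan is to pass to the interaction picture with respect to $H_0$, Dyson-expand $A(t)$ in powers of the coupling $u$, and exploit that the order-$k$ term has degree at most $\deg(A)+2k$ while having Frobenius norm $\le(t/\tmax(u))^k\|A\|_\F$ as soon as $t<\tmax(u)$; truncating the series below degree $\ell$ then leaves only a geometric tail. Concretely, write $\Phi_s(\cdot):=e^{\i H_0 s}(\cdot)e^{-\i H_0 s}$. Since $H_0$ is quadratic, \eqref{eq:bestquadratic} (with $H=H_0$) encodes exactly that $\Phi_s$ maps degree-$d$ observables to degree-$d$ observables; being a unitary conjugation, it also preserves $\|\cdot\|_\F$. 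Setting $S(t):=e^{-\i H_0 t}e^{\i H t}$ and $\hat A(t):=S(t)\,A\,S(t)^\dagger$, one checks that $A(t)=\Phi_t(\hat A(t))$, that $\dot S(t)=\i u\,\Phi_{-t}(V)\,S(t)$, and hence that $\tfrac{d}{dt}\hat A(t)=\i u\,[\Phi_{-t}(V),\hat A(t)]$ with $\hat A(0)=A$. Iterating this linear ODE yields $\hat A(t)=\sum_{k\ge0}\hat A_k(t)$ with
\[
\hat A_k(t)=(\i u)^k\int_{t\ge s_1\ge\cdots\ge s_k\ge0}\mathrm{ad}_{\Phi_{-s_1}(V)}\!\circ\cdots\circ\mathrm{ad}_{\Phi_{-s_k}(V)}(A)\;ds_1\cdots ds_k ,
\]
where $\mathrm{ad}_W(\cdot):=[W,\cdot]$, so that $A(t)=\sum_{k\ge0}\Phi_t(\hat A_k(t))$.

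Next I would extract two facts about $\hat A_k(t)$. First, each $\Phi_{-s}(V)$ is again a degree-$4$ observable, and the commutator of a degree-$4$ Majorana polynomial with a degree-$d$ one has degree at most $d+2$ (two Majorana strings commute unless their supports overlap, so a surviving term $\pm\gamma_{Y\triangle X}$ with $|Y|=4$ has $|Y\triangle X|\le|X|+4-2$); hence $\deg\hat A_k(t)\le\deg(A)+2k$, and $\Phi_t$ does not change this. Second, using the identity $\mathrm{ad}_{\Phi_{-s}(V)}=\Phi_{-s}\circ\mathrm{ad}_V\circ\Phi_s$, the integrand above equals $\Phi_{-s_1}\circ\mathrm{ad}_V\circ\Phi_{s_1-s_2}\circ\mathrm{ad}_V\circ\cdots\circ\mathrm{ad}_V\circ\Phi_{s_k}$ applied to $A$, whose $\|\cdot\|_\F$ — since every $\Phi$ is a degree-preserving $\|\cdot\|_\F$-isometry — equals $\|\mathrm{ad}_V\circ\Phi_{s_1-s_2}\circ\cdots\circ\mathrm{ad}_V\circ\Phi_{s_k}(A)\|_\F$. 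The crucial input here is a ``sparsity$\,\to\,$degree'' estimate: for a $\Delta$-sparse quartic $V$ (with coefficients bounded by $1$) and any observable $B$,
\[
\|\mathrm{ad}_V(B)\|_\F\;\le\;c_0\,\Delta\,(\deg(B)+2)\,\|B\|_\F
\]
with $c_0$ an absolute constant (one can take $c_0=4$). Applying it to the $k$ successive copies of $\mathrm{ad}_V$, which act on observables of degree $\le\deg(A),\deg(A)+2,\ldots,\deg(A)+2(k-1)$, bounds the integrand by $\prod_{j=1}^k c_0\Delta(\deg(A)+2j)\,\|A\|_\F\le\big(c_0\Delta(\deg(A)+2k)\big)^k\|A\|_\F$. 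Multiplying by $|\i u|^k=u^k$ and the simplex volume $t^k/k!$, and using $k!\ge(k/e)^k$ together with $\sup_{0<u\le1}u\log(e/u)=1$, the target bound $\|\hat A_k(t)\|_\F\le(t/\tmax(u))^k\|A\|_\F$ reduces to the inequality $c_0\,u\log(e/u)\le 8e$, which holds since $u\le1$; this is what fixes the precise form of $\tmax(u)$.

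To conclude, set $K:=\floor{(\ell-\deg(A))/2}\ge0$ and, for $s\in[0,t]$, let $P(s):=\Phi_s\big(\sum_{k=0}^K\hat A_k(s)\big)$, an observable of degree at most $\deg(A)+2K\le\ell$. Because $\Phi_s$ is a $\|\cdot\|_\F$-isometry and $0<s/\tmax(u)<1$,
\[
\|P(s)-A(s)\|_\F=\Big\|\sum_{k>K}\hat A_k(s)\Big\|_\F\le\sum_{k>K}\big(s/\tmax(u)\big)^k\,\|A\|_\F=\frac{\big(s/\tmax(u)\big)^{K+1}}{1-s/\tmax(u)}\,\|A\|_\F ,
\]
and $K+1\ge(\ell-\deg(A))/2$ bounds this by $(s/\tmax(u))^{(\ell-\deg(A))/2}/(1-s/\tmax(u))\cdot\|A\|_\F$. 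The latter is nondecreasing in $s$ on $[0,\tmax(u))$; since $P(s)$ is an admissible competitor in \eqref{eq:bestell}, evaluating at $s=t$ bounds $\best_\F(H(u),A,t,\ell)$ and gives exactly \eqref{eq:weakinteractiontruncationerrorfrob}.

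The step I expect to be the main obstacle is the sparsity$\,\to\,$degree estimate $\|\mathrm{ad}_V(B)\|_\F\le c_0\Delta(\deg(B)+2)\|B\|_\F$. The naive triangle-inequality bound $\|\mathrm{ad}_V(B)\|_\F\le 2\|V\|_1\|B\|_\F$ scales with the number of terms of $V$, hence with $N$, and is useless here; one must instead use the $\|\cdot\|_\F$-orthonormality of Majorana strings together with a Cauchy--Schwarz/Schur-type count showing that a fixed target string $\gamma_Z$ collects contributions from only $O(\Delta\,\deg(B))$ terms of $V$, and that these assemble with loss only \emph{linear} in $\deg(B)$. Getting this power down to $1$ — rather than the $\tfrac32$ or $2$ that a careless grouping produces — is precisely what makes the scaling $\tmax(u)\propto 1/\big(\Delta(\deg(A)+2)\big)$, and not something worse, come out.
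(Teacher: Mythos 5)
Your proposal is correct and reaches the same $\tmax(u)$, and the core strategy is the same as the paper's: expand $A(t)$ as a series in powers of $u$ whose order-$k$ term has degree at most $\deg(A)+2k$, use a commutator Frobenius bound linear in the degree to control each term, and read off the geometric tail. The sparsity-to-degree estimate you flag as ``the main obstacle'' is exactly the paper's Theorem~\ref{thm:commutatorfrobeniusbound}, which gives $\|[H,A]\|_\F \le 2\Delta\sqrt{\deg A(\deg A+2)}\|A\|_\F$ — so $c_0=2$ suffices — via precisely the Cauchy--Schwarz/counting argument you describe; the degree bookkeeping $\deg(\hat A_k(t))\le\deg(A)+2k$ is the paper's consequence~\eqref{eq:degreecommutatormonomial}.

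There is, however, a genuine and worthwhile difference in the execution. The paper works with the rearranged Taylor series of $e^{\i Ht}(\cdot)e^{-\i Ht}$, producing the interleaved words $\cF^{m_n}\cG\cF^{m_{n-1}}\cdots\cG\cF^{m_0}A$ with $\cF=\i\,\mathrm{ad}_{H_0}$ and $\cG=\i\,\mathrm{ad}_V$, and applies Theorem~\ref{thm:commutatorfrobeniusbound} to \emph{every} factor — both $\cF$ and $\cG$ — then resumming the $m_i$'s, which produces an extra multiplicative factor $e^{4t\Delta(\deg A + 2n)}$ that must later be absorbed into the definition of $\tmax$. You instead pass to the interaction picture, replacing those blocks of $\cF$-powers by the exact conjugations $\Phi_s=e^{s\cF}$: since $H_0$ is quadratic, each $\Phi_s$ is simultaneously a Frobenius-norm isometry and degree-preserving, so it drops out of every norm estimate and no $\cF$-type commutator ever needs bounding. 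Only the $k$ copies of $\mathrm{ad}_V$ cost anything, and the simplex volume $t^k/k!$ directly gives the $k!$ you need. This is the continuum resummation of the paper's discrete series (one recovers the paper's expansion by Taylor-expanding each $\Phi_{s_j-s_{j+1}}$ in the time-ordered integrand), and it is strictly tidier: it avoids the exponential fudge factor entirely. Both routes land on the same $\tmax(u)$ up to the choice of absolute constants, so the difference is aesthetic rather than quantitative, but your version makes it clearer that the constraint on $\tmax(u)$ comes solely from the $V$-commutators and not at all from the free dynamics.
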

An important feature of the theorem above is that $\tmax(u) \to +\infty$ as $u\to0$, and thus the Theorem recovers the well-known fact \eqref{eq:bestquadratic} for quadratic Hamiltonians. Also observe that if, e.g., $t \leq \tmax(u)/2$ then $\best_F(H(u),A,t,\ell)$ decays exponentially in $\ell-\deg(A)$; in other words, one can get an $\eps$-approximation (in the Frobenius norm) of $A(t)$ with a polynomial of degree $\deg(A) + O(\log(1/\eps))$.

The proof of Theorem \ref{thm:intro-weak-interaction} is constructive, i.e., for any $t$ and $\ell$, we exhibit a degree-$\ell$ Majorana polynomial $P$ whose error $\|P-A(t)\|_{\F}$ is upper bounded by the right-hand side of \eqref{eq:weakinteractiontruncationerrorfrob}. However, even though $P$ is explicit, it is defined in terms of infinite series and it is not clear whether one can compute it efficiently. Nonetheless, by using Theorem \ref{thm:informalMPintro} we can show that since there \emph{exists} a good low-degree polynomial approximant for $A(t)$, then Majorana Propagation will be able to find such an approximant, efficiently.
\begin{corollary}[Efficient simulation of weakly interacting Hamiltonians with Majorana Propagation]
\label{corollary:MPweaklyinteracting}
Let $H = H_0 + u V$ be a $\Delta$-sparse fermionic Hamiltonian on $N$ Majorana modes where $H_0$ and $V$ are quadratic and quartic, respectively. Let $A \in \cA$ be an arbitrary observable. Then, for any $0 < \eps < 1$ and any $t \leq \log(e/u) / (16 e^2 \Delta (\deg A + 2))$, the Majorana Propagation algorithm computes an approximation $\tilde{A} \in \cA$ in complexity $N^{O(\log((1+t)/\eps))}$ such that $\|A(t) - \tilde{A}\|_{\F} \leq \eps \|A\|_{\F}$. The approximant $\tilde{A}$ is represented as a degree-$\ell$ Majorana polynomial where $\ell = \deg A + O(\log((1+t)/\eps))$.
\end{corollary}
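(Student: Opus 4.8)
The proof is a direct combination of the two main results: Theorem~\ref{thm:intro-weak-interaction} certifies that a good low-degree approximant of $A(t)$ \emph{exists}, and Theorem~\ref{thm:informalMPintro} certifies that Majorana Propagation \emph{finds} one efficiently. Since both the claimed bound $\|A(t)-\tilde A\|_\F\le\eps\|A\|_\F$ and the complexity are invariant under rescaling $A$, I would assume $\|A\|_\F=1$ throughout. Fix $\eps\in(0,1)$ and note that the hypothesis $t\le\log(e/u)/(16e^2\Delta(\deg A+2))$ is exactly $t\le\tmax(u)/2$ for $\tmax(u)$ as in Theorem~\ref{thm:intro-weak-interaction}, so in particular $t/\tmax(u)\le 1/2$. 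Since $x\mapsto x^{(\ell-\deg A)/2}/(1-x)$ is nondecreasing on $[0,1)$ for every integer $\ell\ge\deg A$, Theorem~\ref{thm:intro-weak-interaction} then gives
\[
\best_\F(H(u),A,t,\ell)\ \le\ B(\ell)\ :=\ 2\cdot 2^{-(\ell-\deg A)/2}\,,
\]
which decays exponentially in $\ell-\deg A$.

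With this in hand I would instantiate Theorem~\ref{thm:informalMPintro} with truncation degree $\ell$ (to be chosen below) and step size $\dt^\star$ as in \eqref{eq:optimaldeltat}, except with the a priori bound $B(\ell)$ substituted for $\best_\F$, i.e. $\dt^\star=\Theta(\sqrt{B(\ell)}/(\ell\Delta))$ (and, if needed, rounded down slightly so that $t/\dt^\star$ is an integer, which only helps). This substitution is necessary because the algorithm cannot evaluate $\best_\F$, whereas $B(\ell)$ depends only on the inputs $u,\Delta,\deg A$ and on $\ell$, so $\dt^\star$ is explicitly computable. Re-examining the two terms of \eqref{eq:boundthm2}: the discretization term is $O(\ell^2 t\,\dt^\star)=O(\ell t\sqrt{B(\ell)})$ by the choice of $\dt^\star$, while the truncation term satisfies $(t/\dt^\star)\,\best_\F(H(u),A,t,\ell)\le(t/\dt^\star)\,B(\ell)=O(\ell t\sqrt{B(\ell)})$ since $\best_\F\le B(\ell)$. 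Hence, exactly as in \eqref{eq:boundthm3},
\[
\bigl\|A^{\dt^\star,\ell}_{\MP}(t)-A(t)\bigr\|_\F\ =\ O\!\bigl(\ell\,t\,\sqrt{B(\ell)}\bigr)\ =\ O\!\bigl(\ell\,t\,2^{-(\ell-\deg A)/4}\bigr)\,,
\]
with implied constant depending on $\Delta$.

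It remains to choose $\ell$ and read off the runtime. I would take $\ell$ to be the smallest integer $\ge\deg A$ for which $C\,\ell\,t\,2^{-(\ell-\deg A)/4}\le\eps$, where $C$ is the constant implied by the $O(\cdot)$ above; setting $\tilde A:=A^{\dt^\star,\ell}_{\MP}(t)$ then yields $\|A(t)-\tilde A\|_\F\le\eps$, and $\tilde A$ is a degree-$\ell$ Majorana polynomial by construction of the MP output. Unwinding the defining inequality for $\ell$ — after taking logarithms a term $\log\ell$ appears on the right, of lower order than the linear term $\ell-\deg A$ on the left, so a short two-case split according to whether $\ell\le 2\deg A$ gives $\ell=\deg A+O(\log((1+t)/\eps))$, with the implied constant depending on $\Delta$ and $\deg A$. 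For the complexity, Theorem~\ref{thm:informalMPintro} gives a bound of $(t/\dt^\star)\,N^{O(\ell)}$; here $t/\dt^\star=O(t\ell\Delta/\sqrt{B(\ell)})=O(t\ell\Delta\,2^{(\ell-\deg A)/4})$, which by the choice of $\ell$ is $\mathrm{poly}((1+t)/\eps)$, while $N^{O(\ell)}=N^{O(\deg A)}\,N^{O(\log((1+t)/\eps))}=N^{O(\log((1+t)/\eps))}$ once $\deg A$ is treated as a constant. Multiplying the two factors, the total complexity is $N^{O(\log((1+t)/\eps))}$, as claimed.

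There is no deep new ingredient here — the two theorems do the work — and the only subtleties are bookkeeping ones. The first is that the step size fed to MP must be built from quantities the algorithm can actually compute, which forces the substitution of the a priori bound $B(\ell)$ for $\best_\F$; one must then check, as above, that this worsens the truncation term only by a constant factor. The second, which I expect to be the most delicate point to write out cleanly, is verifying that the factor $1/\sqrt{B(\ell)}$ entering the runtime does not spoil the quasi-polynomial scaling: this works out precisely because the degree overhead $\ell-\deg A$ is only logarithmic in $(1+t)/\eps$, so $2^{(\ell-\deg A)/4}$ stays polynomial in $(1+t)/\eps$ — together with resolving the mild self-reference in the choice of $\ell$.
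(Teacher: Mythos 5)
Your proof takes essentially the same route as the paper: normalize $\|A\|_\F=1$, observe that the time constraint is exactly $t\le\tmax(u)/2$ so that Theorem~\ref{thm:intro-weak-interaction} gives the a priori bound $\best_\F\le B(\ell)=2\cdot 2^{-(\ell-\deg A)/2}$, feed this computable quantity into the MP error bound with the matching optimal step size $\dt^\star\approx\sqrt{B(\ell)}/(\Delta\ell)$, solve for $\ell$, and read off the complexity $N^{O(\ell)}\cdot\lceil t/\dt^\star\rceil$.

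One bookkeeping point is handled a bit too loosely. Your error estimate $O(\ell t\sqrt{B(\ell)})$ comes from balancing the two terms of \eqref{eq:boundthm2}, but the precise bound \eqref{eq:bounderrppell} carries a ceiling $\lceil t/\dt\rceil$, so after substituting $\dt^\star$ one actually gets something of the form $O\bigl(\ell t\Delta\,2^{-(\ell-\deg A)/4}\bigr)+O\bigl(2^{-(\ell-\deg A)/2}\bigr)$, with an additive truncation term that survives even as $t\to0$ (morally: at least one full Trotter step is taken, and $\dt^\star$ may exceed $t$ for small $t$). Your rule for choosing $\ell$ enforces only the first term to be $\le\eps$; if $t$ is small and $\eps$ close to $1$ this can be satisfied with $\ell=\deg A$, where the additive $2\cdot 2^{-(\ell-\deg A)/2}=2>\eps$, so your claimed bound $\|A(t)-\tilde A\|_\F\le\eps$ does not follow. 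The paper avoids this by bounding each of the two terms by $\eps/2$ separately, which adds the constraint $2^{-(\ell-\deg A)/2}\le\eps/4$. The fix is immediate and costs only another $O(\log(1/\eps))$ additive contribution to $\ell$, so the advertised asymptotics $\ell=\deg A+O(\log((1+t)/\eps))$ and $N^{O(\log((1+t)/\eps))}$ complexity are unaffected; you just need to impose both constraints when defining $\ell$.
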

We note that for Gibbs sampling, efficient quantum algorithms~\cite{vsmid2025polynomial} and more recently classical algorithms~\cite{chen2025convergence} were designed for weakly interacting fermions.
For time dynamics, the closest result to Corollary~\ref{corollary:MPweaklyinteracting} we are aware of is a general algorithm with runtime doubly exponential in $t$ for approximating the expectation value of the time-evolved observable for product states~\cite{wild2023classical}. Note that~\cite{wild2023classical} considers spin systems instead of fermionic systems.

\subsection{Overview of techniques}
\label{sec:ingredients}

The main technical ingredients in this paper are new Frobenius norm inequalities which depend on \emph{degree} rather than system size. We highlight here two such inequalities.

\begin{itemize}
\item \emph{Norm of commutator:} Observe that if $H$ is a Hamiltonian on $N$ modes, then\footnote{Indeed, note that the linear map $[H,\cdot] : \cA \to \cA$ can be written as $H \otimes 1 - 1\otimes H$ and so has eigenvalues $\lambda_i(H) - \lambda_j(H)$. We are using the fact that if $\cL:\cA\to\cA$ is a linear operator which is self-adjoint with respect to the trace inner product, then $\max_{A \in \cA} \|\cL(A)\|_{\F} / \|A\|_{\F}$ is equal to the spectral radius of $\cL$.}
\begin{equation}
\label{eq:worstcasecommutatorbound}
\max_{A \in \cA} \frac{\|[H,A]\|_{\F}}{\|A\|_{\F}} = \lambda_{\max}(H) - \lambda_{\min}(H) \gtrsim \Omega(N).
\end{equation}
We prove in Theorem \ref{thm:commutatorfrobeniusbound} that if we restrict the maximization above to operators $A$ which are of degree at most $d$, then the value is upper bounded by a quantity that only depends on $d$, and not on $N$. More precisely we show that if $H$ is a quartic Hamiltonian which is $\Delta$-sparse, then
\be
\label{eq:intro-comutator-norm}
\max_{\substack{A \in \cA\\ \deg(A) \leq d}} \frac{\|[H,A]\|_{\F}}{\|A\|_{\F}} \leq 2 \Delta \sqrt{d(d+2)}.
\ee
\item \emph{Trotter error:} Assume now that $H$ is a quartic Hamiltonian which is split into $G$ groups, namely
\[
H = H_1 + \dots + H_G,
\]
and define, respectively
\be
\begin{aligned}
\tau_t(A) &= e^{\i H t} A e^{-\i H t}\\
\trotterstep{t}(A) &= e^{\i H_G t} \dots e^{\i H_1 t} A e^{-\i H_1 t} \dots e^{-\i H_G t}.
\end{aligned}
\ee
An important question in the simulation of quantum dynamics is to bound the error $\|\tau_t(A) - \trotterstep{t}(A)\| / \|A\|$, see e.g., \cite{childs2021theory}. It is easy to verify that $\tau_t(A) - \trotterstep{t}(A) \approx C(H_1,\ldots,H_G) t^2 + O(t^3)$ as $t\to0$, more precisely
\be
\label{eq:intro-limit-trotter-error}
\lim_{t\to0} \frac{\tau_t(A) - \trotterstep{t}(A)}{t^2} = \frac{1}{2} \sum_{1\leq i < j \leq G} [[H_i,H_j],A].
\ee
As such, we have,
\[
\max_{A \in \cA} \lim_{t\to0} \frac{1}{t^2} \frac{\|\tau_t(A) - \trotterstep{t}(A)\|_{\F}}{\|A\|_{\F}} = \frac{1}{2} \sum_{1\leq i < j \leq G} \lambda_{\max}([H_i,H_j]) - \lambda_{\min}([H_i,H_j]).
\]
For standard examples of Hamiltonians, it is easy to see that the right-hand side scales like $\Omega(N)$. In this paper we show that we can obtain a system-size independent bound on $\|\tau_t(A) - \trotterstep{t}(A)\|_{\F}$ provided that $A$ is restricted to be of low degree. Namely if $H$ is quartic and $\Delta$-sparse, then we prove in Theorem \ref{thm:trotter-error} that
\be
\label{eq:intro-trotter-error}
\max_{\substack{A \in \cA\\ \deg(A) \leq d}} \frac{\|\tau_t(A) - \trotterstep{t}(A)\|_{\F}}{\|A\|_{\F}} \leq C \cdot (d \Delta G)^2 \cdot t^2
\ee
for some absolute constant $C > 0$.
\end{itemize}

\section{Preliminaries}
\label{sec:preliminaries}

\paragraph{Majorana algebra} For even $N$, a system with $N/2$ fermionic modes is described by a Fock space $(\mathbb{C}^{2})^{\otimes N/2}$ with an occupation number basis $\{\ket{n_1, \dots, n_{N/2}}\}_{n_1, \dots, n_{N/2} \in \{0,1\}}$. The algebra $\cA$ of operators acting on this Fock space is generated by the creation and annihilation operator $c_j^{*}, c_j$ for $j \in [N/2]$, where $\ket{n_1, \dots, n_{N/2}} = (c_1^{*})^{n_1} \cdots (c^*_{N/2})^{n_{N/2}} \ket{0,\dots,0}$. A state is a positive linear functional $\rho : \cA \to \CC$ satisfying $\rho(1) = 1$. A pure state takes the form $\rho(A) = \bra{\psi} A \ket{\psi}$ for some unit vector $\ket{\psi} \in (\mathbb{C}^{2})^{\otimes N/2}$, which we also call a state. In this paper, we describe $\cA$ with Majorana operators, defined as
\begin{equation}
\begin{cases}
\gamma_{2j} &= c_j^* + c_j\\
\gamma_{2j+1} &= \i (c_j^* -  c_j).
\end{cases}
\end{equation}
The Majorana operators satisfy the canonical anticommutation relations (CAR)
\[
\{\gamma_i,\gamma_j\} = 2\delta_{ij}\,,\quad i,j\in \{1,\ldots,N\} \,.
\]
Any observable $A \in \cA$ has a unique expansion as
\[
A = \sum_{X \subset [N]} a_X \gamma_X
\]
where, for each $X \subset [N]$,
\be
\label{eq:majorana-string-def}
    \gamma_X := \i^{r_X} \prod_{x\in X} \gamma_x \,,
\ee
where the product is ordered such that $x_i < x_j$ for $i<j$, and where we set $r_X =0$ for $\abs{X}= 0,1 \; \mod 4$ and $r_X = 1$ otherwise (in particular $r_X=|X|(|X|-1)/2 \; \mod 2$). As such, $\gamma_X$ is Hermitian for all $X$. We use the convention that $\gamma_{\emptyset} = 1$, and we let $\cM$ be the collection of all $2^N$ Majorana monomials. The \emph{degree} of an observable $A$ is defined as
\[
\deg(A) = \max_{X \subset [N]} \{ |X| : a_X \neq 0 \}.
\]
We recall the following facts about Majorana monomials.
\begin{proposition}
\label{prop:majoranabasics}
For all $S,T\subset [N]$, we have the following properties
\begin{enumerate}
\item Multiplication of two Majorana monomials
\[
    \gamma_S \gamma_T = \i^{(\abs{S}\abs{T}+\abs{S\cup T})\mod 2} (-1)^{|\{(t,s) \in T \times S : t < s\}|} \gamma_{S\triangle T}.
\]
\item $\{\gamma_S,\gamma_T\} = 0 \iff [\gamma_S,\gamma_T] = 2\gamma_S \gamma_T \iff \abs{S}\abs{T}-\abs{S \cap T} \text{ is odd}$
\item $[\gamma_S,\gamma_T] = 0 \iff \{\gamma_S,\gamma_T\} = 2\gamma_S \gamma_T \iff \abs{S}\abs{T}-\abs{S \cap T} \text{ is even}$
\item For any $\theta \in \RR$
\begin{equation}
\label{eq:formula_conjugation_gamma}
e^{\i \frac{\theta}{2} \gamma_T} \gamma_S e^{-\i \frac{\theta}{2} \gamma_T} = \begin{cases}
    \gamma_S & \text{ if } [\gamma_S,\gamma_T] = 0\\
    \cos(\theta) \gamma_S + \i \sin(\theta) \gamma_T \gamma_S & \text{ else.}
\end{cases}
\end{equation}
\end{enumerate}
\end{proposition}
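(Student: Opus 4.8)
The four claims all follow from the canonical anticommutation relations, and I would prove them in the order (1), then (2)--(3), then (4). \emph{For item 1}, the plan is to first strip off the Hermitizing phases and prove, for the raw products $e_X := \prod_{x\in X}\gamma_x$ with factors written in increasing order of index, the ``shuffle'' identity
\[
e_S\,e_T \;=\; (-1)^{\,|\{(t,s)\in T\times S\,:\,t<s\}|}\, e_{S\triangle T}\,.
\]
This is a bubble-sort computation: move each $\gamma_t$ with $t\in T$ leftwards into sorted position, picking up a factor $-1$ each time it crosses a $\gamma_s$ with $s\in S$ and $t<s$; when $\gamma_t$ meets an equal generator (i.e.\ $t\in S\cap T$), the pair squares to $1$ and is deleted, which is exactly why the surviving index set is $S\triangle T$ and why the crossings counted carry a \emph{strict} inequality. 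I would make this rigorous by induction on $|T|$, removing the largest element of $T$ at each step (one may also invoke it as the standard cocycle identity of a Clifford algebra). Re-inserting the phases via $\gamma_X = \i^{r_X}e_X$ then gives $\gamma_S\gamma_T = \i^{\,r_S+r_T-r_{S\triangle T}}(-1)^{\,|\{(t,s)\in T\times S:\,t<s\}|}\gamma_{S\triangle T}$, and it remains to rewrite the prefactor $\i^{\,r_S+r_T-r_{S\triangle T}}$ in the stated closed form, using $|S\triangle T| = |S|+|T|-2|S\cap T|$, $|S\cup T| = |S|+|T|-|S\cap T|$ and $r_X\equiv\binom{|X|}{2}\pmod 2$ to expand this exponent as a quadratic in $|S|,|T|,|S\cap T|$ and simplify. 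I expect this phase bookkeeping to be the only genuine obstacle: the argument is routine but sign- and modulus-sensitive.

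\emph{For items 2 and 3}, rather than repeat the computation, I would extract from (the proof of) item 1 the single relation $\gamma_T\gamma_S = (-1)^{\,|S||T|-|S\cap T|}\,\gamma_S\gamma_T$. Indeed, interchanging $S$ and $T$ leaves the $\i$-prefactor of item 1 unchanged (it is symmetric in $S,T$) while replacing the crossing count $|\{(t,s)\in T\times S: t<s\}|$ by the complementary count $|\{(t,s)\in T\times S: t>s\}|$, and the two counts sum to $|S||T|-|S\cap T|$; alternatively the relation is itself a one-line bubble-sort count. Items 2 and 3 then follow immediately, since $[\gamma_S,\gamma_T] = \bigl(1-(-1)^{|S||T|-|S\cap T|}\bigr)\gamma_S\gamma_T$ and $\{\gamma_S,\gamma_T\} = \bigl(1+(-1)^{|S||T|-|S\cap T|}\bigr)\gamma_S\gamma_T$, so all the equivalences are read off from the parity of $|S||T|-|S\cap T|$ (noting that $\gamma_S\gamma_T\neq 0$).

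\emph{For item 4}, I would first record that $\gamma_T^2 = 1$ for every $T$ -- this is precisely what the phase $\i^{r_X}$ is engineered to guarantee, and it follows from $e_T^2 = (-1)^{\binom{|T|}{2}}$ together with $\i^{2r_T} = (-1)^{r_T} = (-1)^{\binom{|T|}{2}}$ -- so that $e^{\i\frac{\theta}{2}\gamma_T} = \cos(\theta/2)\,1 + \i\sin(\theta/2)\,\gamma_T$. If $[\gamma_S,\gamma_T]=0$, then $\gamma_S$ commutes with $e^{\i\frac{\theta}{2}\gamma_T}$ and the conjugation is trivial, giving $\gamma_S$. Otherwise item 2 gives $\gamma_T\gamma_S = -\gamma_S\gamma_T$, hence $\gamma_T^n\gamma_S = (-1)^n\gamma_S\gamma_T^n$, so term by term in the exponential series $e^{\i\frac{\theta}{2}\gamma_T}\gamma_S = \gamma_S\,e^{-\i\frac{\theta}{2}\gamma_T}$; conjugating, the two exponentials $e^{-\i\frac{\theta}{2}\gamma_T}$ then combine into $e^{-\i\theta\gamma_T}$, and expanding $e^{-\i\theta\gamma_T} = \cos\theta\,1 - \i\sin\theta\,\gamma_T$ and using $-\gamma_S\gamma_T = \gamma_T\gamma_S$ once more gives $\cos\theta\,\gamma_S + \i\sin\theta\,\gamma_T\gamma_S$, as claimed. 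The degenerate case $T=\emptyset$ (where $\gamma_T=1$ is a scalar) is consistent with the first branch, since $[\gamma_S,1]=0$.
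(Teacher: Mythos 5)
The paper states Proposition~\ref{prop:majoranabasics} as recalled background and gives no proof, so there is no in-paper argument to compare against; the only question is whether your proposal is correct. Your reduction of items~2 and~3 to the single relation $\gamma_T\gamma_S = (-1)^{|S||T|-|S\cap T|}\gamma_S\gamma_T$ is correct (the two complementary crossing counts summing to $|S||T|-|S\cap T|$ is exactly the right observation), and your proof of item~4 via $\gamma_T^2 = 1$, sign-flipping $\gamma_T^n$ past $\gamma_S$ term by term, and merging exponentials is complete and standard. The shuffle identity for the raw products $e_S e_T = (-1)^{|\{(t,s)\in T\times S:\,t<s\}|} e_{S\triangle T}$ is also correct.

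The gap is in the step you defer for item~1: rewriting $\i^{\,r_S+r_T-r_{S\triangle T}}$ as the advertised closed form $\i^{(|S||T|+|S\cup T|)\bmod 2}$. That step \emph{cannot} be carried out, because the two quantities disagree. Take $S=\{1,2\}$, $T=\{3,4\}$: then $r_S=r_T=1$ and $r_{S\triangle T}=r_{\{1,2,3,4\}}=0$, so $\i^{\,r_S+r_T-r_{S\triangle T}}=\i^2=-1$, and indeed
\[
\gamma_{\{1,2\}}\gamma_{\{3,4\}}=(\i\gamma_1\gamma_2)(\i\gamma_3\gamma_4)=-\gamma_1\gamma_2\gamma_3\gamma_4=-\gamma_{\{1,2,3,4\}}\,,
\]
whereas $\i^{(|S||T|+|S\cup T|)\bmod 2}=\i^{(4+4)\bmod 2}=\i^0=1$ and the crossing factor is also $(-1)^0=1$, so the stated formula predicts $+\gamma_{\{1,2,3,4\}}$. (Note also that a mod-2 power of $\i$ can only be $1$ or $\i$, so it can never by itself reproduce the value $-1$ needed here.) In other words, your intermediate expression $\i^{\,r_S+r_T-r_{S\triangle T}}(-1)^{|\{(t,s)\in T\times S:\,t<s\}|}\gamma_{S\triangle T}$ is the correct answer, but the closed-form prefactor printed in item~1 of the proposition appears to carry a sign typo, so you should not expect the final ``simplify'' step to close. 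This is harmless for the rest of the paper, which only invokes item~1 to conclude that $\gamma_S\gamma_T$ is some unit-modulus complex multiple of $\gamma_{S\triangle T}$; but as a matter of proving the proposition as literally written, the phase bookkeeping is not merely ``routine and sign-sensitive'' -- it is the one place the claim fails.
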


We will also need the following simple corollary.
\begin{corollary}
For any $S \subset [N]$, $S \neq \emptyset$ such that $|S|$ is even, and $A \in \cA$,
\begin{equation}
\label{eq:degreecommutatormonomial}
\deg([\gamma_S,A]) \leq \deg(A) + |S|-2.
\end{equation}
\end{corollary}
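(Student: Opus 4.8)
The plan is to expand $A$ in the Majorana basis and reduce the claim to a statement about individual monomials. Write $A = \sum_{T \subset [N]} a_T \gamma_T$, so that by linearity of the commutator $[\gamma_S, A] = \sum_{T} a_T [\gamma_S, \gamma_T]$, where only terms with $a_T \neq 0$ (hence $|T| \leq \deg(A)$) contribute. It therefore suffices to show that whenever $[\gamma_S,\gamma_T] \neq 0$, the operator $[\gamma_S,\gamma_T]$ is a scalar multiple of a single Majorana string of degree at most $|T| + |S| - 2$; summing over $T$ can then only produce strings of degree at most $\deg(A) + |S| - 2$ (cancellations only lower the degree).

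Now fix such a $T$. By items~2 and 3 of Proposition~\ref{prop:majoranabasics}, the commutator $[\gamma_S,\gamma_T]$ is either zero or equals $2\gamma_S\gamma_T$, and it is nonzero precisely when $|S||T| - |S\cap T|$ is odd. This is the only place the hypothesis enters: since $|S|$ is even, $|S||T|$ is even, so $|S\cap T|$ must be odd, and in particular $|S\cap T| \geq 1$. Then item~1 of the Proposition gives $\gamma_S\gamma_T = \zeta\,\gamma_{S\triangle T}$ for some unit scalar $\zeta \in \{\pm 1, \pm \i\}$, so $\deg(\gamma_S\gamma_T) = |S\triangle T| = |S| + |T| - 2|S\cap T| \leq |S| + |T| - 2 \leq |S| + \deg(A) - 2$. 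Combining with the previous paragraph yields $\deg([\gamma_S,A]) \leq \deg(A) + |S| - 2$. (The case $S = \emptyset$ is excluded because there $[\gamma_S,A] = 0$; together with $|S|$ even and $S \neq \emptyset$ this forces $|S| \geq 2$, so the bound is meaningful.)

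There is no genuine obstacle here: the statement is an immediate consequence of the multiplication rule and the commutation criterion already recorded in Proposition~\ref{prop:majoranabasics}. The only point worth emphasizing is the source of the ``$-2$'' savings, namely the parity argument — failing to commute with an \emph{even}-length string $\gamma_S$ requires overlapping $\gamma_T$ in an odd, hence strictly positive, number of modes, and each shared mode cancels out of the product $\gamma_S\gamma_T$, trimming the degree by two per shared mode.
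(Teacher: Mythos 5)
Your proof is correct and follows essentially the same argument as the paper: reduce by linearity to the case $A = \gamma_T$, use items 2--3 of Proposition~\ref{prop:majoranabasics} together with the parity of $|S|$ to conclude that a nonvanishing commutator forces $|S\cap T|$ odd (hence $\geq 1$), and then bound $\deg(\gamma_{S\triangle T}) = |S|+|T|-2|S\cap T| \leq |S|+|T|-2$. The only difference is cosmetic — you spell out the linearity step and the role of the parity hypothesis more explicitly than the paper does.
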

\begin{proof}
It suffices to show that for any $T$, $\deg([\gamma_S,\gamma_T]) \leq |T| + |S| - 2$. If $|S \cap T|$ is even then $[\gamma_S,\gamma_T] = 0$ and there is nothing to do. If $|S\cap T|$ is odd then $\deg([\gamma_S,\gamma_T]) = |S\triangle T| = |S|+|T|-2|S\cap T| \leq |S|+|T|-2$ as desired.
\end{proof}

\subsection{Norms on $\cA$}
\label{sec:norms}

To quantify the approximation of observables, one can use several norms on $\cA$. For an observable $A$ and an estimator $\widehat{A}$ (assumed to be Hermitian), the operator norm $\| \cdot \|_{\op}$ can be interpreted as the difference of expectation values on the worst-case state
\[
\| A - \widehat{A} \|_{\op} = \sup \{ |\rho(A) - \rho(\widehat{A})| : \rho \text{ state}\} = \sup \{ |\bra{\psi} A \ket{\psi} - \bra{\psi} \widehat{A} \ket{\psi}| : \ket{\psi} \text{ state}\}.
\]
In this paper, we work with the normalized Frobenius norm defined, for $A \in \cA$, as
\[
\|A\|_{\F} = \sqrt{\tr(A^* A)},
\]
where $\tr$ is the normalized trace, i.e., $\tr(1) = 1$. Note that
\be
\| \gamma_{X} \|_{\F} = 1 \text{ and } \tr(\gamma_X) = 0 \;\; \forall X \neq \emptyset.
\ee
If the observable $A$ has an expansion
\[
A = \sum_{X \subset [N]} a_X \gamma_X
\]
then
\be
\label{eq:frobA}
\|A\|_{\F} = \left(\sum_{X \subset [N]} |a_X|^2\right)^{1/2}.
\ee
The Frobenius norm between observables $A$ and $\widehat{A}$ can be interpreted in terms of the squared difference of expectation values on an average-case state. In fact, taking a random pure state $\ket{\psi} \in (\CC^2)^{\otimes N/2}$ and Hermitian operators $A$ and $\widehat{A}$, a standard calculation gives
    \begin{align}
        \int_{\psi} \left(\bra{\psi} A \ket{\psi} - \bra{\psi} \widehat{A} \ket{\psi}\right)^2 d\psi
        &= \int_{\psi} 2^N \tr\left( (A - \widehat{A})^{\otimes 2} \proj{\psi}^{\otimes 2} \right) d\psi \\
        &= 2^N \tr\left( (A - \widehat{A})^{\otimes 2} \frac{I + F}{2^{N/2}(2^{N/2}+1)} \right) \\
        &= \frac{1}{1+2^{-N/2}} \left(\left(\tr(A - \widehat{A})\right)^2 + \tr(A - \widehat{A})^2 \right) \\
        &= \frac{1}{1+2^{-N/2}} \left(\left(\tr(A - \widehat{A})\right)^2 + \| A - \widehat{A}\|^2_{\F} \right),
    \end{align}
    where $F$ denotes the swap operator. Assuming $\tr(A) = \tr(\widehat{A})$, which is the case for the estimators we consider in this paper (e.g., Majorana propagation), we obtain that the average squared difference is proportional to the squared Frobenius norm.
Interpretations of the Frobenius norm of operators are discussed in~\cite{chen2023speed}.

\subsection{Hamiltonian}
We define a Hamiltonian $H \in \cA$ with quartic interactions as a Hermitian operator of the form:
\be\label{eq:weaklyinteractinghamiltonian}
    H := \sum_{X\in\cX} h_X \quad \text{where}\quad h_X = \hc_X \gamma_X, \;\; \hc_X \in \CC, |\hc_X|\leq 1 \, .
\ee
The Hamiltonian is assumed to only have terms of degree two or four, as this is the case for the majority of physical fermionic Hamiltonians. For this reason, we will write
\[
\cX = \cX_2 \cup \cX_4
\]
where
\be
\begin{aligned}
    \cX_2 := \{X \subset [N] : \abs{X}=2\,\text{and}\,h_X \neq 0\} \,, \\
    \cX_4 := \{X \subset [N] : \abs{X}=4\,\text{and}\,h_X \neq 0\} \,.
\end{aligned}
\ee
Throughout this work, we will assume that the Hamiltonian is $\Delta$-sparse, i.e.,
\be
\label{eq:vertexbounded}
\max_{i\in[N]}\abs{\{X \in \cX : i\in X \text{ and } h_X \neq 0\}} \leq \Delta\,.
\ee

The sparsity assumption implies a number of useful and easy-to-prove consequences that we list here.

\begin{proposition}
\label{prop:bounded-degree-property}
For any Majorana monomial $A \in \cM$, the number of terms $h_X$ of the Hamiltonian such that $[h_X,A] \neq 0$ is bounded above by $\Delta \deg A$, i.e.,
\begin{equation}
\label{eq:num-nnz-comm}
|\{X \in \cX : [h_X, A] \neq 0\}| \leq \Delta \deg A \,.
\end{equation}
\end{proposition}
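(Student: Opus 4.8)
The plan is to reduce immediately to the case of a single Majorana string and then combine the commutation criterion of Proposition~\ref{prop:majoranabasics} with the evenness of the degrees of the Hamiltonian terms. Since $A \in \cM$, I would write $A = \gamma_S$ for some $S \subset [N]$ with $|S| = \deg A$. For each $X \in \cX$ we have $h_X = \hc_X \gamma_X$ with $\hc_X \neq 0$, so $[h_X, A] \neq 0$ if and only if $[\gamma_X, \gamma_S] \neq 0$. By part~3 of Proposition~\ref{prop:majoranabasics}, this fails to vanish exactly when $|X|\,|S| - |X \cap S|$ is odd.

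The next step is to use that $H$ is quartic: every $X \in \cX$ has $|X| \in \{2,4\}$, hence $|X|\,|S|$ is even, and the parity condition above collapses to ``$|X \cap S|$ is odd''. In particular this forces $|X \cap S| \geq 1$, so any term $h_X$ that does not commute with $\gamma_S$ must satisfy $X \cap S \neq \emptyset$, i.e., $X$ contains at least one index of $S$. Consequently
\[
\{X \in \cX : [h_X, A] \neq 0\} \ \subseteq\ \bigcup_{i \in S} \{X \in \cX : i \in X\}.
\]

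To finish, I would apply a union bound together with the $\Delta$-sparsity assumption~\eqref{eq:vertexbounded}, which bounds $|\{X \in \cX : i \in X\}|$ by $\Delta$ for each fixed $i$. This yields $|\{X \in \cX : [h_X, A] \neq 0\}| \leq |S|\,\Delta = \Delta\,\deg A$, as claimed. There is no serious obstacle here; the only point requiring care is that the passage from the parity condition to ``$X \cap S \neq \emptyset$'' genuinely uses that $|X|$ is even, which is precisely where the quartic (more generally, even-degree) structure of the Hamiltonian is invoked. For a general (not single-string) observable $A$ of degree $d$ the same bound would follow by applying the argument to each monomial in its expansion and taking the union over the supports, but for this proposition the single-monomial case suffices.
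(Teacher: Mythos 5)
Your proof is correct and reaches the same conclusion as the paper's via essentially the same ideas: both reduce to showing that any non-commuting term $h_X$ must overlap the support of $\gamma_S$, then apply a union bound over the indices in $S$ using $\Delta$-sparsity. The only cosmetic difference is that the paper routes the argument through the Leibniz rule for commutators (reducing to the degree-one case), whereas you invoke the parity criterion of Proposition~\ref{prop:majoranabasics} directly; both make the same essential use of the evenness of $|X|$.
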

\begin{proof}
First observe that given $p \in \{1,\ldots,N\}$, there are at most $\Delta$ terms of the Hamiltonian that have a non-zero commutator with the degree-one monomial $\gamma_p$. This follows directly from the assumption \eqref{eq:vertexbounded}, and proves \eqref{eq:num-nnz-comm} when $A$ has degree 1.
For an arbitrary monomial $A = \gamma_{p_1} \dots \gamma_{p_d}$ where $d = \deg A$, observe that for any $h \in \cA$, we have
\[
[h,A] = \sum_{j=1}^d \gamma_{p_1}\dots \gamma_{p_{j-1}}[h,\gamma_{p_j}]\gamma_{p_{j+1}}\dots \gamma_{p_{d}}.
\]
Hence
\[
|\{X : [h_X,A] \neq 0\}| \leq \sum_{j=1}^d |\{X : [h_X,\gamma_{p_j}] \neq 0\}| \leq \Delta d
\]
as desired.
\end{proof}

When analyzing the complexity of MP, it will be important to control the growth of the degree of the observable after each Trotter step. To achieve this, we next show how one can color the terms in the Hamitonian such that any two terms with the same color have disjoint support. With this decomposition, given any observable $A$ it will then be easier to control by how much the degree of $A$ grows after having applied one Trotter step.
\begin{proposition}
\label{prop:partitionHamiltonian}
There is an efficient algorithm that can compute a partition of the terms $\cX$ of $H$ in $G \leq 4\Delta$ groups
\begin{equation}
\label{eq:Xpartition}
\cX = \cX^1 \cup \dots \cup \cX^G
\end{equation}
such that for any $i \in \{1,\ldots,G\}$ and any $X,Y \in \cX^i$, $X\cap Y = \emptyset$.
\end{proposition}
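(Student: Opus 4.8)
The plan is to reduce the problem to a graph-coloring statement. I construct a graph $\mathcal{G}$ whose vertex set is $\cX$ (the set of Hamiltonian terms), and where two distinct terms $X, Y \in \cX$ are joined by an edge if and only if $X \cap Y \neq \emptyset$. A partition of $\cX$ into groups with pairwise-disjoint supports within each group is exactly a proper vertex coloring of $\mathcal{G}$: the color classes are the groups $\cX^1, \dots, \cX^G$. So it suffices to bound the chromatic number of $\mathcal{G}$, and since a greedy coloring uses at most $1 + \Delta(\mathcal{G})$ colors where $\Delta(\mathcal{G})$ is the maximum degree of $\mathcal{G}$, it is enough to bound that maximum degree.

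Next I bound the degree of a vertex $X \in \cX$ in $\mathcal{G}$. A term $Y \neq X$ is adjacent to $X$ iff $Y$ shares at least one Majorana mode with $X$. Since $|X| \leq 4$, we can write the set of neighbors as $\bigcup_{i \in X} \{ Y \in \cX : i \in Y, Y \neq X\}$. By the $\Delta$-sparsity assumption \eqref{eq:vertexbounded}, for each fixed mode $i$ there are at most $\Delta$ terms containing $i$, hence at most $\Delta$ terms $Y \neq X$ containing $i$. Taking the union over the at most $4$ modes in $X$ gives at most $4\Delta$ neighbors (in fact at most $4\Delta - 1$ if one accounts for $X$ itself being counted, but $4\Delta$ suffices). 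Therefore $\Delta(\mathcal{G}) \leq 4\Delta$, and a greedy coloring produces a proper coloring with at most $4\Delta + 1$ colors; a small refinement (e.g. noting the union over $i \in X$ of terms $\neq X$ containing $i$ has size at most $4\Delta - 1$, or simply invoking that $\mathcal{G}$ has maximum degree $< 4\Delta$) brings this to $G \leq 4\Delta$ as claimed. Finally, efficiency is immediate: $\mathcal{G}$ has $|\cX| \leq N\Delta / 2$ vertices and its adjacency structure is computed by scanning the terms, and the greedy coloring runs in polynomial time.

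I do not expect a genuine obstacle here — the statement is essentially bookkeeping. The only point requiring mild care is getting the constant down from $4\Delta + 1$ to $4\Delta$; this is handled by observing that among the at most $4\Delta$ terms one reaches by picking a mode of $X$ and listing all terms through it, one of them is $X$ itself, so the number of genuine neighbors is at most $4\Delta - 1 < 4\Delta$, whence greedy coloring uses at most $4\Delta$ colors. (If $\cX_2$ is nonempty one could push the constant further by treating quadratic and quartic terms separately, but this is not needed.)
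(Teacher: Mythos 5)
Your proof is correct and follows essentially the same route as the paper: build the conflict graph on $\cX$, bound its maximum degree using $\Delta$-sparsity and $|X|\leq 4$, and invoke greedy coloring. The paper gets a slightly tighter degree bound ($\deg_{\cG}(X) \leq 4(\Delta-1)$, yielding at most $4\Delta - 3$ colors) by subtracting $X$ itself for each mode $x\in X$, but your refinement to $4\Delta - 1$ is equally sufficient for the stated bound $G \leq 4\Delta$.
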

\begin{proof}
Consider the graph $\cG=(\cX,E)$ where each node corresponds to a term of the Hamiltonian, and where an edge exists between two terms $X,Y \in \cX$ iff 
$X\cap Y \neq \emptyset$.
Finding a partition such as the one in Eq.~\eqref{eq:Xpartition} amounts to finding a coloring of $\cG$. Denoting by $\deg_{\cG}(X)$ the number of edges at node $X\in\cX$, we have that
\[
\deg_\cG(X) \leq \sum_{x\in X} \abs{\{Y\in\cX : x \in Y\text{ and } Y\neq X\}} \leq \sum_{x\in X} (\Delta  - 1) \leq 4(\Delta-1)\,,
\]
since by assumption on the set $\cX$, $\abs{X}\leq 4$. A standard result~\cite{diestel2025graph} is that a simple greedy algorithm can color the graph using $\deg_\cG(X) + 1 \leq 4\Delta$ colors.
\end{proof}

\clearpage
\newpage

\section{Majorana Propagation algorithm}
\label{sec:MP}

We first describe in Section \ref{sec:MP-descr} precisely the MP algorithm that we consider in this work, and state the main result in Theorem~\ref{thm:majorana-prop} concerning its convergence. Before delving in the proof, in Section~\ref{sec:MP-ingredients} we discuss two important technical results needed to analyze the convergence and complexity of the algorithm. The proof of Theorem~\ref{thm:majorana-prop} is then delayed to Section~\ref{sec:MP-thm-proof}.

\subsection{Description of the algorithm}
\label{sec:MP-descr}

\paragraph{Trotter dynamics} 
The MP algorithm we consider in this section starts by decomposing the Hamiltonian $H$ into $G$ terms
\[
    H = \sum_{g=1}^G H^g \,, \quad H^g = \sum_{X\in\cX^g} h_X
\]
such that for each $g \in \{1,\ldots,G\}$, all the monomial terms in $H^g$ have disjoint support. Proposition \ref{prop:partitionHamiltonian} shows that one can efficiently find such a decomposition with  $G \leq 4 \Delta$. For any $A\in\cA$, $g\in[G]$ and $\dt$, let
\be
\label{eq:tauHg}
\tau^{g}_{\dt}(A) = e^{\i \dt H^g} A e^{-\i \dt H^g}\,,
\ee
and let $\trotterstep{\dt}$ be a Trotter approximation of the unitary evolution operator $\tau_{\dt}$ obtained as the composition of $\{\tau^g_{\dt}\}_{g=1}^G$
\be
\label{eq:deftautrotter}
\trotterstep{\dt}(A) = (\tau_{\dt}^G \circ \dots \circ \tau_{\dt}^1)(A)\,.
\ee
Note that since the individual terms inside each $H^g$ pairwise commute, we have
\[
    \tau^g_{\dt}(A) = \prod_{X\in\cX^g} e^{\i \dt h_X} A \prod_{X\in\cX^g} e^{-\i \dt h_X}
\]
and therefore Eq.~\eqref{eq:deftautrotter} is equivalent to first-order Trotter, where each single Hamiltonian term gives rise to the adjoint action of a single unitary, and the ordering of the unitaries is dictated by the decomposition. This particular ordering allows us to control the growth of the degree during a Trotter step. Indeed, the following short Proposition shows that for each $g \in [G]$, $\deg \tau^g_{\dt}(A) \leq 3 \deg(A)$, which implies that $\deg(\trotterstep{\dt}(A)) \leq 3^G \deg(A)$.

\begin{proposition}
\label{prop:commutingterms}
Let $S \subset [N]$ and let $\gamma_{X_1},\ldots,\gamma_{X_k}$ be a set of degree-two and degree-four Majorana monomials whose supports are pairwise disjoint (hence they pairwise commute). Define, for any choice of real numbers $\{\theta_i\}$, the following operator
\[
O = \left(\prod_{i=1}^k e^{\i\theta_i \gamma_{X_i}}\right) \gamma_S \left(\prod_{i=1}^k e^{-\i\theta_i \gamma_{X_i}}\right).
\]
Then $O$ is a Majorana polynomial that is a sum of at most $2^{|S|}$ monomials and satisfies $\deg(O) \leq 3|S|$. Moreover, it can be computed in time $O(k + 2^{|S|})$.
\end{proposition}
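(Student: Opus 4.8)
The plan is to reduce to the case of a single Majorana string $\gamma_S$ by linearity of conjugation, apply the factors $e^{\i\theta_i\gamma_{X_i}}$ one at a time using the conjugation rule \eqref{eq:formula_conjugation_gamma}, and control the branching via a parity invariant that keeps the number of strings produced at $2^{|S|}$ rather than the naive $2^k$.

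First I would observe that since the supports $X_1,\dots,X_k$ are pairwise disjoint and of even size, Proposition~\ref{prop:majoranabasics}(3) gives $[\gamma_{X_i},\gamma_{X_j}]=0$ for all $i,j$, so the unitaries $e^{\i\theta_i\gamma_{X_i}}$ pairwise commute; hence $O=(\Phi_k\circ\cdots\circ\Phi_1)(\gamma_S)$, where $\Phi_i(\cdot):=e^{\i\theta_i\gamma_{X_i}}(\cdot)e^{-\i\theta_i\gamma_{X_i}}$, and the order in which the $\Phi_i$ are composed is immaterial. Applying \eqref{eq:formula_conjugation_gamma} with $\theta\mapsto 2\theta_i$, for any Majorana string $\gamma_T$ we have $\Phi_i(\gamma_T)=\gamma_T$ when $[\gamma_{X_i},\gamma_T]=0$, and $\Phi_i(\gamma_T)=\cos(2\theta_i)\gamma_T+\i\sin(2\theta_i)\,\gamma_{X_i}\gamma_T$ otherwise, where by Proposition~\ref{prop:majoranabasics}(1) the product $\gamma_{X_i}\gamma_T$ equals the single string $\gamma_{X_i\triangle T}$ up to an explicitly computable scalar phase. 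Since $|X_i|$ is even, Proposition~\ref{prop:majoranabasics}(3) tells us $[\gamma_{X_i},\gamma_T]\ne 0$ exactly when $|X_i\cap T|$ is odd, so each $\Phi_i$ maps a string either to itself or to a combination of two strings.

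The crux is the parity invariant: \emph{every} string $\gamma_T$ that can arise while applying $\Phi_1,\dots,\Phi_k$ to $\gamma_S$ satisfies $|X_i\cap T|\equiv|X_i\cap S|\pmod 2$ for all $i\in[k]$. I would prove this by induction along the process: any such $T$ has the form $S\,\triangle\,U_J$ with $U_J:=\bigcup_{i\in J}X_i$ for some $J\subseteq[k]$ (the iterated symmetric difference of the disjoint $X_i$ being their union), and toggling an index $j\ne i$ leaves $X_i\cap T$ unchanged, while toggling $i$ replaces $|X_i\cap T|$ by $|X_i|-|X_i\cap T|$, of the same parity since $|X_i|$ is even. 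Hence only the indices in $R:=\{i\in[k]:|X_i\cap S|\text{ is odd}\}$ can ever trigger a branching, and (using again the order-independence) $O=\sum_{J\subseteq R}c_J\,\gamma_{S\,\triangle\,U_J}$ for suitable scalars $c_J$, a sum of at most $2^{|R|}$ monomials. Because the sets $\{X_i\cap S\}_{i\in R}$ are nonempty, pairwise disjoint subsets of $S$, we get $|R|\le|S|$, so $O$ is a sum of at most $2^{|S|}$ monomials. For the degree, $|S\,\triangle\,U_J|=|S|+\sum_{i\in J}\bigl(|X_i|-2|X_i\cap S|\bigr)\le|S|+2|J|\le|S|+2|R|\le 3|S|$, using $|X_i|\le 4$ and $|X_i\cap S|\ge 1$ for $i\in R$.

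Finally, for the complexity I would build a hash set of $S$ in time $O(|S|)$ and scan $X_1,\dots,X_k$ to determine $R$, each parity test costing $O(|X_i|)=O(1)$, for $O(k)$ total; then enumerate the subsets $J\subseteq R$ in Gray-code order, maintaining the current support $S\,\triangle\,U_J$ and coefficient $c_J$ incrementally (each step toggles a single $X_i$ of size $\le 4$ into the support and multiplies the scalar by $\cos(2\theta_i)$ or $\i\sin(2\theta_i)$ times the Majorana phase of Proposition~\ref{prop:majoranabasics}(1), all updatable with a few auxiliary rank counts), producing the $\le 2^{|S|}$ output monomials in time $O(2^{|S|})$, hence $O(k+2^{|S|})$ overall. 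The one genuinely delicate point is the parity invariant, which is what pins the number of monomials at $2^{|S|}$ rather than $2^{k}$; given it, the degree bound and the complexity estimate are routine bookkeeping.
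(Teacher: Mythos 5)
Your proof is correct and takes essentially the same approach as the paper's, but with one genuine extra observation. The paper's proof simply drops, without loss of generality, the rotations with $X_i\cap S=\emptyset$ (valid because the unitaries commute and each $X_i$ is disjoint from all other $X_j$, so such an $X_i$ stays disjoint from every intermediate string); this leaves at most $|S|$ rotations, each of which at most doubles the number of monomials and raises the degree by at most $2$ (since $|S\triangle X_i|\le|S|+2$ when $X_i\cap S\ne\emptyset$ and $|X_i|\le 4$), giving the stated bounds immediately. Your parity invariant is a refinement of this: it identifies the branching indices precisely as $R=\{i:|X_i\cap S|\text{ odd}\}$, which can be a proper subset of $\{i:X_i\cap S\ne\emptyset\}$, and gives the closed form $O=\sum_{J\subseteq R}c_J\,\gamma_{S\triangle\bigcup_{i\in J}X_i}$ with at most $2^{|R|}\le 2^{|S|}$ terms. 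This does not improve the worst-case bounds in the statement, but it is a cleaner structural description and directly supports your Gray-code enumeration for the runtime. Both proofs are valid; yours is slightly longer but extracts more structure, and in particular makes fully explicit the justification that the paper leaves implicit (that a rotation with $X_i\cap S=\emptyset$ never re-enters later), which is a nice touch.
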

\begin{proof}
We can assume without loss of generality that $X_i \cap S \neq \emptyset$ for all $i=1,\ldots,k$ (indeed, if $X_i \cap S = \emptyset$ then $[\gamma_{X_i},\gamma_S] = 0$ and the corresponding Majorana rotation has no effect). Since the $X_i$ are pairwise disjoint, this implies that $k \leq |S|$. Furthermore, we know from Proposition \ref{prop:majoranabasics} that for any $X$ with $|X| \leq 4$
\[
\deg\left(e^{\i \theta \gamma_X} \gamma_S e^{-\i \theta \gamma_X}\right) \leq \max(|S|,|S\triangle X|) \leq |S| + 2.
\]
Hence since $O$ is the result of applying $k$ operations of this form to $\gamma_S$, we get
\[
\deg(O) \leq |S| + 2k \leq |S|+2|S| = 3|S|
\]
as desired. To compute $O$, we first loop through the sets $X_i$ and only keep the ones for which $X_i \cap S \neq \emptyset$, the corresponding runtime is $O(k)$. To compute the product, we use the formula in~\eqref{eq:formula_conjugation_gamma} sequentially and each time the total number of Majorana monomials gets multiplied by at most $2$. The total number of times we use Eq.~\eqref{eq:formula_conjugation_gamma} is at most $1 + 2 + 2^2 + \dots + 2^{|S|} = O(2^{|S|})$.
\end{proof}

\paragraph{Degree truncation} In the MP algorithm we truncate the observable to degree at most $\ell$ after each Trotter evolution. More precisely, consider the truncation map $\Trunc_\ell:\cA \to \cA$ defined on any observable $A=\sum_{Y\subset [N]} a_X \gamma_X$ as
\[
    \Trunc_\ell(A) =   \sum_{\substack{X\subset [N]\\\ |X| \leq \ell}} a_X \gamma_X.
\]
The truncation map satisfies the following important properties.
\begin{proposition}
\label{prop:truncationmap}
For any observable $A \in \cA$, and any $\ell \in \NN$, the following holds:
\begin{itemize}
    \item[(i)] $\|\Trunc_{\ell}(A)\|_{\F} \leq \|A\|_{\F}$ 
    \item[(ii)] $\min_{P \in \cA, \deg P \leq \ell} \|P - A\|_{\F} = \|\Trunc_{\ell}(A) - A\|_{\F} = (\sum_{X \subset [N], |X| > \ell} |a_X|^2)^{1/2}$.
\end{itemize}
\end{proposition}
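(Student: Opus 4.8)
The plan is to reduce both statements to the Pythagorean identity~\eqref{eq:frobA}, which expresses $\|A\|_\F$ as the $\ell^2$-norm of the coefficient vector $(a_X)_{X\subset[N]}$ of $A$ in the Majorana basis. That identity is itself just the statement that $\{\gamma_X\}_{X\subset[N]}$ is an orthonormal family for the trace inner product $\langle A,B\rangle=\tr(A^*B)$: by Proposition~\ref{prop:majoranabasics} the product $\gamma_X^*\gamma_Y$ is $\pm\gamma_{X\triangle Y}$, which is traceless unless $X=Y$, while $\|\gamma_X\|_\F=1$. So the whole proposition is the elementary fact that coordinate projection is the best $\ell^2$-approximation; I do not expect any real obstacle, and the only point requiring a little care is to invoke orthonormality rather than the triangle inequality, since it is orthogonality that delivers both the monotonicity in (i) and the \emph{exact} optimality in (ii).

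For part (i), write $A=\sum_X a_X\gamma_X$. Then $\Trunc_\ell(A)=\sum_{|X|\le\ell}a_X\gamma_X$ is obtained from $A$ by zeroing out a subset of the coefficients, so~\eqref{eq:frobA} gives
\[
\|\Trunc_\ell(A)\|_\F^2=\sum_{|X|\le\ell}|a_X|^2\;\le\;\sum_{X\subset[N]}|a_X|^2=\|A\|_\F^2.
\]

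For part (ii), I would first note that $A-\Trunc_\ell(A)=\sum_{|X|>\ell}a_X\gamma_X$, so~\eqref{eq:frobA} immediately yields the second claimed equality $\|\Trunc_\ell(A)-A\|_\F=\bigl(\sum_{|X|>\ell}|a_X|^2\bigr)^{1/2}$. For the first equality, take any $P\in\cA$ with $\deg P\le\ell$ and expand $P=\sum_{|X|\le\ell}p_X\gamma_X$; then
\[
P-A=\sum_{|X|\le\ell}(p_X-a_X)\gamma_X-\sum_{|X|>\ell}a_X\gamma_X,
\]
and since the $\gamma_X$ are orthonormal,
\[
\|P-A\|_\F^2=\sum_{|X|\le\ell}|p_X-a_X|^2+\sum_{|X|>\ell}|a_X|^2\;\ge\;\sum_{|X|>\ell}|a_X|^2=\|\Trunc_\ell(A)-A\|_\F^2,
\]
with equality exactly when $p_X=a_X$ for all $|X|\le\ell$, i.e.\ $P=\Trunc_\ell(A)$. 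Taking the minimum over $P$ with $\deg P\le\ell$ then gives the first equality, completing the proof.
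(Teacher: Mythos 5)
Your proof is correct and uses exactly the same idea as the paper's one-line proof, which simply cites \eqref{eq:frobA}; you have merely spelled out the orthonormality argument and the Pythagorean decomposition that the paper leaves implicit.
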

\begin{proof}
This is a direct consequence of \eqref{eq:frobA}.
\end{proof}

\paragraph{Majorana Propagation algorithm} Using this notation, we can express the MP algorithm in simple terms. Choose a time step $\dt$ and then let, assuming $t/\dt$ is an integer,
\be
\label{eq:AellP}
A^{\dt,\ell}_{\MP}(t) =  \left(\Trunc_{\ell} \circ \trotterstep{\dt}\right)^{t/\dt}(A)\,.
\ee

\begin{remark}
If $t/\dt$ is not an integer, we adapt the definition \eqref{eq:AellP} to use a smaller step size in the last iteration, namely
\be
\label{eq:AellP2}
A^{\dt,\ell}_{\MP}(t) =  \Trunc_{\ell} \circ \trotterstep{t-\lfloor t/\dt \rfloor \dt} \circ \left(\Trunc_{\ell} \circ \trotterstep{\dt}\right)^{\lfloor t/\dt \rfloor}(A).
\ee
\end{remark}

\begin{algorithm}[ht]
\label{alg:MP}
\caption{Majorana Propagation algorithm}
{\bf Input}: Quartic Majorana Hamiltonian $H$ with sparsity $\Delta$, Observable $A$, Horizon $t$\\
{\bf Parameters}: $\dt > 0$, $\ell \geq \deg(A)$\\
{\bf Output}: Estimate of $A(t) = e^{\i H t} A e^{-\i H t}$\\
1. Decompose the Hamiltonian $H$ into $G \leq 4\Delta$ as $H = H^1 + \dots +H^G$ such that for any $g \in \{1,\ldots,G\}$, all the monomial terms in $H^g$ have disjoint supports (see Proposition \ref{prop:partitionHamiltonian}) \\
2. Define $\trotterstep{\dt}$ according to Eq.~\eqref{eq:deftautrotter}\\
3. Return $A^{\dt, \ell}_{\MP}(t)$ as defined in \eqref{eq:AellP2}
\end{algorithm}

We are now ready to state our main result concerning the analysis of MP.

\begin{theorem}[Analysis of Majorana propagation]
\label{thm:majorana-prop}
Let $A \in \cA$ be an arbitrary observable, and let $A^{\dt, \ell}_{\MP}(t)$ be the result of applying the MP algorithm with parameters $\dt$ and $\ell \geq \deg(A)$ for a time horizon $t \geq 0$ (see Algorithm \ref{alg:MP}). Then:
\begin{itemize}
\item[\textup{(i)}] The operator $A^{\dt, \ell}_{\MP}(t)$ can be computed in time complexity at most $\frac{t}{\dt} \cdot N^{O(\ell)}$.
\item[\textup{(ii)}] We have the bound
\begin{equation}
\label{eq:bounderrppell}
        \|A^{\dt,\ell}_{\MP}(t)-A(t)\|_{\F} \leq 34 t\cdot \dt \cdot \Delta^2(\ell+2)^2 \|A\|_{\F} + \left\lceil \dfrac{t}{\dt}\right \rceil \best_{\F}(H,A,t,\ell)\,,
\end{equation}
where $A(t) = e^{\i H t} A e^{-\i H t}$ and $\best_{\F}(H,A,t,\ell)$ is the quantity \eqref{eq:bestell} with $\|\cdot\| = \|\cdot\|_{\F}$.
\end{itemize}
\end{theorem}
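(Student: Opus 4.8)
The plan is to interpolate between the MP output and the exact Heisenberg trajectory through an auxiliary trajectory that uses \emph{exact} evolution but still truncates after each step, so that the Trotter error and the truncation error decouple cleanly. Assume for simplicity $t/\dt\in\NN$ (the general case follows by treating the shortened last step of \eqref{eq:AellP2}, which only decreases the Trotter-error contribution of that step), write $n=t/\dt$, and set $A_k=(\Trunc_{\ell}\circ\trotterstep{\dt})^k(A)$ (so $A^{\dt,\ell}_{\MP}(t)=A_n$), $B_k=A(k\dt)$ (the exact trajectory), and $C_k=(\Trunc_{\ell}\circ\tau_{\dt})^k(A)$ (the auxiliary one). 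By the triangle inequality it suffices to bound $\norm{A_n-C_n}_{\F}$ (the accumulated Trotter error) and $\norm{C_n-A(t)}_{\F}$ (the accumulated truncation error) separately. Throughout I will use that $\tau_{\dt}$ and $\trotterstep{\dt}$ are conjugations by unitaries, hence $\norm{\cdot}_{\F}$-isometries, and that $\Trunc_{\ell}$ is linear and $\norm{\cdot}_{\F}$-nonexpansive (Proposition~\ref{prop:truncationmap}), so in particular $\norm{A_k}_{\F}\leq\norm{A}_{\F}$ for all $k$ by induction.

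For the truncation error, set $f_k=\norm{C_k-B_k}_{\F}$, so $f_0=0$. Using linearity of $\Trunc_{\ell}$ and $\tau_{\dt}(B_k)=B_{k+1}$, write $C_{k+1}-B_{k+1}=\Trunc_{\ell}\big(\tau_{\dt}(C_k)-\tau_{\dt}(B_k)\big)+\big(\Trunc_{\ell}(B_{k+1})-B_{k+1}\big)$; taking norms and using nonexpansiveness of $\Trunc_{\ell}$ and the isometry property of $\tau_{\dt}$ gives $f_{k+1}\leq f_k+\norm{\Trunc_{\ell}(B_{k+1})-B_{k+1}}_{\F}$. By Proposition~\ref{prop:truncationmap}(ii) the last term equals $\min_{\deg P\leq\ell}\norm{P-A((k+1)\dt)}_{\F}\leq\best_{\F}(H,A,t,\ell)$ since $(k+1)\dt\leq t$. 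Hence $f_{k+1}\leq f_k+\best_{\F}(H,A,t,\ell)$, so $\norm{C_n-A(t)}_{\F}=f_n\leq n\,\best_{\F}(H,A,t,\ell)=\ceil{t/\dt}\,\best_{\F}(H,A,t,\ell)$. I expect this decomposition to be the main obstacle: the naive move of bounding the one-step truncation error by routing through the best approximant of $\tau_{\dt}(C_k)$ makes the distance $f_k$ reappear a second time, giving $f_{k+1}\leq 2f_k+\best_{\F}$ and an exponential-in-$n$ blow-up; splitting $\Trunc_{\ell}(D)-E$ as $\Trunc_{\ell}(D-E)+(\Trunc_{\ell}(E)-E)$ (the second term measuring the best approximant of the \emph{exact} observable $B_{k+1}$) is precisely what keeps the recursion additive.

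For the accumulated Trotter error, set $g_k=\norm{A_k-C_k}_{\F}$, so $g_0=0$, and from $A_{k+1}-C_{k+1}=\Trunc_{\ell}\big(\trotterstep{\dt}(A_k)-\tau_{\dt}(C_k)\big)$ deduce $g_{k+1}\leq\norm{\trotterstep{\dt}(A_k)-\tau_{\dt}(A_k)}_{\F}+\norm{\tau_{\dt}(A_k)-\tau_{\dt}(C_k)}_{\F}=\norm{\trotterstep{\dt}(A_k)-\tau_{\dt}(A_k)}_{\F}+g_k$. Since $A_k$ has degree at most $\ell$ (it is $A$ or an output of $\Trunc_{\ell}$) and $\norm{A_k}_{\F}\leq\norm{A}_{\F}$, the per-step Trotter error bound of Section~\ref{sec:MP-ingredients} (Theorem~\ref{thm:trotter-error}, applied with $d=\ell$ and $G\leq4\Delta$ from Proposition~\ref{prop:partitionHamiltonian}) controls the first term by $O\big(\Delta^2(\ell+2)^2\dt^2\big)\norm{A}_{\F}$. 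Thus $g_n\leq n\cdot O\big(\Delta^2(\ell+2)^2\dt^2\big)\norm{A}_{\F}$, and since $\ceil{t/\dt}\dt^2\leq t\dt+\dt^2\leq 2t\dt$ (the claim is trivial when $\dt>t$), tracking constants gives the first term $34\,t\,\dt\,\Delta^2(\ell+2)^2\norm{A}_{\F}$. Adding the two estimates yields \eqref{eq:bounderrppell}.

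For part~(i), observe that one Trotter step applies $\tau^1_{\dt},\dots,\tau^G_{\dt}$ in turn to an operator of degree at most $\ell$ represented by at most $\binom{N}{\leq\ell}=N^{O(\ell)}$ Majorana monomials, then applies $\Trunc_{\ell}$ at no extra cost. By Proposition~\ref{prop:commutingterms} each $\tau^g_{\dt}$ at most triples the degree of each monomial and multiplies its number of monomials by at most $2^{(\text{current degree})}$; since $G\leq4\Delta$ is a fixed parameter, after all $G$ sub-steps the degree is at most $3^{4\Delta}\ell$ and the total number of monomials is still $N^{O(\ell)}$ (with the hidden constant depending on $\Delta$), and the per-sub-step bookkeeping — scanning the $O(N\Delta)$ Hamiltonian terms and applying \eqref{eq:formula_conjugation_gamma} — is polynomial in $N$ and absorbed. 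Multiplying by the $\ceil{t/\dt}=O(t/\dt)$ steps gives time and memory complexity $\frac{t}{\dt}\cdot N^{O(\ell)}$, as claimed.
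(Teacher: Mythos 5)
Your proof is correct and, underneath the auxiliary trajectory $C_k=(\Trunc_{\ell}\circ\tau_{\dt})^k(A)$, rests on exactly the same three per-step moves as the paper's proof: the split $\Trunc_{\ell}(D)-E=\Trunc_{\ell}(D-E)+(\Trunc_{\ell}(E)-E)$ to keep the truncation recursion additive, unitarity of $\tau_{\dt}$ to collapse the distance term, and the degree-$\ell$ per-step Trotter bound of Theorem~\ref{thm:trotter-error}. The paper instead runs a single recursion in which both the Trotter-error term and the truncation-error term are peeled off at every step (Eq.~\eqref{eq:recursive-ineq}); your decoupled version is a cleaner way to see where each contribution comes from, but delivers the same bound. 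Two small bookkeeping points: first, to get the stated $\Delta^2$ prefactor rather than $\Delta^4$, you must invoke the improved bound~\eqref{eq:alphatrotterbnd22} from the Remark following Theorem~\ref{thm:trotter-error} (this uses the disjoint-support decomposition with $G\le 4\Delta$, giving $\kappa=2(\ell+2)^2(G^2+\Delta^2)\le 34\Delta^2(\ell+2)^2$), not the general bound~\eqref{eq:alphatrotterbnd11}. Second, your estimate $\ceil{t/\dt}\dt^2\le 2t\dt$ loses a factor of two and would yield $68$ rather than $34$; the paper instead observes that, with $s=t-\lfloor t/\dt\rfloor\dt$ the length of the truncated last step, the accumulated Trotter error is $\kappa\,(s^2+\lfloor t/\dt\rfloor\dt^2)\|A\|_\F\le\kappa\,t\dt\,\|A\|_\F$ since $s\le\dt$ — which matches the remark you already made at the outset that the shortened last step only helps, so the fix is just to carry that through the constant tracking.
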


\subsection{Ingredients}
\label{sec:MP-ingredients}
We start by proving two results, introduced in Section \ref{sec:ingredients}, that will play an important role in the proof of Theorem \ref{thm:majorana-prop}.

\subsubsection{Bound on the commutator's Frobenius norm} 

\begin{theorem}
\label{thm:commutatorfrobeniusbound}
For any observable $A \in \cA$, we have
\[
\|[H,A]\|_{\F} \leq 2 \Delta \sqrt{\deg A(\deg A+2)} \|A\|_{\F}.
\]
\end{theorem}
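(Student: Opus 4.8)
The plan is to expand $A$ in the Majorana basis, write $[H,A] = \sum_{X \in \cX} [h_X, A]$, and bound the Frobenius norm of each piece while carefully tracking which Majorana monomials overlap in the expansion. First I would reduce to the case where $A = \gamma_T$ is a single Majorana monomial is \emph{not} enough, because the commutator map is not diagonal in the Majorana basis and cross terms between different $\gamma_T$ could interfere; instead I would work directly with $A = \sum_T a_T \gamma_T$. The key structural fact (Proposition \ref{prop:majoranabasics}, items 1--3) is that $[\gamma_X, \gamma_T]$ is either $0$ or $\pm 2\i^{(\cdots)} \gamma_{X \triangle T}$, a \emph{single} Majorana monomial of unit Frobenius norm. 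So $[H,A] = \sum_{X \in \cX} \sum_T a_T h_X \cdot (\pm 2) \gamma_{X \triangle T}$, and I want to regroup this sum by the output monomial $\gamma_Z$ and apply \eqref{eq:frobA}.

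The cleaner route, which I would actually pursue, is a Cauchy–Schwarz / operator-norm-of-a-bipartite-graph argument. View the linear map $\cL = [H, \cdot]$ restricted to the span of degree-$\le d$ monomials. For the Frobenius norm I want $\|\cL\|_{\op}$ (operator norm as a map on $(\cA, \|\cdot\|_{\F})$). Since $\cL$ is self-adjoint with respect to the trace inner product, $\|\cL\|_{\op}$ equals its spectral radius, and I can bound it by $\sqrt{\|\cL\|_{1\to1} \cdot \|\cL\|_{\infty\to\infty}}$ in the Majorana basis (Schur test), or more simply: for any unit-Frobenius-norm $A$ of degree $\le d$, write $\|[H,A]\|_\F \le \sum_{X \in \cX}\|[h_X,A]\|_\F$ is too lossy; instead bound $\|[H,A]\|_\F^2$ by expanding. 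Writing $[H,A] = \sum_Z c_Z \gamma_Z$, for a fixed output set $Z$ the coefficient $c_Z$ receives a contribution from each pair $(X,T)$ with $X \in \cX$, $a_T \neq 0$, $X \triangle T = Z$ and $|X \cap T|$ odd. For fixed $X$ this forces $T = X \triangle Z$, so $c_Z = \sum_{X \in \cX} (\pm 2) h_X a_{X \triangle Z}$, a sum of at most $\Delta(\deg A)$ terms by Proposition \ref{prop:bounded-degree-property} (only $X$ with $[h_X,\gamma_{X\triangle Z}]\neq 0$ contribute, and — crucially — each such $X$ must overlap a fixed underlying monomial). By Cauchy–Schwarz, $|c_Z|^2 \le 4 \cdot (\text{number of active }X) \cdot \sum_{X \text{ active}} |h_X|^2 |a_{X\triangle Z}|^2 \le 4\Delta d \sum_X |h_X|^2|a_{X\triangle Z}|^2$ using $|h_X|\le 1$. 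Summing over $Z$ and swapping the order of summation, $\sum_Z \sum_X |a_{X \triangle Z}|^2 = \sum_X \sum_Z |a_{X\triangle Z}|^2 = \sum_X \|A\|_\F^2$, but that overcounts by $|\cX|$; the fix is to note that for each $Z$, the number of active $X$ is also at most $\Delta d$, and to organize the double sum so that the outer Cauchy–Schwarz multiplier $\Delta d$ and the inner counting $\Delta d$ combine to give $4 (\Delta d)(\Delta d) \cdot \text{(something)}$ — I need to be careful here to land on $4\Delta^2 d(d+2)$ rather than $4\Delta^2 d^2$.

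The main obstacle, and where the factor $d+2$ versus $d$ comes from, is the counting of how many Hamiltonian terms $X$ can have $X \triangle Z = T$ with $|T| \le d$ for a \emph{fixed} $Z$ with $|Z|$ possibly as large as $d + 2$: one must count pairs $(X, T)$ with $T = X \triangle Z$, $|T|\le d$, which by $|T| \ge |Z| - |X| \ge |Z| - 4$ constrains $|Z| \le d+4$, and then bound the number of active $X$ for such $Z$ by something like $\Delta |T \cap \text{supp}| \le \Delta d$ and separately $\Delta |Z| \le \Delta(d+2)$ — getting the tight constant requires splitting the commutator $[h_X, \gamma_T] = \sum_{j}\gamma\cdots[h_X,\gamma_{p_j}]\cdots$ over the $\le d$ Majorana modes of $T$ as in Proposition \ref{prop:bounded-degree-property}, and matching it against a dual sum over the $\le d+2$ modes of the output. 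I expect the bookkeeping to reduce to: $\|\cL\|_{\op}^2 \le \|\cL\|_{\F\to\F, \text{row}} \cdot \|\cL\|_{\F\to\F,\text{col}}$ where the "row" bound (indexed by input $T$, $|T|\le d$) gives a factor $2\Delta d$ and the "column" bound (indexed by output $Z$, $|Z| \le d+2$) gives $2\Delta(d+2)$, whence $\|\cL\|_{\op} \le \sqrt{(2\Delta d)(2\Delta(d+2))} = 2\Delta\sqrt{d(d+2)}$. Verifying that the Schur-test weights can be taken uniform (all $1$) because $|h_X| \le 1$ and each $[\gamma_X,\gamma_T]$ has a single unit-norm output is the crux; the rest is routine.
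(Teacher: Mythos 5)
Your overall plan is sound and in fact the Schur-test route you sketch at the end does give exactly the stated bound; the reasoning you use to get there, however, goes through a confused middle section that would not survive being written out.

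The parts that work: you correctly identify the key structural fact that the commutator map is ``sparse'' in the Majorana basis, namely each $[\gamma_X,\gamma_Y]$ is either zero or $\pm 2\i^{(\cdots)}\gamma_{X\triangle Y}$, a single unit-norm monomial, so the matrix $M_{Z,Y}$ of $[H,\cdot]$ has entries bounded by $2$ in absolute value and is supported on pairs $(Z,Y)$ with $Y\triangle Z\in\cX$ and $|(Y\triangle Z)\cap Y|$ odd. Restricted to inputs $Y$ with $|Y|\le d$, the row sum (fix $Y$, sum over $Z$) is at most $2\cdot|\{X\in\cX:[h_X,\gamma_Y]\neq 0\}|\le 2\Delta|Y|\le 2\Delta d$ by Proposition~\ref{prop:bounded-degree-property}, and the column sum (fix $Z$, sum over $Y$ with $|Y|\le d$) is at most $2\cdot|\{X\in\cX:[h_X,\gamma_Z]\neq 0\}|\le 2\Delta|Z|$, which is nonzero only when $|Z|=|X\triangle Y|\le|Y|+2\le d+2$, hence at most $2\Delta(d+2)$. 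The Schur test (with uniform weights, which suffices here precisely because $|h_X|\le 1$ caps each entry at $2$) then gives $\|M\|_{2\to2}\le\sqrt{(2\Delta d)(2\Delta(d+2))}=2\Delta\sqrt{d(d+2)}$, which is exactly the theorem. This is a slightly different packaging from the paper, which instead applies Cauchy--Schwarz to each output coefficient $c_Z$ with the factor $|\cP(Z)|\le\Delta|Z|$, swaps the resulting double sum to be indexed by input $Y$, and then bounds $\sum_{X:|X\cap Y|\text{ odd}}|X\triangle Y|\le\Delta|Y|(|Y|+2)$. The two are essentially the same counting argument; the Schur test is arguably cleaner because the $d$ vs.\ $d+2$ asymmetry appears automatically as row-sum vs.\ column-sum rather than needing the specific bound $|X\triangle Y|\le|Y|+2$ inside the swapped sum.

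Where you go wrong in the middle: the claim that $|\cP(Z)|\le\Delta\deg A$ (justified by appealing to Proposition~\ref{prop:bounded-degree-property} applied to $\gamma_{X\triangle Z}$) is not correct as stated, because the set $T=X\triangle Z$ varies with $X$, so you are not counting terms commuting with a single fixed monomial. The correct bound for fixed $Z$ is $|\cP(Z)|\le\Delta|Z|$ (apply Proposition~\ref{prop:bounded-degree-property} to $\gamma_Z$, noting $|X\cap(X\triangle Z)|$ odd iff $|X\cap Z|$ odd), and $|Z|$ can be as large as $d+2$. The subsequent remark that ``$\sum_Z\sum_X|a_{X\triangle Z}|^2=\sum_X\|A\|_\F^2$ overcounts by $|\cX|$'' is also off: the inner sum is only over active $X$ and the swap should be reindexed by $Y=X\triangle Z$, landing on $\sum_Y|a_Y|^2\cdot|\{X:|X\cap Y|\text{ odd}\}|\le\Delta d\|A\|_\F^2$. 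If you had pushed through with the correct $\Delta|Z|$ count and this swap you would recover the paper's argument; if you instead carry out the Schur-test sketch you gave at the end, you also get the theorem. Either way, be honest in the write-up that the ``$\Delta d$ for $|\cP(Z)|$'' shortcut needs replacing, and pick one of the two valid routes rather than interpolating between them.
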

\begin{proof}
Write $A = \sum_{Y \subset [N]} a_Y \gamma_Y$, and $H = \sum_{X \in \cX} \hc_X \gamma_X$, where $|\hc_X| \leq 1$ for all $X \in \cX$. From Proposition \ref{prop:majoranabasics} we know that for $X \in \cX$ and arbitrary $Y \subset [N]$, $[\gamma_X,\gamma_Y] = 0 \text{ if } |X \cap Y|$ even, and otherwise $[\gamma_X,\gamma_Y] = 2 d_{X,Y} \gamma_{X\Delta Y}$ for some unit complex number $d_{X,Y} \in \CC$. Thus we have
\[
\begin{aligned}
[H,A] = 2 \sum_{\substack{X \in \cX, Y \subset [N]\\ |X\cap Y|\text{ odd}}} \hc_X a_Y d_{X,Y} \gamma_{X \triangle Y} = 2 \sum_{Z \subset [N]} \left(\sum_{\substack{X \in \cX,Y \subset [N]\\ |X\cap Y|\text{ odd},\, X \triangle Y = Z}} \hc_X a_Y d_{X,Y}\right) \gamma_{Z}.
\end{aligned}
\]
We can then write
\begin{equation}
    \label{eq:H,Afrobnd1}
\begin{aligned}
\|[H,A]\|^2_{\F} &= 4 \sum_{Z \subset [N]} \left|\sum_{(X,Y) \in \cP(Z)} \hc_X a_Y d_{X,Y}\right|^2
\end{aligned}
\end{equation}
where, for a fixed $Z \subset [N]$, we let
\[
\cP(Z) = \{(X,Y) : X \in \cX, Y \subset [N] \text{ s.t. } X \triangle Y = Z \text{ and } |X \cap Y| \text{ odd}\}.
\]
By the Cauchy-Schwarz inequality, we have for any $Z$, and using the fact that $|\hc_X d_{X,Y}| \leq 1$
\begin{equation}
    \label{eq:CS}
\left|\sum_{(X,Y) \in \cP(Z)} \hc_X a_Y d_{X,Y}\right|^2 \leq |\cP(Z)| \sum_{(X,Y) \in \cP(Z)} |a_Y|^2.
\end{equation}
Note that
\[
\begin{aligned}
|\cP(Z)| &= |\{X \in \cX : |X\cap (X \triangle Z)| \text{ odd}\}|\\
&\overset{(a)}{=} |\{X \in \cX : |X \cap Z| \text{ odd}\}|\\
&= |\{X \in \cX : [\gamma_X, \gamma_Z] \neq 0\}|\\
&\overset{(b)}{\leq} \Delta |Z|
\end{aligned}
\]
where in (a) we used the fact that $|X\cap (X\triangle Z)| = |X| - |X\cap Z|$ and that $|X|$ is even, and in (b) we used Proposition \ref{prop:bounded-degree-property}.
Plugging \eqref{eq:CS} in \eqref{eq:H,Afrobnd1} we get
\[
\begin{aligned}
\|[H,A]\|_{\F}^2 &\leq 4\sum_{Z \subset [N]} \Delta |Z| \sum_{(X,Y) \in \cP(Z)} |a_Y|^2\\
&= 4 \Delta \sum_{Y \subset [N]} |a_Y|^2 \sum_{X \in \cX, |X\cap Y| \text{ odd}} |X \triangle Y|\\
&\leq 4 \Delta^2 \sum_{Y \subset [N]} |a_Y|^2 |Y|(|Y|+2)\\
&\leq 4 \Delta^2 (\deg A) (\deg A+2) \|A\|^2_{\F}
\end{aligned}
\]
where we used the fact that for $X \in \cX$, and $a_Y \neq 0$, $|X\triangle Y| \leq |Y|+2 \leq \deg A + 2$, and that  $|\{X \in \cX : |X \cap Y| \text{ odd}\}| \leq \Delta |Y| \leq \Delta \deg A$ using Proposition \ref{prop:bounded-degree-property}.
This concludes the proof.
\end{proof}

\subsubsection{Trotter error bound}
\label{app:trotter}

The second result deals with the Trotter error bound for low-degree observables.
\begin{theorem}
\label{thm:trotter-error}
Let $\tau_t(A) = e^{\i H t} A e^{-\i H t}$, and let $\trotterstep{t}(A)$ be the Trotter approximation of $\tau_t$ as defined in \eqref{eq:deftautrotter}, for a given decomposition $H = H^1 + \dots + H^G$ of the Hamiltonian. Then for any $A \in \cA$ with $\deg A = d$, we have
\begin{equation}
\label{eq:alphatrotterbnd11}
\|\tau_t(A) - \trotterstep{t}(A)\|_{\F} \leq 2(\Delta \cdot t\cdot (d+2))^2\left(G^2+ 1\right).
\end{equation}
\end{theorem}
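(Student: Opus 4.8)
The plan is to bound the Trotter error by integrating the derivative of the interpolation between $\tau_t$ and $\trotterstep{t}$, reducing everything to commutator bounds that are controlled via Theorem~\ref{thm:commutatorfrobeniusbound}. Write $\tau_t = e^{\i H t}(\cdot)e^{-\i H t}$ and $\trotterstep{t} = \tau^G_t \circ \cdots \circ \tau^1_t$, where $\tau^g_t = e^{\i H^g t}(\cdot)e^{-\i H^g t}$. Both maps are unitary conjugations, hence Frobenius-norm isometries, and both fix the identity and agree to first order in $t$. The standard approach is the telescoping/variation-of-constants identity
\[
\tau_t - \trotterstep{t} = \int_0^t \frac{d}{ds}\Big( \tau_{t-s} \circ (\text{partial Trotter up to time } s) \Big)\, ds,
\]
but since the degree can grow along the Trotter sub-steps we must be careful to apply the commutator bound at the right place. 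Concretely, I would first establish the single-step identity for the difference between $e^{\i H t}(\cdot) e^{-\i H t}$ and the ordered product, writing $H = \sum_g H^g$ and peeling off one group at a time: for each $g$, the error introduced by replacing $e^{\i(H^g+\cdots+H^G)t}$-conjugation by $e^{\i H^g t}$-conjugation followed by the rest is an integral over $[0,t]$ of a double commutator of the form $[\,[H^g, H^{>g}(s)], \cdot\,]$ conjugated by isometries, where $H^{>g}(s)$ is a Heisenberg-evolved version of $H^{g+1}+\cdots+H^G$.

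The key structural point is that each error term, after stripping the norm-preserving conjugations, has the shape $\int_0^s [\,[\,\widetilde{H}^{g}, \widetilde{H}^{g'}\,],\, A'\,]\,dr$ for some $g < g'$, where $A'$ is a conjugate of $A$ (hence $\|A'\|_\F = \|A\|_\F$ and $\deg A' $ may have grown) and $\widetilde{H}^g,\widetilde{H}^{g'}$ are conjugates of $H^g, H^{g'}$. I would bound $\|[\,[\widetilde H^g,\widetilde H^{g'}],A'\,]\|_\F$ in two stages. First, $[\widetilde H^g, \widetilde H^{g'}]$ is itself a sum of (at most quartic $\times$ quartic) Majorana terms; using the disjoint-support structure inside each $H^g$ and the $\Delta$-sparsity, I would argue $[H^g, H^{g'}]$ (and its conjugates, which only reshuffle supports in a degree-nonincreasing-in-the-relevant-sense way — here one must be careful, since conjugation by the other $e^{\i H^{g''}r}$ does change supports) can be written so that Theorem~\ref{thm:commutatorfrobeniusbound} applies. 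Rather than tracking conjugated Hamiltonians, the cleaner route is to \emph{not} Heisenberg-evolve the Hamiltonians at all: choose the interpolation so that only $A$ gets conjugated, and the inner object is a fixed operator $\sum_{g<g'}[H^g,H^{g'}]$ of degree at most $6$. Then the double commutator is $[\sum_{g<g'}[H^g,H^{g'}], A'']$ with $\|A''\|_\F = \|A\|_\F$, and crucially $\deg A''$ is controlled by Proposition~\ref{prop:commutingterms}: conjugating $A$ by the $\tau^g$'s blows up the degree to at most $3^G \deg A$, which is too lossy. So instead I would localize the argument: apply the second-order Taylor-with-integral-remainder to each pair separately in a way that the conjugations acting on $A$ before the commutator is taken involve only \emph{one} $\tau^g$ step (degree $\le 3(d+2)$-ish), not all $G$ of them.

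Let me restructure: the honest plan is a two-level expansion. At the outer level, $\trotterstep{t}$ versus $\tau_t$ differ by $\binom{G}{2}$-many pairwise reordering errors, each of size $O(t^2)$; telescoping gives $\tau_t - \trotterstep t = \sum_{g} E_g$ where $E_g$ accounts for moving $e^{\i H^g t}$ past $e^{\i(H^{g+1}+\cdots)t}$. At the inner level, $E_g = \int_0^t \int_0^r U_{r,s}\big([[H^g, H^{>g}], A]\big)\,ds\,dr$ up to conjugations $U_{r,s}$ that are Frobenius isometries, where $A$ here carries degree at most $d$ (the conjugation happens \emph{outside} the commutator). Then $\|E_g\|_\F \le \tfrac{t^2}{2}\,\|[[H^g,H^{>g}],A]\|_\F$. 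Now $[H^g, H^{>g}] = \sum_{g'>g}[H^g, H^{g'}]$ is a Hamiltonian-like operator: each $[h_X, h_Y]$ with $X \in \cX^g, Y\in\cX^{g'}$ is (up to a unit scalar) a Majorana monomial on $X\triangle Y$ of degree $\le 6$, and by $\Delta$-sparsity each mode appears in $O(\Delta^2)$ such terms with coefficient bounded by a constant; a Cauchy–Schwarz argument identical in spirit to the proof of Theorem~\ref{thm:commutatorfrobeniusbound} then gives $\|[[H^g,H^{>g}],A]\|_\F \le O(\Delta^2 (d+2)\sqrt{(d+2)(d+4)})\|A\|_\F = O(\Delta^2 (d+2)^2\|A\|_\F)$, and summing the $G$ terms $E_g$ and tracking constants yields the claimed $2(\Delta t (d+2))^2(G^2+1)$. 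The main obstacle is precisely the bookkeeping in the inner-level expansion: arranging the telescoping so that the degree-$d$ operator $A$ sits \emph{inside} all the relevant double commutators (so that the $3^G$ blowup never appears) while the only surviving $t$-dependence is the clean $\int_0^t\int_0^r ds\,dr = t^2/2$, and then making sure the effective "commutator-with-$[H^g,H^{>g}]$" operator norm is bounded uniformly in $N$ — this requires re-running the Cauchy–Schwarz/sparsity counting from Theorem~\ref{thm:commutatorfrobeniusbound} for the degree-$\le 6$, $O(\Delta^2)$-sparse operator $[H^g,H^{>g}]$, which is routine but where all the constants (and the $G^2$ versus $G$ scaling) are decided.
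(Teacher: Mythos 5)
Your high-level strategy is on the right track --- push the unitary conjugations (Frobenius isometries) to the outside so they can be dropped, reduce to second-order terms involving pairs $H^g, H^{g'}$ acting on $A$ of degree $d$, and control those via $\Delta$-sparsity --- but the execution has gaps that the paper's proof avoids by taking a somewhat different route.

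The paper proves an explicit Taylor-remainder lemma for the ordered product $e^{t\cL_G}\cdots e^{t\cL_1}$, where $\cL_i = \i[H^i,\cdot]$: the remainder is a sum of terms of the form (isometry)$\,\circ\,\cL_i\bigl((t-\tau)\cL_i + t\sum_{j<i}\cL_j\bigr)(A)$ integrated over $\tau\in[0,t]$. Subtracting the analogous expansion of $e^{t\cL}$ and using unitarity yields
\[
\|\tau_t(A)-\trotterstep{t}(A)\|_\F \le \frac{t^2}{2}\Bigl(\|\cL^2(A)\|_\F + \sum_i\|\cL_i^2(A)\|_\F + 2\sum_{j<i}\|\cL_i\cL_j(A)\|_\F\Bigr),
\]
and each $\|\cL_i\cL_j(A)\|_\F$ is bounded by applying Theorem~\ref{thm:commutatorfrobeniusbound} \emph{twice}: once to $\cL_j(A)$ (which has degree $\le d+2$), then to $\cL_i(\cL_j(A))$, giving $4\Delta^2(d+2)^2\|A\|_\F$. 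The $(G^2+1)$ then comes from counting the terms in the double sum. Crucially, no combined operator $[H^g,H^{>g}]$ is ever formed, so Theorem~\ref{thm:commutatorfrobeniusbound} applies verbatim without re-deriving it for a degree-$6$, $O(\Delta^2)$-sparse object.

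Your proposal differs at both stages, and that is where the gaps lie. First, you assert $E_g = \int_0^t\int_0^r U_{r,s}\bigl([[H^g,H^{>g}],A]\bigr)\,ds\,dr$ with $U_{r,s}$ isometries sitting outside, but this identity is never derived and it is not what a Taylor remainder for the product of exponentials naturally produces: the natural integrand involves $[H^i,[\text{something},A]]$, not the symmetrized $[[H^i,H^j],A]$, and the two are related only through the Jacobi identity, which reshuffles where the isometries sit. Second, the planned Cauchy--Schwarz re-run treats $[H^g,H^{>g}]$ as a sparse Hamiltonian ``with coefficient bounded by a constant''; this premise can fail, since distinct pairs $(X,Y)$ with $X\in\cX^g$, $Y\in\cX^{g'}$ may collide on the same $X\triangle Y$ and accumulate coefficients, so one must keep the sum unfolded --- at which point one is re-proving Theorem~\ref{thm:commutatorfrobeniusbound} from scratch rather than invoking it. Finally, you explicitly defer the $G^2$ bookkeeping and constant tracking as ``routine but \ldots\ where all the constants are decided,'' which is precisely the content of the stated bound. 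The paper's two-application approach sidesteps all three issues and is the reason the final constants come out cleanly.
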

\begin{remark}
The bound above holds for an arbitrary decomposition $H = H^1 + \dots + H^G$, i.e., it does not use the fact that the monomial terms inside each $H^i$ have disjoint supports. If the latter holds, then one can prove the following improved bound:
\begin{equation}
\label{eq:alphatrotterbnd22}
\|\tau_t(A) - \trotterstep{t}(A)\|_{\F} \leq 2( t\cdot (d+2))^2\left(G^2+ \Delta^2\right).
\end{equation}
\end{remark}
\begin{proof}[Proof of Theorem \ref{thm:trotter-error}]
We first prove the following general lemma, an application of Taylor's theorem with integral remainder.

\begin{lemma}
For any finite-dimensional super-operators $\cL_1,\ldots,\cL_G$,
\be
\label{eq:taylor-remainder-trotter}
\begin{aligned}
e^{t\cL_G} \cdots e^{t\cL_1} &= \id + t (\cL_G+\cdots+\cL_1)\\
& \quad + \sum_{i=1}^G \int_{0}^{t} \di \tau e^{t\cL_G} \cdots e^{t\cL_{i+1}} e^{\tau \cL_i} \cL_i \left( (t-\tau) \cL_i + t \sum_{j=1}^{i-1} \cL_{j} \right).
\end{aligned}
\ee
\end{lemma}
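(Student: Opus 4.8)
The plan is to prove the identity by induction on $G$. The base case $G=1$ is exactly Taylor's theorem with integral remainder applied to $\tau \mapsto e^{\tau \cL_1}$ at order one: writing $f(\tau) = e^{\tau\cL_1}$, we have $f(t) = f(0) + t f'(0) + \int_0^t (t-\tau) f''(\tau)\,\di\tau$, and since $f''(\tau) = e^{\tau\cL_1}\cL_1^2$, this gives $e^{t\cL_1} = \id + t\cL_1 + \int_0^t \di\tau\, e^{\tau\cL_1}\cL_1 (t-\tau)\cL_1$, which matches the claimed formula (the inner sum $\sum_{j=1}^{0}$ is empty).

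For the inductive step, suppose the identity holds for $G-1$ operators $\cL_1,\ldots,\cL_{G-1}$. Write $e^{t\cL_G}\cdots e^{t\cL_1} = e^{t\cL_G}\bigl(e^{t\cL_{G-1}}\cdots e^{t\cL_1}\bigr)$. I would expand only the leftmost factor $e^{t\cL_G}$ via the first-order Taylor remainder: $e^{t\cL_G} = \id + t\cL_G + \int_0^t \di\tau\, e^{\tau\cL_G}\cL_G(t-\tau)\cL_G$. Substituting this and then substituting the inductive hypothesis for $e^{t\cL_{G-1}}\cdots e^{t\cL_1}$, one collects terms: the product of the "$\id$" parts plus the cross terms $t\cL_G \cdot \id$ and $\id \cdot \sum_{i=1}^{G-1}(\text{integral remainder})$ must be reorganized. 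The key algebraic observation is that $t\cL_G$ multiplying the zeroth-order part $\id + t(\cL_{G-1}+\cdots+\cL_1)$ from the inductive hypothesis produces exactly $t\cL_G + t\cL_G \cdot t\sum_{j=1}^{G-1}\cL_j$; the first piece contributes to the desired linear term, and the second piece is precisely the $i=G$ summand of the remainder (with $\tau$-integral trivialized because the $e^{\tau\cL_G}$ and $(t-\tau)\cL_G$ factors are absent when we only keep the $\id + t\cL_G$ part of $e^{t\cL_G}$ — so actually one has to be slightly careful and keep the full remainder for $e^{t\cL_G}$).

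Let me restate the cleaner bookkeeping: after writing $e^{t\cL_G} = \id + t\cL_G + R_G$ with $R_G = \int_0^t\di\tau\, e^{\tau\cL_G}\cL_G(t-\tau)\cL_G$, and writing $e^{t\cL_{G-1}}\cdots e^{t\cL_1} = \id + t\Sigma_{G-1} + \cR_{G-1}$ where $\Sigma_{G-1} = \sum_{j=1}^{G-1}\cL_j$ and $\cR_{G-1} = \sum_{i=1}^{G-1}\int_0^t\di\tau\, e^{t\cL_{G-1}}\cdots e^{t\cL_{i+1}}e^{\tau\cL_i}\cL_i((t-\tau)\cL_i + t\sum_{j=1}^{i-1}\cL_j)$, the product is
\[
(\id + t\cL_G + R_G)(\id + t\Sigma_{G-1} + \cR_{G-1}) = \id + t(\cL_G + \Sigma_{G-1}) + \bigl[R_G + t\cL_G\, t\Sigma_{G-1} + t\cL_G\cR_{G-1} + R_G\, t\Sigma_{G-1} + R_G\cR_{G-1}\bigr].
\]
So I need the bracketed expression to equal $\sum_{i=1}^G \int_0^t \di\tau\, e^{t\cL_G}\cdots e^{t\cL_{i+1}} e^{\tau\cL_i}\cL_i\bigl((t-\tau)\cL_i + t\sum_{j=1}^{i-1}\cL_j\bigr)$. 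The $i=G$ term of the target is $\int_0^t\di\tau\, e^{\tau\cL_G}\cL_G((t-\tau)\cL_G + t\Sigma_{G-1}) = R_G + \int_0^t\di\tau\, e^{\tau\cL_G}\cL_G\, t\Sigma_{G-1}$, and since $\int_0^t e^{\tau\cL_G}\di\tau \cdot \cL_G = e^{t\cL_G} - \id = t\cL_G + R_G$, the second piece is $(t\cL_G + R_G)t\Sigma_{G-1} = t\cL_G t\Sigma_{G-1} + R_G t\Sigma_{G-1}$. The $i \leq G-1$ terms of the target are obtained from $\cR_{G-1}$ by prepending a factor $e^{t\cL_G}$ to each summand, i.e. they equal $e^{t\cL_G}\cR_{G-1} = (\id + t\cL_G + R_G)\cR_{G-1}$. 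Summing: $i=G$ contributes $R_G + t\cL_G t\Sigma_{G-1} + R_G t\Sigma_{G-1}$ and $i\leq G-1$ contributes $\cR_{G-1} + t\cL_G\cR_{G-1} + R_G\cR_{G-1}$, and the total is exactly the bracket, since $\cR_{G-1}$ by itself does not appear in the bracket — wait, it does not, so this identity is off by $\cR_{G-1}$.

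I expect the resolution of that last discrepancy to be the main obstacle, and it is almost certainly a sign that the correct induction expands $e^{t\cL_G}$ first but then must also recognize that the bracket, when the target is written out, does contain a "bare" $\cR_{G-1}$ through a different grouping — concretely, one should double-check whether the intended statement has the factors $e^{t\cL_G}\cdots e^{t\cL_{i+1}}$ (which for $i=G-1$ is the empty product $\id$, contributing exactly $\cR_{G-1}$'s first summand, for $i=G-2$ is $e^{t\cL_G}$... no). So the cleanest route is: do \emph{not} split off the inductive hypothesis as $\id + t\Sigma_{G-1} + \cR_{G-1}$, but instead induct from the \emph{inside}, i.e. prove $e^{t\cL_G}\cdots e^{t\cL_1} = e^{t\cL_G}\cdots e^{t\cL_2}\bigl(\id + t\cL_1 + \int_0^t e^{\tau\cL_1}\cL_1(t-\tau)\cL_1\,\di\tau\bigr)$ and peel off one factor at a time from the right, each time using $e^{s\cL_i} = \id + \int_0^s e^{\tau\cL_i}\cL_i\,\di\tau$ (order zero!) for the telescoping and order-one Taylor only at the very end — but the displayed formula mixes orders, so the honest proof is: apply order-one Taylor to the function $s \mapsto e^{s\cL_G}\cdots e^{s\cL_1}$ directly. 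Its derivative at $s$ is $\sum_{i=1}^G e^{s\cL_G}\cdots e^{s\cL_{i+1}}\cL_i e^{s\cL_i}\cdots e^{s\cL_1}$; at $s=0$ this is $\sum_i \cL_i$. The integral remainder is $\int_0^t (t-\tau)\frac{\di^2}{\di s^2}\big|_{s=\tau}\,\di\tau$, but differentiating again is messy; the slicker choice is the first-order remainder $\int_0^t \frac{\di}{\di s}\big|_{s=\tau}\bigl[e^{s\cL_G}\cdots e^{s\cL_1}\bigr]\di\tau$ minus the linear part handled separately. I would therefore restructure the proof as: (1) fix the function $F(s) = e^{s\cL_G}\cdots e^{s\cL_1}$, (2) compute $F'(s) = \sum_{i=1}^G e^{s\cL_G}\cdots e^{s\cL_{i+1}} e^{s\cL_i}\cL_i (\text{stuff})$ — careful with ordering of $\cL_i$ and $e^{s\cL_i}$, they commute — (3) observe $F'(s) - F'(0)$ telescopes into the claimed sum with $e^{\tau\cL_i}\cL_i$ acting on $((t-\tau)\cL_i + t\sum_{j<i}\cL_j)$ after using $\frac{\di}{\di s}e^{s\cL_j} = \cL_j e^{s\cL_j}$ on the remaining factors, and (4) integrate $F(t) = F(0) + tF'(0) + \int_0^t (F'(\tau) - F'(0))\,\di\tau$. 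Step (3) — showing the telescoping of $F'(\tau) - F'(0)$ matches the stated integrand exactly, including the coefficients $(t-\tau)$ versus $t$ — is where all the care goes, and I would verify it by expanding $F'(\tau)$, subtracting $F'(0) = \sum_i \cL_i$, and re-grouping the difference of products via repeated use of $e^{\tau\cL}\cL - \cL = (e^{\tau\cL}-\id)\cL = \int_0^\tau e^{\sigma\cL}\cL^2\,\di\sigma$ type identities until it collapses to the displayed form.
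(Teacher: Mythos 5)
Your first approach (peel off $e^{t\cL_G}$, expand it to first order in $t$, and substitute the inductive hypothesis for $e^{t\cL_{G-1}}\cdots e^{t\cL_1}$) is essentially the paper's proof, and it does close. The "discrepancy off by $\cR_{G-1}$" is a bookkeeping slip on your part, not a gap in the argument: when you expanded
\[
(\id + t\cL_G + R_G)(\id + t\Sigma_{G-1} + \cR_{G-1})
\]
you omitted the cross-term $\id\cdot\cR_{G-1}=\cR_{G-1}$. Restoring it, the bracket is $\cR_{G-1} + t\cL_G\,t\Sigma_{G-1} + t\cL_G\cR_{G-1} + R_G + R_G\,t\Sigma_{G-1} + R_G\cR_{G-1}$, which is term-for-term equal to the sum of the $i=G$ contribution $R_G + t\cL_G\,t\Sigma_{G-1} + R_G\,t\Sigma_{G-1}$ and the $i\le G-1$ contribution $e^{t\cL_G}\cR_{G-1} = \cR_{G-1} + t\cL_G\cR_{G-1} + R_G\cR_{G-1}$ that you yourself computed. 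So the induction succeeds, and you should have stopped there.

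The paper does the same induction but organizes the Taylor expansion slightly differently: after multiplying the inductive hypothesis by $e^{t\cL_{G+1}}$ from the left, it expands the factor $e^{t\cL_{G+1}}$ to first order against the constant term of the hypothesis and only to zeroth order against the linear term. This produces exactly the new $i=G+1$ integrand $(t-\tau)\cL_{G+1}+t\Sigma_G$ without the intermediate $R$/$\cR$ regrouping; it is the same algebra as yours, just packaged so the cancellation is visible in one line.

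Your pivot to the "direct" route via $F(s)=e^{s\cL_G}\cdots e^{s\cL_1}$ was unnecessary, and as written it is not a proof: step (3), reorganizing $F'(\tau)-F'(0)$ into the stated integrand with the precise $(t-\tau)$ and $t$ weights, is exactly where the content of the lemma lies, and you never carry it out. You should delete that detour, correct the dropped $\cR_{G-1}$ term, and the inductive argument stands complete.
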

\begin{proof}
The proof is by induction on $G$. The identity is clearly true for $G=1$ using the Taylor's theorem with integral remainder. Assuming the identity holds for some $G$, we can write:
\[
\begin{aligned}
e^{t\cL_{G+1}}e^{t\cL_G} \cdots e^{t\cL_1} &= e^{t\cL_{G+1}} + t e^{t\cL_{G+1}}(\cL_G+\cdots+\cL_1) \\
&+\sum_{i=1}^G \int_0^t \di\tau e^{t\cL_{G+1}}\cdots e^{t\cL_{i+1}} e^{\tau \cL_i} \cL_i ((t-\tau) \cL_i + t\cL_{i-1} + \dots + t \cL_1).
\end{aligned}
\]
We expand $e^{t\cL_{G+1}}$ with Taylor's theorem to $p$-th order; in the first term for $p=1$, and $p=0$ for the second one to get:
\[
\begin{aligned}
e^{t\cL_{G+1}} \cdots e^{t\cL_1} &= \id + t\cL_{G+1} +\int_0^t\di\tau (t-\tau) e^{\tau \cL_{G+1}}\cL_{G+1}^2 \\
& \quad \quad +  t\left(\id+\int_{0}^{t} d\tau e^{\tau \cL_{G+1}} \cL_{G+1}\right)(\cL_G+\cdots+\cL_1)\\
&\quad \quad +\sum_{i=1}^G \int_0^t \di\tau e^{t\cL_{G+1}}\cdots e^{t\cL_{i+1}} e^{\tau \cL_i} \cL_i ((t-\tau) \cL_i + t\cL_{i-1} + \dots + t \cL_1)
\end{aligned}
\]
which, after rearranging, gives the formula of Eq.~\eqref{eq:taylor-remainder-trotter} for $G+1$. Hence this holds for general $G$.
\end{proof}

For any $i = 1,\ldots,G$, let $\cL_i = \i[H^i,\cdot]$, and let $\cL = \i[H,\cdot]$ so that $\cL = \cL_1 + \dots + \cL_G$. Note that $e^{t \cL}(A) = e^{\i H t} A e^{-\i Ht} = \tau_t(A)$ and $(e^{t\cL_G}\cdots e^{t\cL_1})(A) = \trotterstep{t}(A)$. Using the expansion \eqref{eq:taylor-remainder-trotter} of $e^{t\cL_G}\cdots e^{t\cL_1}$, and using
\[
e^{t \cL} = \id + t\cL + \int_{0}^{t} \di \tau (t-\tau) e^{\tau \cL} \cL^2
\]
we have, by the triangle inequality:
\begin{equation}
\label{eq:trotterdiff123}
\begin{aligned}
\norm{(e^{t\cL_G}\cdots e^{t\cL_1}- e^{t\cL})(A)}_{\F} &\leq \sum_{i=1}^{G} \int_{0}^{t}  \di \tau \norm{e^{t\cL_G} \cdots e^{t\cL_{i+1}} e^{\tau \cL_i} \cL_i \left( (t-\tau) \cL_i + t\cL_{i-1} + \dots + t\cL_1 \right)(A)}_{\F}\\
& \qquad + \int_{0}^{t} \di \tau (t-\tau) \norm{e^{\tau \cL} \cL^2(A)}_{\F}.
\end{aligned}
\end{equation}
Since $e^{s \cL_i}$ is a conjugation by a unitary for $s \in \RR$, we have, $\norm{e^{s \cL_i}(B)}_{\F} = \norm{B}_{\F}$ for any operator $B \in \cA$. Hence the integrand of the first term in \eqref{eq:trotterdiff123} can be bounded above by
\[
\begin{aligned}
\norm{(t-\tau) \cL_i^2(A) + t\cL_i (\cL_{i-1} + \dots + \cL_1)(A)}_{\F}
&\leq (t-\tau) \norm{\cL_i^2(A)}_{\F} + t \sum_{j=1}^{i-1} \norm{\cL_i \cL_j(A)}_{\F},
\end{aligned}
\]
and the integrand of the second term in \eqref{eq:trotterdiff123} is bounded by $(t-\tau) \norm{\cL^2(A)}_{\F}$. Putting things together, we get
\be
\label{eq:trotter-frobenius-sum}
\norm{(e^{t\cL_G}\cdots e^{t\cL_1}- e^{t\cL})(A)}_{\F} \leq \frac{t^2}{2} \left( \norm{\cL^2(A)}_{\F} + \sum_{i=1}^{G} \norm{\cL_i^2(A)}_{\F} + 2 \sum_{1\leq j < i \leq G} \norm{\cL_i \cL_j(A)}_{\F} \right).
\ee
Using Theorem \ref{thm:commutatorfrobeniusbound}, we know that
\[
\|\cL(A)\|_{\F} \leq 2 \Delta \sqrt{\deg A(\deg A+2)} \|A\|_{\F}.
\]
By applying the inequality again, and using the fact that $\deg \cL(A) \leq \deg A + 2$, we get
\[
\|\cL^2(A)\|_{\F} \leq 4 \Delta^2 (\deg A+2) \sqrt{\deg A(\deg A+4)} \|A\|_{\F} \leq 4 \Delta^2 (\deg A+2)^2 \|A\|_{\F}.
\]
The same bound applies to the quantities $\norm{\cL_i^2(A)}_{\F}$ and $\norm{\cL_i\cL_j(A)}_{\F}$. Hence we get at the end
\[
\norm{(e^{t\cL_G}\cdots e^{t\cL_1}- e^{t\cL})(A)}_{\F} \leq 2 t^2 \Delta^2 (\deg A + 2)^2 (1+G^2).
\]
This proves \eqref{eq:alphatrotterbnd11}.

To prove \eqref{eq:alphatrotterbnd22}, it suffices to note that if for a given $g \in [G]$, $X\cap Y = \emptyset$ for all $X,Y\in\cX^g$ then $H^g$ is $1$-sparse, i.e., 
\[
    \max_{x\in [N]} |\{X\in\cX^g: x \in X \text{ and }h_X\neq 0 \}| = 1.
\]
Hence, it follows from Theorem~\ref{thm:commutatorfrobeniusbound} that
\[
    \norm{\cL_i(A)}_\F \leq 2\sqrt{\deg A (\deg A + 2)}\norm{A}_\F,
\]
and so $\norm{\cL_i \cL_j(A)}_{\F} \leq 4 (\deg A+2)^2 \norm{A}_{\F}$ for all $i,j$.
Using this fact in \eqref{eq:trotter-frobenius-sum} gives \eqref{eq:alphatrotterbnd22}.
\end{proof}

\subsection{Proof of Theorem \ref{thm:majorana-prop}}
\label{sec:MP-thm-proof}
We are now ready to prove the main theorem on the analysis of Majorana Propagation.

\begin{proof}[Proof of Theorem~\ref{thm:majorana-prop}]

Let $t_0 = 0 < t_1 < \dots < t_K = t$ be the time steps of the algorithm, i.e., $t_k = k \cdot \dt$ for $k=0,\ldots,\lfloor t/\dt \rfloor$ and $K = \lceil t/\dt \rceil$. For convenience, we omit the superscript $\dt$ and write $A^{\ell}_{\MP}(t)$ instead of $A^{\ell,\dt}_{\MP}(t)$.
\begin{itemize}
\item[(i)] We first analyze the runtime for computing $A_{\MP}^\ell(t)$.
At each time step $t_k$, we have a Majorana polynomial $A_{\MP}^\ell(t_k)$ of degree at most $\ell$ and thus is a sum of at most $N^{\ell}$ monomials. Using Proposition~\ref{prop:commutingterms}, computing $\tau^1_{\dt}(A_{\MP}^\ell(t_k))$ can be done in time $O(N^{\ell}(N+2^{\ell})) = O(N^{2\ell})$ and the resulting polynomial has degree at most $3\ell$. For the second step, applying Proposition~\ref{prop:commutingterms}, we see that computing $\tau^2_{\dt} \circ \tau^1_{\dt}(A_{\MP}^\ell(t_k))$ can be done in time $O(N^{3\ell}(N+2^{3\ell})) = O(N^{6\ell})$. As a result, computing $\trotterstep{\dt}(A_{\MP}^\ell(t_k))$ can be done with a total runtime of $O(N^{2\ell} + N^{6\ell} + \dots + N^{2 \cdot 3^G \ell}) = N^{O(\ell)}$ as $G \leq 4\Delta$ for quartic $\Delta$-sparse Hamiltonians.

Even though we do not pursue this further, we remark that
it is possible to analyze more closely the evolution of the number of monomials through the algorithm. Using Proposition~\ref{prop:commutingterms}, we have that $\trotterstep{\dt}(A)$ has degree at most $3^G \deg(A)$ and if $A$ has at most $s$ terms, then $\trotterstep{\dt}(A)$ can be written as a sum of at most $s \cdot 2^{\deg(A)} \cdot 2^{3\deg(A)} \cdots 2^{3^G\deg(A)} \leq s \cdot 2^{3^{G+1}\deg(A)}$ monomials. 
In addition, the truncation step cannot increase the number of monomials. As such, each step of the algorithm multiplies that number of terms by at most $2^{3^{G+1}\ell}$.

\item[(ii)] We now analyze the error. Let us first consider the case where $t/\dt$ is an integer. By the definition of $A_{\MP}^{\ell}(t)$ and the triangle inequality, we have that
        \be
        \label{eq:erroranaly1}
        \begin{aligned}
            \norm{A_{\MP}^\ell(t) - A(t)}_{\F} &= \norm{\Trunc_\ell\circ\trotterstep{\dt} (A_{\MP}^\ell(t-\dt)) - A(t)}_{\F} \\
            &\leq \norm{\Trunc_\ell\left(\trotterstep{\dt} (A_{\MP}^\ell(t-\dt)) - A(t)\right)}_{\F} + \norm{\Trunc_\ell(A(t)) - A(t)}_{\F}.
        \end{aligned}
        \ee
Using the fact that $\Trunc_{\ell}$ is nonexpansive for the Frobenius norm, and
letting
\be
\tau_{\dt}(\cdot) = e^{\i H \dt} (\cdot) e^{-\i H \dt}
\ee
we have, continuing from \eqref{eq:erroranaly1},
\begin{equation}
\label{eq:mainproof234}
\begin{aligned}
\|A_{\MP}^\ell(t) - A(t)\|_{\F} &\leq \|\trotterstep{\dt} (A_{\MP}^\ell(t-\dt)) - A(t)\|_\F +  \|\Trunc_\ell(A(t)) - A(t)\|_\F \\
&\leq \|\trotterstep{\dt} (A_{\MP}^\ell(t-\dt)) - \tau_{\dt} (A_{\MP}^\ell(t-\dt))\|_{\F} \\
&\qquad + \|\tau_{\dt} (A_{\MP}^\ell(t-\dt))- \tau_{\dt}(A(t-\dt))\|_{\F} \\
&\qquad + \|\Trunc_\ell(A(t)) - A(t)\|_{\F}.
\end{aligned}
\end{equation}

We can use \eqref{eq:alphatrotterbnd22} from Theorem~\ref{thm:trotter-error} to bound the first term as
\[
\|\trotterstep{\dt} (A_{\MP}^\ell(t-\dt)) - \tau_{\dt} (A_{\MP}^\ell(t-\dt))\|_{\F} \leq \kappa \cdot (\dt)^2 \|A_{\MP}^\ell(t-\dt)\|_{\F} \overset{(a)}{\leq} \kappa \cdot (\dt)^2 \|A\|_{\F}
\]
where we let
\begin{equation}
\label{eq:defkappa}
\kappa := 2(\ell+2)^2(G^2+\Delta^2) \overset{(b)}{\leq} 34 \Delta^2 (\ell+2)^2
\end{equation}
and where we used in $(a)$ that $\|A_{\MP}^{\ell}(t-\dt)\|_{\F} \leq \|A\|_{\F}$ (a consequence of the nonexpansiveness of $\Trunc_{\ell}$ for the Frobenius norm, and the unitarity of $\trotterstep{\dt}$), and in $(b)$ we used the fact that $G \leq 4 \Delta$ (Proposition \ref{prop:partitionHamiltonian}).

Continuing from \eqref{eq:mainproof234}, we get
\be
    \|A_{\MP}^\ell(t) - A(t)\|_\F \leq
\kappa \cdot (\dt)^2 \|A\|_{\F} + \|A_{\MP}^\ell(t-\dt) - A(t-\dt)\|_{\F} + \|\Trunc_\ell(A(t)) - A(t)\|_{\F}\,. \label{eq:recursive-ineq}
\ee
Remark that the second term in Eq.~\eqref{eq:recursive-ineq}  is identical to the LHS but at time $t-\dt$. We can then iterate the inequalities to get
\be
\label{eq:mainprooffinal}
\begin{aligned}
    \|A^\ell_{\MP}(t)-A(t)\|_{\F} &\leq \frac{t}{\dt} \kappa\cdot (\dt)^2 \|A\|_{\F} + \sum_{k=1}^{t/\dt} \|\Trunc_\ell(A(k \cdot \dt)) - A(k \cdot \dt)\|_{\F}\\
    &\leq t \cdot \kappa \cdot \dt \|A\|_{\F} + \frac{t}{\dt} \best_{\F}(t, \ell)\,,
\end{aligned}
\ee
where we used the fact that $\|\Trunc_\ell(A(k \cdot \dt))- A(k \cdot \dt)\|_{\F}\leq\best_{\F}(t, \ell)$. Plugging the bound \eqref{eq:defkappa} for $\kappa$ completes the proof of Theorem \ref{thm:majorana-prop} in the case where $t/\dt$ is an integer.

The case where $t/\dt$ is not an integer is identical, except that the first term in \eqref{eq:mainprooffinal} will be of the form $\kappa \cdot (s^2 + \frac{t_{K-1}}{\dt} (\dt)^2) \|A\|_{\F}$ where $t_{K-1} = \lfloor t/\dt \rfloor \dt$ and $s = t-t_{K-1} \in (0,\dt)$. Since $s \leq \dt$ and $t = t_{K-1}+s$ then we still have $\kappa \cdot (s^2 + t_{K-1} \dt) \leq \kappa \cdot t \cdot \dt$. In addition, the second term in \eqref{eq:mainprooffinal} will be of the form $\ceil{\frac{t}{\dt}} \best_{\F}(t, \ell)$.
\end{itemize}
\end{proof}

\section{Weakly interacting fermions}
\label{sec:weakinteractinglimit}

In this section, we assume our Hamiltonian $H$ has the form
\be
\label{eq:weakInteractionHamiltonian}
    H := H_0+ uV = \sum_{X\in\cX_2} h_X + u \sum_{X\in\cX_4} h_X.
\ee
where we recall that $\cX_2,\cX_4$ are degree-two and degree-four sets of interactions and $u \in [0,1]$. 
Our goal is to prove Theorem \ref{thm:intro-weak-interaction}, which we recall here for convenience:
\begin{theorem}
\label{thm:weak-interaction}
Assume $H$ is a $\Delta$-sparse Hamiltonian of the form \eqref{eq:weakInteractionHamiltonian}. Let $A$ be an arbitrary observable. Then for all $t < \tmax(u) := \log(e/u) / (8 e^2 \Delta (\deg A + 2))$, there exists a degree-$\ell$ observable $P$ such that
\be
\label{eq:weakinteractiontruncationerrorfrob2}
\|A(t) - P\|_{\F} \leq \frac{\left(t/\tmax(u) \right)^{(\ell-\deg(A))/2}}{1-t/\tmax(u)} \|A\|_{\F}
\ee
where $A(t) = e^{\i t H} A e^{-\i t H}$.
\end{theorem}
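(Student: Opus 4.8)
The plan is to carry out time-dependent perturbation theory (a Dyson series) in the interaction picture attached to the quadratic part $H_0$, and to take $P$ to be a low-degree truncation of the resulting expansion of $A(t)$. The reason this works is that, in this picture, the degree of the $k$-th order term is controlled by the perturbation order: a single insertion of the quartic $V$ raises the degree by at most two, whereas the free evolution generated by $H_0$ is both degree-preserving and Frobenius-isometric. Combining this degree bookkeeping with the commutator bound of Theorem~\ref{thm:commutatorfrobeniusbound} then yields a clean estimate on the $k$-th order term, hence on the degree-$>\ell$ part of $A(t)$.

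Concretely, set $\cL_0=\i[H_0,\cdot]$, $\cL_V=\i[V,\cdot]$, and let $\mathcal{E}_s=e^{s\cL_0}$ denote conjugation by $e^{\i s H_0}$, so that $A(t)=e^{t(\cL_0+u\cL_V)}(A)$. Iterating Duhamel's identity $e^{t(\cL_0+u\cL_V)}=\mathcal{E}_t+u\int_0^t \mathcal{E}_{t-s}\cL_V\, e^{s(\cL_0+u\cL_V)}\,\di s$ (which converges, all super-operators being finite-dimensional) gives $A(t)=\sum_{k\ge0}u^k A_k(t)$ with
\[
A_k(t)=\int_{0\le s_1\le\cdots\le s_k\le t}\mathcal{E}_{t-s_k}\,\cL_V\,\mathcal{E}_{s_k-s_{k-1}}\,\cL_V\cdots\cL_V\,\mathcal{E}_{s_1}(A)\;\di s_1\cdots\di s_k .
\]
First, $\mathcal{E}_s$ preserves degree (the corollary following Proposition~\ref{prop:majoranabasics} with $|S|=2$, eq.~\eqref{eq:degreecommutatormonomial}) while $\cL_V$ raises degree by at most $2$ (the same corollary with $|S|=4$), so $\deg A_k(t)\le\deg A+2k$. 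Second, $\mathcal{E}_s$ is an isometry for $\|\cdot\|_\F$, and Theorem~\ref{thm:commutatorfrobeniusbound} applied to the quartic $\Delta$-sparse operator $V$ gives $\|\cL_V(B)\|_\F\le 2\Delta\sqrt{\deg B(\deg B+2)}\,\|B\|_\F$. Tracking degrees through the composition, the $j$-th copy of $\cL_V$ acts on an operator of degree $\le\deg A+2(j-1)$ and so contributes a factor $2\Delta\sqrt{(\deg A+2j-2)(\deg A+2j)}\le 2\Delta(\deg A+2j-1)$; combining this with the time-ordered simplex volume $t^k/k!$ and the elementary bound $\prod_{j=1}^k(\deg A+2j-1)\le(\deg A+2)^k\,k!$ (each factor satisfies $\deg A+2j-1\le(\deg A+2)\,j$ for $j\ge1$) yields
\[
\|A_k(t)\|_\F\le\big(2\Delta(\deg A+2)\,t\big)^k\|A\|_\F .
\]

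To finish, put $k^\star=\lfloor(\ell-\deg A)/2\rfloor$ and take $P=\sum_{k=0}^{k^\star}u^k A_k(t)$, which has degree $\le\deg A+2k^\star\le\ell$. (Equivalently one may take the optimal degree-$\ell$ approximant $P=\Trunc_\ell(A(t))$, no worse by Proposition~\ref{prop:truncationmap}; this is the ``explicit but infinite series'' approximant of the introduction, since $\cL_V$ can also lower degree, so every order $k$ contributes to degrees $\le\ell$.) As $\deg A_k(t)\le\ell$ for $k\le k^\star$, truncating discards only the terms $k\ge k^\star+1$, so with $r:=2u\Delta(\deg A+2)t$,
\[
\|A(t)-P\|_\F\le\sum_{k\ge k^\star+1}u^k\|A_k(t)\|_\F\le\frac{r^{\,k^\star+1}}{1-r}\,\|A\|_\F\qquad(r<1).
\]
It remains to compare $r$ with $t/\tmax(u)$. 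Since $1/\tmax(u)=8e^2\Delta(\deg A+2)/\log(e/u)$, one has $r/(t/\tmax(u))=u\log(e/u)/(4e^2)$, and because $u\mapsto u\log(e/u)$ is increasing on $(0,1]$ (its derivative is $\log(1/u)\ge0$) with value $1$ at $u=1$, this ratio is at most $1/(4e^2)<1$; hence $r\le t/\tmax(u)<1$. Using $k^\star+1\ge(\ell-\deg A)/2$ together with $t/\tmax(u)<1$ gives $r^{\,k^\star+1}\le(t/\tmax(u))^{(\ell-\deg A)/2}$ and $1/(1-r)\le 1/(1-t/\tmax(u))$, which is exactly the claimed bound \eqref{eq:weakinteractiontruncationerrorfrob2} (the case $\ell<\deg A$ is trivial, as then $P=0$ and the right-hand side already exceeds $\|A\|_\F$).

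I expect the main obstacle to be the norm estimate on $A_k(t)$: one must set up the interaction-picture Dyson series so that the free propagators really sit between consecutive copies of $\cL_V$ (making the ``degree $=$ perturbation order'' bookkeeping exact), and then choose the elementary inequality for the product of degree factors precisely so that the resulting $k!$ is absorbed by the $1/k!$ from the time-ordered simplex. The remaining steps are routine arithmetic; in fact the bound obtained is comfortably stronger than the one stated, so the constant $8e^2$ in $\tmax(u)$ is generous. Finally, the $\best_\F$ version in Theorem~\ref{thm:intro-weak-interaction} follows immediately, since the right-hand side of \eqref{eq:weakinteractiontruncationerrorfrob2} is nondecreasing in $t$ and therefore dominates the error of the best degree-$\ell$ approximant at every $s\in[0,t]$.
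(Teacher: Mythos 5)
Your proof is correct and reaches the stated bound, via a genuinely cleaner route than the paper's. The paper expands $A(t)=\sum_{p}\frac{(\i t)^p}{p!}[H,\cdots[H,A]]$, substitutes $H=H_0+uV$ into every nested commutator, and re-sums by powers of $u$, arriving at a double sum over $u^n$ and over tuples $(m_0,\dots,m_n)$ of $\cF$-exponents. Bounding each term with Theorem~\ref{thm:commutatorfrobeniusbound} and summing over the $m_i$ produces factors of the form $e^{4t\Delta(d+2n)}$ and forces the $n^n/n!\le e^n$ step, which is what introduces the $8e^2$ in $\tmax(u)$. Your interaction-picture Dyson series groups each block of $\cF$'s into a single free propagator $\mathcal{E}_s=e^{s\cL_0}$, which is a Frobenius isometry and so simply drops out of the estimate; the $1/k!$ from the time-ordered simplex then cancels against the $k!$ from your product bound $\prod_{j=1}^k(\deg A+2j-1)\le(\deg A+2)^k\,k!$, yielding the tidy estimate $\|A_k(t)\|_\F\le\bigl(2\Delta(\deg A+2)t\bigr)^k\|A\|_\F$. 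This is the same underlying idea (a perturbative series in $u$ combined with the degree-sensitive commutator bound of Theorem~\ref{thm:commutatorfrobeniusbound} and the degree bookkeeping of~\eqref{eq:degreecommutatormonomial}), but your version avoids the double sum and its attendant exponential factors, and the resulting convergence radius $r=2u\Delta(\deg A+2)t$ is, as you note, comfortably smaller than $t/\tmax(u)$ (by a factor $u\log(e/u)/(4e^2)\le 1/(4e^2)$), so the $8e^2$ in the statement is slack. Your asides — that $\Trunc_\ell(A(t))$ is never worse than your explicit $P$ by Proposition~\ref{prop:truncationmap}(ii), and that monotonicity of the right-hand side in $t$ upgrades the bound to the $\best_\F$ form of Theorem~\ref{thm:intro-weak-interaction} — are also correct and match remarks made in the paper.
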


The proof involves two natural steps: First, in Section~\ref{sec:dysonseries} we derive a series expansion of $A(t)$ in powers of $u$ that is reminiscent of Dyson series, where the $k$'th term of the series has degree at most $\deg(A)+2k$. Secondly, we show in Section~\ref{sec:approximatedegree} that truncating the series derived will incur the error stated in Equation \eqref{eq:weakinteractiontruncationerrorfrob2}.

\subsection{Expressing $A(t)$ as a series in $u$}
\label{sec:dysonseries}

It is well-known that $A(t) = e^{\i t H} A e^{-\i t H}$ has the following multi-commutator series expansion
\begin{equation}
    \label{eq:nestedcomm-expansion}
    A(t) = \sum_{p=0}^{\infty} \frac{(\i t)^p}{p!} [H,[H,\dots [H,A]]]\,.
\end{equation}
By using the fact that $H = H_0 + u V$, and rearranging the terms according to the power of $u$, one can show the following result.
\begin{lemma}
Given a Hamiltonian $H = H_0 + uV$ and an arbitrary observable $A\in\cA$, the time-evolved observable $A(t)$ has the following series expansion
\be
\label{eq:alternativedyson}
    A(t) = \sum_{n=0}^\infty u^n\sum_{m_0,\dots,m_n=0}^\infty \dfrac{t^{n+\sum_i m_i}}{(n+\sum_i m_i)!}\cF^{m_n}\cG\cF^{m_{n-1}}\cdots \cF^{m_1} \cG \cF^{m_0} A
\ee
where $\cF(\cdot) := \i [H_0,\cdot]$ and $\cG(\cdot) := \i[V,\cdot]$.
\end{lemma}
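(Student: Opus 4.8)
The plan is to read the nested-commutator series \eqref{eq:nestedcomm-expansion} as the exponential of the adjoint superoperator and expand that exponential word-by-word in its two pieces. Writing $\cL := \i[H,\cdot]$ acting on $\cA$, the splitting $H = H_0 + uV$ gives $\cL = \cF + u\cG$, so that \eqref{eq:nestedcomm-expansion} is exactly $A(t) = e^{t\cL}(A) = \sum_{p=0}^\infty \frac{t^p}{p!}(\cF+u\cG)^p(A)$. The first step is to expand the $p$-th power of the sum into its $2^p$ non-commutative words,
\[
(\cF + u\cG)^p = \sum_{w \in \{\cF,\cG\}^p} u^{\#\cG(w)}\, w,
\]
and to reparametrize each word $w$ by the number $n = \#\cG(w)$ of $\cG$-letters it contains together with the lengths $m_0,\dots,m_n \geq 0$ of its maximal runs of $\cF$'s read from right to left (the run $m_0$ being the one applied first to $A$), so that $w = \cF^{m_n}\cG\cF^{m_{n-1}}\cdots\cG\cF^{m_0}$ with $n + \sum_i m_i = p$. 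This yields the clean intermediate identity
\[
(\cF + u\cG)^p = \sum_{n=0}^{p} u^n \sum_{\substack{m_0,\dots,m_n\geq 0\\ n + \sum_i m_i = p}} \cF^{m_n}\cG\cF^{m_{n-1}}\cdots \cF^{m_1}\cG\cF^{m_0}.
\]

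Substituting this into the exponential series and interchanging the sum over $p$ with the sums over $n$ and the $m_i$ is the second step: once the order of summation is reversed, the index $p$ is forced to equal $n + \sum_i m_i$ for each fixed tuple $(n; m_0,\dots,m_n)$, and one lands precisely on \eqref{eq:alternativedyson}.

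The only point requiring care is justifying this rearrangement of a doubly infinite series, which I would settle by absolute convergence in a fixed norm, exploiting that $\cA$ is finite-dimensional. Let $f$ and $g$ denote the induced norms of $\cF$ and $\cG$ (with respect to, say, $\|\cdot\|_\F$ on $\cA$); both are finite. Every word of length $p$ with $n$ copies of $\cG$ then satisfies $\|\cF^{m_n}\cG\cdots\cF^{m_0}(A)\|_\F \leq g^n f^{\sum_i m_i}\|A\|_\F$, and since $u \le 1$ the sum of the absolute values of all terms is bounded by
\[
\sum_{p=0}^\infty \frac{t^p}{p!}\sum_{n=0}^p \sum_{\substack{m_0,\dots,m_n \geq 0 \\ n+\sum_i m_i = p}} g^n f^{\sum_i m_i}\,\|A\|_\F = \sum_{p=0}^\infty \frac{t^p}{p!}(f+g)^p\,\|A\|_\F = e^{t(f+g)}\|A\|_\F < \infty,
\]
where the inner equality is the binomial theorem applied to the $p$ letters of a word. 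Absolute convergence then licenses the interchange of summations (Fubini for series). I expect the main — indeed the only nontrivial — obstacle to be the combinatorial bookkeeping of the word reparametrization: getting the ordering of the $m_i$ right and correctly handling the degenerate cases ($n = 0$, empty runs $m_i = 0$). The analytic content is immediate from finite-dimensionality.
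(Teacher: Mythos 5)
Your proof is correct and takes essentially the same route as the paper's: expand $e^{t\cL}(A)$ with $\cL = \cF + u\cG$ into non-commutative words, reparametrize each word by the number $n$ of $\cG$-letters and the $\cF$-run lengths $m_0,\dots,m_n$, and justify the interchange of the sum over $p$ with the sums over $(n;m_0,\dots,m_n)$ by absolute convergence (the paper bounds in operator norm to get $e^{2tCu}e^{2tC}\|A\|_{\op}$, you bound in Frobenius norm to get $e^{t(f+g)}\|A\|_\F$; either works since $\cA$ is finite-dimensional). The only cosmetic difference is that you discard the $u^n$ factor using $u\le 1$, whereas the paper keeps it — neither choice affects the conclusion, as absolute convergence holds for any fixed $u$.
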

\begin{proof}
For any integer $p \geq 0$, we can expand the $p$-th order nested commutators in \eqref{eq:nestedcomm-expansion} and rearrange the terms according to the power of $u$ as follows:
\begin{equation}
\label{eq:expandcommrearrange}
\begin{aligned}
\i^p [H,[H,\dots [H,A]]]
&= \i^p [H_0+uV,[H_0+uV,\dots [H_0+uV,A]]] \\
&= \sum_{n=0}^p u^n \sum_{\substack{m_0, \dots, m_n \\ m_0+m_1+\cdots+m_{n} + n = p}} \cF^{m_n} \cG \cF^{m_{n-1}} \cdots \cF^{m_1} \cG \cF^{m_0} A.
\end{aligned}
\end{equation}
To obtain the desired equality \eqref{eq:alternativedyson}, we simply need to plug \eqref{eq:expandcommrearrange} into \eqref{eq:nestedcomm-expansion} and swap the summations over $p$ and $n$. To justify this last step, define the sequence $\alpha_{p,n}$ as 
\[
\alpha_{p,n} = \frac{(\i t)^p}{p!} u^n \sum_{\substack{m_0, \dots, m_n \\ m_0+m_1+\cdots+m_{n} + n = p}} \cF^{m_n} \cG \cF^{m_{n-1}} \cdots \cF^{m_1} \cG \cF^{m_0} A
\]
for $p \geq n$ and $\alpha_{p,n} = 0$ for $p < n$.
It satisfies 
\[
\| \alpha_{p,n} \|_{\op} \leq \norm{A}_{\op}\dfrac{(2tC)^p u^n}{p!} \sum_{\substack{m_0, \dots, m_n \\ m_0+m_1+\cdots+m_{n} + n = p}} 1 = \norm{A}_{\op}\dfrac{(2tC)^p u^n}{p!} \binom{p}{n} =\norm{A}_{\op}\dfrac{(2tCu)^n (2tC)^{p-n}}{n!(p-n)!}
\]
where $C=\max(\norm{V}_{\op},\norm{H_0}_{\op})$. Then, the double sum satisfies
\[
\sum_{n,p} \norm{\alpha_{p,n}}_{\op} \leq \norm{A}_{\op}\sum_{n=0}^\infty \dfrac{(2tCu)^n}{n!}\sum_{p=n}^\infty \dfrac{(2tC)^{p-n}}{(p-n)!} = \norm{A}_{\op}e^{2tCu}e^{2tC}.
\]
It is thus an absolutely convergent double series and the sums can be inverted to obtain
\begin{align*}
    A(t) = \sum_{n=0}^{\infty} u^n \sum_{p \geq n} \frac{t^p}{p!} \sum_{\substack{m_0, \dots, m_n \\ m_0+m_1+\cdots+m_{n} + n = p}} \cF^{m_n} \cG \cF^{m_{n-1}} \cdots \cF^{m_1} \cG \cF^{m_0} A,
\end{align*}
which leads to the desired expression.
\end{proof}

\subsection{Series truncation}
\label{sec:approximatedegree}

Given an integer $\kk$, define $A^{\kk}(t)$ as the truncation of Eq.~\eqref{eq:alternativedyson} up to the $\kk$'th power of $u$
\begin{equation}
\label{eq:Aell}
A^\kk(t) = \sum_{n=0}^\kk u^n\sum_{m_0,\dots,m_n=0}^\infty \dfrac{t^{n+\sum_i m_i}}{(n+\sum_i m_i)!}\cF^{m_n}\cG\cF^{m_{n-1}}\cdots \cF^{m_1} \cG \cF^{m_0} A \,.
\end{equation}
Note that, by Eq.~\eqref{eq:degreecommutatormonomial}, $\deg (A^{\kk}(t)) \leq \deg(A) + 2\kk$.
The main goal of this section is to bound the error $\|A^\kk(t) - A(t)\|_{\F}$.
\begin{lemma}
\label{lem:At-Akt}
Consider an arbitrary operator $A\in\cA$. Then, for any $t < \tmax(u) := \frac{\log(e/u)}{8 e^2 \Delta (\deg A+2)}$ and $\kk\in\NN$, we have that
\be
\label{eq:At-Akt-fro}
\|A^\kk(t) - A(t)\|_{\F} \leq \frac{(t/\tmax(u))^{k+1}}{1-t/\tmax(u)} \|A\|_{\F}.
\ee
\end{lemma}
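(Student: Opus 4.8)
The plan is to identify $A(t)-A^{\kk}(t)$ with the tail of the $u$-series in \eqref{eq:alternativedyson}. Comparing \eqref{eq:alternativedyson} with \eqref{eq:Aell},
\[
A(t)-A^{\kk}(t)=\sum_{n=\kk+1}^{\infty}u^{n}\,\Sigma_n,\qquad \Sigma_n:=\sum_{m_0,\dots,m_n=0}^{\infty}\frac{t^{n+M}}{(n+M)!}\,\cF^{m_n}\cG\cF^{m_{n-1}}\cdots\cF^{m_1}\cG\cF^{m_0}A,
\]
where $M:=m_0+\dots+m_n$. The previous lemma already shows this triple series converges absolutely in operator norm, hence in Frobenius norm since $\cA$ is finite-dimensional, so all of the termwise manipulations below are legitimate. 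I will show that $u^{n}\|\Sigma_n\|_{\F}\le r^{n}\|A\|_{\F}$ for every $n\ge1$, where $r:=t/\tmax(u)$; since $t<\tmax(u)$ forces $r<1$, summing the geometric series $\sum_{n\ge \kk+1}r^{n}=r^{\kk+1}/(1-r)$ then gives exactly \eqref{eq:At-Akt-fro}.

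To bound $\Sigma_n$, the first step is a degree count. By Corollary \eqref{eq:degreecommutatormonomial}, $\cF=\i[H_0,\cdot\,]$ does not increase the degree (since $H_0$ has only degree-two terms), while $\cG=\i[V,\cdot\,]$ raises it by at most $2$; hence every intermediate operator appearing in the string $\cF^{m_n}\cG\cdots\cG\cF^{m_0}A$ has degree at most $d+2n$, where $d:=\deg A$ and the $2n$ comes from the $n$ copies of $\cG$. Since $H_0$ and $V$ are themselves $\Delta$-sparse of degree at most four, Theorem~\ref{thm:commutatorfrobeniusbound} applies to each of them, and for any $B$ with $\deg B\le d+2n$ it gives $\|\cF(B)\|_{\F},\ \|\cG(B)\|_{\F}\le 2\Delta\sqrt{(d+2n)(d+2n+2)}\,\|B\|_{\F}\le 2\Delta(d+2n+1)\,\|B\|_{\F}$. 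Applying this to each of the $n+M$ commutators in the string yields $\|\cF^{m_n}\cG\cdots\cF^{m_0}A\|_{\F}\le(2\Delta(d+2n+1))^{\,n+M}\|A\|_{\F}$. Summing over the tuples $(m_0,\dots,m_n)$ with fixed $M$ (there are $\binom{M+n}{n}$ of them) and then over $M\ge0$, the identity $\sum_{M\ge0}\binom{M+n}{n}\frac{L^{M+n}}{(M+n)!}=\frac{L^{n}}{n!}\,e^{L}$ gives $\|\Sigma_n\|_{\F}\le\frac{L_n^{\,n}}{n!}\,e^{L_n}\|A\|_{\F}$ with $L_n:=2\Delta(d+2n+1)t$.

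It remains to verify the elementary inequality $u^{n}\frac{L_n^{\,n}}{n!}e^{L_n}\le r^{n}$ for $n\ge1$, and this is where the constant $8e^{2}$ in $\tmax(u)$ is used. Using $n!\ge(n/e)^{n}$ one has $u^{n}\frac{L_n^{\,n}}{n!}e^{L_n}\le(ue\,(L_n/n)\,e^{L_n/n})^{n}$, and for $n\ge1$ one checks $L_n/n=2\Delta t\,(d+2n+1)/n\le 2\Delta t\,(d+3)$; since $x\mapsto xe^{x}$ is increasing the claim reduces to $ue\cdot 2\Delta t(d+3)\,e^{2\Delta t(d+3)}\le r=8e^{2}\Delta(d+2)t/\log(e/u)$. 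Inserting $t<\tmax(u)$ bounds $2\Delta t(d+3)$ above by $\tfrac{3}{8e^{2}}\log(e/u)$ (using $(d+3)/(d+2)\le 3/2$), and after writing $x:=e/u\ge e$ (as $u\le1$) the target inequality becomes $\tfrac{3}{8}\cdot\frac{\log x}{x^{\,c}}\le1$ with $c:=1-3/(8e^{2})$; this holds because $\max_{x>0}\frac{\log x}{x^{c}}=\frac{1}{ce}$ and $\tfrac{3}{8ce}\le1$ since $c>\tfrac{3}{8e}$.

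The one point that genuinely needs care is the degree-counting step: the whole scheme works only because the bound on each string of commutators depends on $d$ and $n$ through $L_n$ and not on the system size $N$, and the degree grows by $2n$ rather than by $2(n+M)$ precisely because $\cF$ is degree-non-increasing. Once that is in place, the collapse of the $m$-sums into $\frac{L_n^{\,n}}{n!}e^{L_n}$ and the comparison with $\sum_n r^{n}$ are mechanical, and the last paragraph is a routine one-variable estimate whose sole purpose is to pin down the universal constant.
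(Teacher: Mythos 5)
Your proof is correct and takes essentially the same route as the paper's: bound each nested-commutator string with Theorem~\ref{thm:commutatorfrobeniusbound}, collapse the sum over $(m_0,\dots,m_n)$ with fixed total $M$ via $\binom{M+n}{n}/(M+n)! = 1/(n!\,M!)$ to obtain $\frac{L_n^n}{n!}e^{L_n}$, apply $n!\ge(n/e)^n$, and check that the resulting geometric ratio is at most $t/\tmax(u)$. The only deviations are minor constant choices (the AM-GM bound $\sqrt{D(D+2)}\le D+1$ rather than $\le 2D$, and the slack $d+3$ rather than $d+2$) and a slightly different, but equivalent, elementary verification of the inequality defining $\tmax(u)$.
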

Before proving this lemma, we first prove a bound on the norm of nested commutators contained in Eq.~\eqref{eq:Aell}. Recall that $\cF(\cdot) := \i [H_0,\cdot]$ and $\cG(\cdot) = \i [V,\cdot]$, where $H_0$ is the quadratic part of $H$ and $V$ is the quartic one.
\begin{proposition}[Frobenius norm of nested commutators]
For any tuple $(m_0,\dots,m_n)\in\NN^{n+1}$ and $m = m_0 + \dots + m_n$, and $A \in \cA$,
    \be
    \label{eq:iter-commutator-frob}
        \norm{\cF^{m_n}\cG\cF^{m_{n-1}}\cdots \cF^{m_1} \cG \cF^{m_0} A}_\F \leq \left(4 \Delta (\deg A+2n)\right)^{n+m} \norm{A}_\F.
    \ee
\end{proposition}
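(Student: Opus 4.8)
The plan is to combine the Frobenius-norm commutator bound of Theorem~\ref{thm:commutatorfrobeniusbound} with the degree-tracking estimate~\eqref{eq:degreecommutatormonomial}, iterated along the string of super-operators in~\eqref{eq:iter-commutator-frob}. The point is that each application of $\cF$ or $\cG$ multiplies the Frobenius norm by a factor controlled by the \emph{current} degree of the operator, and that along the composition $\cF^{m_n}\cG\cF^{m_{n-1}}\cdots\cG\cF^{m_0}$ this current degree never exceeds $\deg A+2n$: only the $n$ copies of $\cG$ (commutators with the quartic $V$) can raise the degree, each by at most $2$, whereas every copy of $\cF$ (commutator with the quadratic $H_0$) leaves the degree unchanged.

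First I would record the elementary facts about $\cF$ and $\cG$. Since $H_0=\sum_{X\in\cX_2}h_X$ and $V=\sum_{X\in\cX_4}h_X$ are each of the form~\eqref{eq:weaklyinteractinghamiltonian} with coefficients of modulus at most $1$, and each is $\Delta$-sparse (their term sets $\cX_2,\cX_4$ being subsets of $\cX$, and the sparsity bound~\eqref{eq:vertexbounded} only decreasing when terms are dropped), Theorem~\ref{thm:commutatorfrobeniusbound} applied separately to $H_0$ and $V$ gives, for every $B\in\cA$,
\[
\max\bigl(\|\cF(B)\|_\F,\ \|\cG(B)\|_\F\bigr)\ \leq\ 2\Delta\sqrt{\deg B\,(\deg B+2)}\,\|B\|_\F .
\]
Moreover, since every $X\in\cX_2$ has $|X|=2$ and every $X\in\cX_4$ has $|X|=4$, applying~\eqref{eq:degreecommutatormonomial} term by term gives $\deg\cF(B)\leq\deg B$ and $\deg\cG(B)\leq\deg B+2$.

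Then I would read the composition in~\eqref{eq:iter-commutator-frob} from right to left: set $B_0:=A$ and let $B_1,\dots,B_{n+m}$ denote the partial results after each successive application, so there are $n+m$ steps in all ($m=m_0+\dots+m_n$ copies of $\cF$ and $n$ copies of $\cG$) and $B_{n+m}$ is the left-hand side of~\eqref{eq:iter-commutator-frob}. By the degree estimates, after the $j$-th occurrence of $\cG$ the degree has grown by at most $2j$ over $\deg A$ (and the intervening $\cF$'s can only lower it), so $\deg B_k\leq\deg A+2n$ for every $k$. Assuming $\deg A+2n\geq1$, we then have for each $k$
\[
2\Delta\sqrt{\deg B_k(\deg B_k+2)}\ \leq\ 2\Delta(\deg B_k+1)\ \leq\ 2\Delta(\deg A+2n+1)\ \leq\ 4\Delta(\deg A+2n),
\]
using $\sqrt{d(d+2)}\leq d+1$ and then $d+1\leq2(\deg A+2n)$ whenever $1\leq d\leq\deg A+2n$; multiplying the resulting bounds $\|B_{k+1}\|_\F\leq4\Delta(\deg A+2n)\|B_k\|_\F$ over the $n+m$ steps yields~\eqref{eq:iter-commutator-frob}. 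The degenerate case $\deg A+2n=0$ (i.e.\ $\deg A=0$ and $n=0$) is immediate: $A$ is a multiple of the identity, so either $m=0$ and~\eqref{eq:iter-commutator-frob} reads $\|A\|_\F\leq\|A\|_\F$, or $m\geq1$ and $\cF^{m_0}A=0$.

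I do not expect a real obstacle; the only delicate points are the degree bookkeeping — ensuring that the \emph{base} $\deg A+2n$ of the power does not grow as more copies of $\cF$ are applied, which is precisely where degree-nonincreasingness of $\cF$ enters — and the degenerate case $\deg A=n=0$, both of which are dispatched above.
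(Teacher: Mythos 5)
Your proof is correct and follows essentially the same route as the paper: iterate Theorem~\ref{thm:commutatorfrobeniusbound} (applied separately to $H_0$ and $V$, which inherit $\Delta$-sparsity) along the composition, using the degree bound~\eqref{eq:degreecommutatormonomial} to keep every intermediate degree at most $\deg A + 2n$, and then bound $2\Delta\sqrt{d(d+2)} \leq 4\Delta(d+2n)$ for each of the $n+m$ steps. Your treatment is marginally more explicit than the paper's (spelling out the degree tracking and the degenerate case $\deg A + 2n = 0$, which the paper leaves implicit), but the argument is the same.
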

\begin{proof}
This is a straightforward application of Theorem \ref{thm:commutatorfrobeniusbound}.
     Let $B= \cF^{m_n}\cG\cF^{m_{n-1}}\cdots \cF^{m_1} \cG \cF^{m_0} A$. Note that by Eq.~\eqref{eq:degreecommutatormonomial}, $\deg(B)\leq d+2n$ where $d=\deg(A)$. Denoting $C(\Delta,d) :=2 \Delta \sqrt{d(d+2)}$ and applying Proposition~\ref{thm:commutatorfrobeniusbound} to the $m_n$ leftmost $\cF$'s, we then have
    \[
        \norm{B}_\F \leq C(\Delta,d+2n)^{m_n}\norm{\cG\cF^{m_{n-1}}\cdots \cF^{m_1} \cG \cF^{m_0} A}_\F\,.
    \]
    We apply the same Proposition to $\cG$, though now the superoperator is acting on a operator of degree $d+2(n-1)$. We then have
    \[
    \begin{aligned}
        \norm{B}_\F &\leq C(\Delta,d+2n)^{m_n}C(\Delta,d+2n-2)\norm{\cF^{m_{n-1}}\cG\cdots \cF^{m_1} \cG \cF^{m_0} A}_\F \\
        &\leq C(\Delta,d+2n)^{m_n+1}\norm{\cF^{m_{n-1}}\cG\cdots \cF^{m_1} \cG \cF^{m_0} A}_\F\,.
    \end{aligned}
    \]
    Repeating the same steps prescribed above, we arrive at 
    \[
        \norm{B}_\F \leq C(\Delta,d+2n)^{n+m} \norm{A}_\F\,.
    \]
    Finally note that $C(\Delta,d+2n)\leq 4\Delta (d+2n)$.
\end{proof}

We are now ready to prove Lemma \ref{lem:At-Akt}.
\begin{proof}[Proof of Lemma \ref{lem:At-Akt}]
We have
    \be
        \norm{A^\kk(t)-A(t)}_{\F} \leq \sum_{n=\kk + 1}^\infty u^n\sum_{m_0,\dots,m_n=0}^\infty \dfrac{t^{n+\sum_i m_i}}{(n+\sum_i m_i)!}\norm{\cF^{m_n}\cG\cF^{m_{n-1}}\cdots \cF^{m_1} \cG \cF^{m_0} A}_{\F}.
    \ee
Using the bound \eqref{eq:iter-commutator-frob},
\be
    \norm{\cF^{m_n}\cG\cF^{m_{n-1}}\cdots \cF^{m_1} \cG \cF^{m_0} A}_{\F} \leq \norm{A}_{\F}(4\Delta (\deg(A) + 2n))^{n+\sum_i m_i},
\ee
and letting $\deg A = d$, we get
    \[
    \begin{aligned}
        \norm{A^\kk(t)-A(t)}_{\F}
        &\leq \norm{A}_{\F}\sum_{n=\kk + 1}^\infty u^n\sum_{m_0,\dots,m_n=0}^\infty \dfrac{(4t\Delta(d+2n))^{n+\sum_i m_i}}{(n+\sum_i m_i)!} \\
        &= \norm{A}_{\F} \sum_{n=\kk + 1}^\infty u^n\sum_{m=0}^\infty \sum_{\substack{m_0,\dots,m_n=0\,: \\\sum_i m_i = m}}^\infty \dfrac{(4t\Delta(d+2n))^{n+m}}{(n+m)!} 
    \end{aligned}
    \]
    where in the last line we partitioned the sum over $(m_0,\dots,m_n)$ into sums over $(m_0,\dots,m_n)$ such that $\sum_i m_i = m$, for $m \in \NN$. We then have
    \[
    \begin{aligned}
         \norm{A^\kk(t)-A(t)}_{\F} &\leq \norm{A}_{\F} \sum_{n=\kk + 1}^\infty u^n\sum_{m=0}^\infty \dfrac{(4t\Delta(d+2n))^{n+m}}{(n+m)!}\sum_{\substack{m_0,\dots,m_n=0\,: \\\sum_i m_i = m}}^\infty 1 \\
        &= \norm{A}_{\F} \sum_{n=\kk + 1}^\infty (4tu\Delta(d+2n))^n\sum_{m=0}^\infty \dfrac{(4t\Delta(d+2n))^m}{(n+m)!} \binom{n+m}{n} \\
        &= \norm{A}_{\F} \sum_{n=\kk + 1}^\infty \dfrac{(4tu\Delta(d+2n))^n}{n!}\sum_{m=0}^\infty \dfrac{(4t\Delta(d+2n))^m}{m!} \\
        &= \norm{A}_{\F} \sum_{n=\kk + 1}^\infty \dfrac{(4tu\Delta(d+2n))^n}{n!}e^{4t\Delta(d+2n)} \\
        &\leq \norm{A}_{\F} \sum_{n=\kk + 1}^\infty \dfrac{n^n\left(4tu\Delta(d+2)e^{4t\Delta (d+2)}\right)^n}{n!} \\
        &\leq \norm{A}_{\F} \sum_{n=\kk + 1}^\infty \left(4 ue 4t\Delta(d+2)e^{4t\Delta (d+2)}\right)^n\\
        &\leq \norm{A}_{\F} \sum_{n=\kk + 1}^\infty \left((u/e) \cdot 4e^2 t\Delta(d+2)e^{4e^2 t\Delta (d+2)}\right)^n \,,
    \end{aligned}
    \]
    where the second inequality is achieved by noting that $(d+2n)\leq n(d+2)$ and the third one via the inequality $n^n/n! \leq e^n$.
    Letting $t^*=\frac{1}{4e^2\Delta (d+2)}$, the series converges when 
    \[
    \frac{u}{e} \cdot
    \dfrac{t}{t^*} e^{t/t^*} < 1.
    \]
    Now we use the following simple proposition
    \begin{proposition}
    If $0 < t < \frac{1}{2} t^* \log(e/u) =: \tmax$ then $\frac{u}{e} \frac{t}{t^*} e^{t/t^*} < \frac{t}{\tmax} < 1$.
    \end{proposition}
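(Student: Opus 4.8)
The plan is to reduce the claim to an elementary one-variable inequality via the substitution $x := t/t^*$. Since $\tmax = \tfrac12 t^* \log(e/u)$, the hypothesis $0 < t < \tmax$ is equivalent to $0 < x < \tfrac12 \log(e/u)$; in particular $\log(e/u) > 0$ (this uses $u \le 1$, so $e/u \ge e > 1$), and the rightmost inequality $t/\tmax < 1$ is then immediate. It remains to prove the left inequality. Writing $t/\tmax = 2x/\log(e/u)$, it reads $\tfrac{u}{e}\, x\, e^{x} < \tfrac{2x}{\log(e/u)}$, and dividing through by $x > 0$ — valid because $t > 0$ — this is equivalent to $e^{x} < \tfrac{2e}{u\log(e/u)}$.

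To finish I would bound the two sides separately. For the left side, the constraint $x < \tfrac12\log(e/u)$ gives $e^x < e^{\frac12 \log(e/u)} = \sqrt{e/u}$. For the right side it suffices to show $\sqrt{e/u} \le \tfrac{2e}{u\log(e/u)}$; clearing denominators and using $u\sqrt{e/u} = \sqrt{ue}$ turns this into $\log(e/u) \le 2\sqrt{e/u}$. Setting $y := e/u \ge e > 1$, this is the standard estimate $\log y \le 2\sqrt{y}$, which follows from $\log z \le z-1$ applied to $z = \sqrt y$ (so $\tfrac12\log y \le \sqrt y - 1 \le \sqrt y$). Chaining the two bounds gives $e^x < \sqrt{e/u} \le \tfrac{2e}{u\log(e/u)}$, which is exactly what is needed.

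I expect no real obstacle here; the only points requiring care are (i) verifying $\log(e/u) > 0$ so that the division and the inequality directions are valid, which is precisely where the assumption $u \le 1$ is used, and (ii) preserving strictness — the bound $e^x < \sqrt{e/u}$ is strict because $t < \tmax$ is strict, and this survives the second (non-strict) bound. As an alternative to the substitution $z = \sqrt y$, one can note that $f(y) = 2\sqrt y - \log y$ satisfies $f'(y) = (\sqrt y - 1)/y \ge 0$ for $y \ge 1$ and $f(1) = 2 > 0$, but the substitution argument is shorter and is what I would write up.
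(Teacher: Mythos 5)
Your proof is correct and follows essentially the same route as the paper: both substitute $x = t/t^*$, bound $e^x < \sqrt{e/u}$ from the hypothesis $x < \tfrac12\log(e/u)$, and then reduce to the elementary inequality $\log z \le z$ applied at $z = \sqrt{e/u}$ (you write it as $\log y \le 2\sqrt y$ for $y = e/u$, which is the same statement). The only difference is bookkeeping in the algebraic rearrangement.
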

        \begin{proof}
    If $t < \tmax$ then $\frac{u}{e} \frac{t}{t^*} e^{t/t^*} < \frac{u}{e} \frac{t}{t^*} \sqrt{\frac{e}{u}} = \frac{t}{t^*\cdot \sqrt{e/u}} \leq \frac{t}{t^* \cdot \log(\sqrt{e/u})} = \frac{t}{\tmax} < 1$.
    \end{proof}
    Hence if $t < \frac{\log(e/u)}{8e^2 \Delta (d+2)} =: \tmax(u)$ we have
    \[
    \norm{A^k(t)-A(t)}_{\F} \leq \|A\|_{\F} \sum_{n=k+1}^{\infty} \left(\frac{t}{\tmax(u)}\right)^n = \|A\|_{\F} \dfrac{\left(\frac{t}{\tmax(u)}\right)^{\kk+1}}{1-\frac{t}{\tmax(u)}} \,
    \]
    as desired.
\end{proof}

\begin{remark}
It is possible to prove a bound of the form \eqref{eq:At-Akt-fro} for the operator norm instead of the Frobenius norm, using essentially the same proof, but with a different argument for the inequality \eqref{eq:iter-commutator-frob}. We omit this here as we focus on the Frobenius norm in this paper.
\end{remark}

\begin{proof}[Proof of Theorem \ref{thm:weak-interaction}]
Using Lemma \ref{lem:At-Akt} and the observation that $\deg A^k(t) \leq \deg A + 2k$, we get with $k=\lfloor (\ell-\deg A)/2 \rfloor$,
\[
\best_{\F}(H(u),A;t,\ell) \leq \frac{(t/\tmax(u))^{(\ell-\deg A)/2}}{1-t/\tmax(u)} \|A\|_{\F}.
\]
\end{proof}

\subsection{Majorana Propagation for weakly interacting Hamiltonians}
So far we have only shown the \emph{existence} of an operator $P$ satisfying $\deg(P) \leq \deg A + 2\kk$ which converges to $A(t)$ exponentially fast in $\kk$. Note that the operator $P=A^k(t)$ used in the proof (see Equation \eqref{eq:Aell}) involves an infinite series and it is not clear if it can be computed efficiently.

In the following, we prove Corollary \ref{corollary:MPweaklyinteracting}, which states that the MP algorithm is able to find such a low-degree approximation of $A(t)$ with complexity $N^{O(\log(t/\eps))}$. To show this, we use the results of Theorem \ref{thm:weak-interaction} in Theorem \ref{thm:majorana-prop} and then specify the scalings of $\dt, \ell$ with respect to $\eps,t$ for the error of MP to be less than $\eps$.
\begin{proof}[Proof of Corollary \ref{corollary:MPweaklyinteracting}]
We assume that $\|A\|_{\F} = 1$.
From Theorem \ref{thm:intro-weak-interaction} we know that for $t \leq \tmax(u)/2$,
\[
\best_{\F}(H(u),A,t,\ell) \leq 2 \cdot 2^{-(\ell-\deg A)/2}.
\]
Plugging this in the error bound~\eqref{eq:bounderrppell} of the output of MP, we get that the MP algorithm with time step $\dt$ and truncation parameter $\ell$ has error at most
\[
\begin{aligned}
\|A_{\MP}^{\ell,\dt}(t) - A(t)\|_{\F} &\leq 34 t\cdot \dt \cdot \Delta^2 (\ell+2)^2 + 2 \lceil t/\dt \rceil 2^{-(\ell-\deg A)/2}\\
&\leq 34 t\cdot \dt \cdot \Delta^2 (\ell+2)^2 + 2 (t/\dt +1) 2^{-(\ell-\deg A)/2}.
\end{aligned}
\]
We need to choose parameters $\dt$ and $\ell$ such that the right-hand side is at most $\eps$. For a given $\ell$, the optimal $\dt$, which minimizes the right-hand side is
\be
\label{eq:optimal-deltat-weakhamiltonian}
\dt = \sqrt{\frac{2}{34}} \cdot \frac{\sqrt{2^{-(\ell-\deg A)/2}}}{\Delta (\ell+2)}
\ee
for which the error bound becomes
\[
\begin{aligned}
\|A_{\MP}^{\ell,\dt}(t) - A(t)\|_{\F} \leq C_1 t \Delta (\ell+2) 2^{-(\ell-\deg A)/4} + 2 \cdot 2^{-(\ell-\deg A)/2}
\end{aligned}
\]
where $C_1 = 2 \sqrt{34 \cdot 2}$. We find $\ell$ such that both terms are less than $\eps/2$, i.e., we want, with $L = \ell/4$
\begin{align}
L 2^{-L} &\leq \frac{\eps}{16 C_1 t \Delta} 2^{-\deg A / 4} \label{eq:ineq1Lambert}\\
2^{-(4L-\deg A)/2} &\leq \frac{\eps}{4}. \label{eq:ineq2log}
\end{align}
We use the next simple proposition to solve the first inequality.
\begin{proposition}
If $\rho > 0$ and $a > 0$, then for any $v \geq \max(0,a+\log_2(4a/\rho + 4/\rho^2))$,  we have $v2^{-v} \leq \rho 2^{-a}$.
\end{proposition}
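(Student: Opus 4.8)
The plan is to recast the claim as a statement about the convex function $\phi(v) := \rho\,2^{v-a} - v$: since $2^v>0$, the desired inequality $v\,2^{-v}\le\rho\,2^{-a}$ is equivalent to $\phi(v)\ge 0$. Now $\phi$ is strictly convex (it is the exponential $\rho\,2^{v-a}$ plus an affine term), and $\phi'(v)=(\ln 2)\,\rho\,2^{v-a}-1$ has a unique zero at $v^\star:=a-\log_2(\rho\ln 2)$, so $\phi$ is strictly increasing on $[v^\star,\infty)$. The strategy is then two steps: (i) show that the threshold $v_0:=\max\{0,\,a+\log_2(4a/\rho+4/\rho^2)\}$ satisfies $v_0\ge v^\star$, so that for every $v\ge v_0$ we have $\phi(v)\ge\phi(v_0)$; and (ii) show $\phi(v_0)\ge 0$. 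Together these give $\phi(v)\ge 0$ for all $v\ge v_0$.

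For step (i): if $v^\star\le 0$ there is nothing to prove since $v_0\ge 0$. If $v^\star>0$ then $\rho\ln 2<2^a$, and because $v_0\ge a+\log_2(4a/\rho+4/\rho^2)$ it suffices to check $a+\log_2(4a/\rho+4/\rho^2)\ge a-\log_2(\rho\ln 2)$, i.e. $(4a+4/\rho)\ln 2\ge 1$, i.e. $4(a+1/\rho)\ge\log_2 e$; if this failed we would have both $a<\tfrac14\log_2 e$ and $\rho>4\ln 2$, which together with $\rho<2^a/\ln 2$ forces $2^a>4(\ln 2)^2$ and hence $a>\log_2(4(\ln 2)^2)>\tfrac14\log_2 e$, a contradiction. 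For step (ii): if $v_0=0$ then $\phi(v_0)=\rho\,2^{-a}>0$; otherwise $v_0=a+\log_2(4a/\rho+4/\rho^2)$, so $2^{v_0-a}=4a/\rho+4/\rho^2$ and a one-line computation gives $\phi(v_0)=3a+4/\rho-\log_2(4a/\rho+4/\rho^2)$, whence $\phi(v_0)\ge 0$ is equivalent to the elementary inequality
\[
\frac{4a}{\rho}+\frac{4}{\rho^2}\ \le\ 2^{\,3a+4/\rho}.
\]

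To prove this last inequality I would substitute $x:=3a\ge 0$ and $y:=4/\rho>0$, turning its left-hand side into exactly $\tfrac{xy}{3}+\tfrac{y^2}{4}$; by AM--GM $xy\le(x+y)^2/4$, and $y^2\le(x+y)^2$ trivially, so the left-hand side is at most $\tfrac{(x+y)^2}{12}+\tfrac{(x+y)^2}{4}=\tfrac{(x+y)^2}{3}$, and it remains to invoke $z^2\le 3\cdot 2^z$ for all $z\ge 0$ at $z=x+y$, which is a one-line calculus fact ($\max_{z\ge 0}z^2 2^{-z}=4/(e^2(\ln 2)^2)<3$). The point worth flagging is that $v\mapsto v\,2^{-v}$ is \emph{not} monotone --- it increases on $[0,1/\ln 2]$ and decreases afterwards --- so one cannot just substitute $v=v_0$ directly; step (i) is precisely what certifies that $v_0$ lies on the decreasing branch. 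The closing inequality is also tight enough that naive estimates (e.g. keeping only the quadratic term of the Taylor expansion of $2^x$) miss it by a small constant factor, which is why the AM--GM repackaging into $(x+y)^2/3$ together with the sharp value of $\max z^2 2^{-z}$ is what makes it work.
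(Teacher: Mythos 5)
Your proof is correct, and it takes a genuinely different route from the paper's. The paper works through the auxiliary quantity $v_0(y)$, defined as the unique solution in $[1/\log 2,\infty)$ of $v_0 2^{-v_0}=y$ (expressible via the Lambert $W_{-1}$ function), asserts the intermediate bound $v_0(y)\le \max(0,2+\log_2(1/y)+\log_2\log_2(1/y))$ for $y\in(0,1)$ as ``not hard to check,'' and then specializes $y=\rho 2^{-a}$ and uses $\log_2(\rho^{-1})\le\rho^{-1}$ to simplify to the stated threshold. You instead introduce the convex auxiliary function $\phi(v)=\rho 2^{v-a}-v$, locate its minimizer $v^\star=a-\log_2(\rho\ln 2)$, certify (by contradiction using $\rho\ln 2<2^a$) that the stated threshold lies on the increasing branch of $\phi$, and then verify $\phi(v_0)\ge 0$ directly. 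This last check you reduce, after the substitution $x=3a$, $y=4/\rho$, to $\tfrac{xy}{3}+\tfrac{y^2}{4}\le 2^{x+y}$, which you handle via AM--GM to get $(x+y)^2/3$ and then the single-variable bound $\max_{z\ge 0}z^2 2^{-z}=4/(e^2(\ln 2)^2)<3$. Both proofs ultimately rest on the same structural point (the threshold is past the extremum, and the inequality holds there), but yours stays elementary throughout, avoids the implicit $v_0(y)$ and Lambert function, and fills in explicitly the verification that the paper leaves to the reader. Your step (i), which you are right to flag, is exactly the place the paper handles implicitly by restricting the domain of $v_0(y)$ to $[1/\log 2,\infty)$. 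One minor stylistic note: the remark that the closing inequality is ``tight'' overstates things slightly — the max of $z^2 2^{-z}$ is about $1.13$, so you have a comfortable factor of nearly $3$ of slack; what is genuinely needed is the AM--GM repackaging into $(x+y)^2/3$ rather than the cruder $(x+y)^2$, since $z^2\le 2^z$ does fail on $(2,4)$.
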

\begin{proof}
The function $v\mapsto v2^{-v}$ is increasing on $[0,1/\log 2]$ and decreasing on $[1/\log 2,\infty)$ and takes the (maximal) value $1/(e\log 2) \approx 0.53$ at $v=1/\log 2$. For $y > 0$, define
\[
v_0(y) := \begin{cases}
    0 \quad \text{ if } y \geq 1/(e \log 2)\\
    \text{the unique solution in $[1/\log 2, +\infty)$ to $v_0 2^{-v_0} = y$ if $y < 1/(e\log 2)$}.
\end{cases}
\]
Note that in the second case, this unique solution can be expressed in terms of the Lambert function as $v_0(y) = W_{-1}(-y\log 2)/(-\log 2)$.
 By definition, for any $y > 0$, we have $v 2^{-v} \leq y$ for all $v \geq v_0(y)$. It is not hard to check that for all $y > 0$
\[
v_0(y) \leq 
\begin{cases}
    \max(0,2+\log_2(1/y) + \log_2 \log_2 (1/y)) & \text{ if } y \in [0,1)\\
    0 & \text{ else.}
\end{cases}
\]

With $y = \rho 2^{-a}$, and assuming that $y \in [0,1)$ we have
\[
\begin{aligned}
2 + \log_2 (1/y) + \log_2 \log_2 (1/y) &= 2 + a + \log_2(a \rho^{-1} + \rho^{-1} \log_2(\rho^{-1}))\\
&\leq 2 + a + \log_2(a \rho^{-1} + \rho^{-2})\\
&= a + \log_2(4a\rho^{-1} + 4\rho^{-2})
\end{aligned}
\]
Hence with this we have that for any $\rho, a > 0$
\[
v_0(\rho 2^{-a}) \leq \max(0,a + \log_2(4a \rho^{-1} + 4\rho^{-2})).
\]
This concludes the proof.
\end{proof}
Hence if we take $L \geq \max\left(0,\deg A / 4 + \log_2\left(4\frac{16 C_1 t \Delta \deg A}{\eps} + 4\left(\frac{16 C_1 t \Delta}{\eps}\right)^2 \right)\right)$ the first inequality \eqref{eq:ineq1Lambert} is satisfied. To satisfy the second inequality \eqref{eq:ineq2log}, we can take $L \geq \deg A/4 + \frac{1}{2} \log_2(\frac{4}{\eps})$. Finally, since $\ell = 4L$, we have the desired error bound with
\[
\ell = \deg A + 4 \left \lceil \log_2\left( 4 \frac{\max\left(1 , 16 C_1 t \Delta \deg A + (16 C_1 t \Delta)^2/\eps \right)}{\eps}\right)\right\rceil.
\]
(We used the assumption that $\eps < 4$ which implies that the argument of the logarithm is $\geq 1$, and thus that the whole quantity is necessarily $\geq 0$.) 
With such a choice of $\ell$, note that, from \eqref{eq:optimal-deltat-weakhamiltonian}, $\dt = \eps^{O(1)}$, and so the total complexity of the algorithm is $t/\dt \cdot N^{O(\ell)} = N^{O(\log((1+t)/\eps))}$.
\end{proof}

\section{Numerical experiments}
\label{sec:numerical}
Here we report numerical experiments that demonstrate the applicability of MP in the context of simulation of real-time quantum dynamics.\footnote{The numerical experiments were performed using the Python package FastFermion available at \url{https://www.fastfermion.com}} As a well-known benchmark for simulation of fermionic systems, we focus on the Fermi-Hubbard model, described by the following Hamiltonian
\be
    H = -\sum_{\langle i,j\rangle,\sigma} c^*_{i,\sigma} c_{j,\sigma} + c^*_{j,\sigma} c_{i,\sigma} + U\sum_{i} n_{i,\uparrow} n_{i,\downarrow} \,,
\ee
where $c^*_{i,\sigma},c_{i,\sigma}$ are the fermionic creation and annihilation operators. Here $\langle i,j \rangle$ runs over all neighbouring sites and $\sigma \in \{\uparrow,\downarrow\}$ specifies the spin degree-of-freedom. The first sum in the Hamiltonian represents hopping between sites, while the second one represents the Coulomb repulsion between two electrons on the same site. Even though this is a simplified model of electrons in materials, it already contains complex underlying physics and has also been a central object of study for advancing many classical algorithms for many-body systems \cite{leblanc2015solutions}, and more recently quantum algorithms~\cite{arute2020observation,alam2025fermionic,alam2025programmable}.

First, we show the accuracy that MP is able to achieve. In Fig.~\ref{fig:fh005} we plot the distance in Frobenius norm between $A^\ell_\MP(t)$ and $A_{\trotter}(t)$, the latter being the observable obtained by applying trotterized dynamics only (i.e., without truncation). We do this for a one-dimensional system with $N=6$ sites and an interaction strength $U=1$. As expected from Corollary~\ref{corollary:MPweaklyinteracting}, we see exponential convergence of the output of MP towards the ideal observable.
\definecolor{color0}{HTML}{96c24f}
\definecolor{color1}{HTML}{769d5d}
\definecolor{color2}{HTML}{55786b}
\definecolor{color3}{HTML}{355379}
\definecolor{color4}{HTML}{152e87}
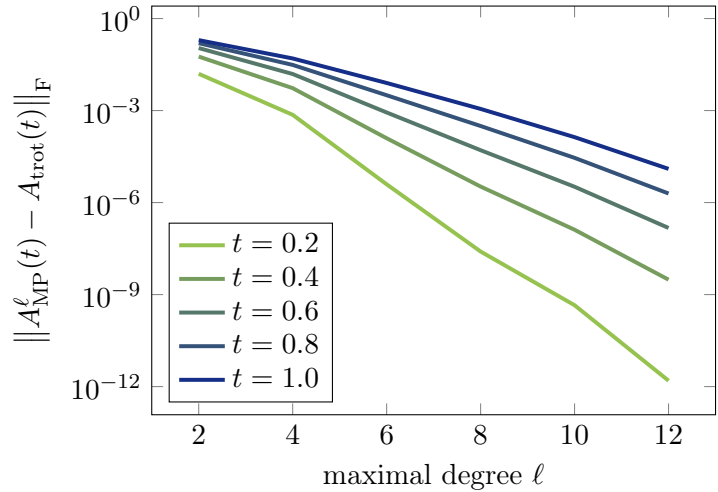
\begin{figure}[ht]
\centering
\begin{tikzpicture}
\begin{axis}[
    xlabel={maximal degree $\ell$},
    ylabel={$\norm{A^\ell_\MP(t)-A_{\trotter}(t)}_\F$},
    legend pos=south west,
    ymode=log,
    width=9cm,
    height=7cm
]
\addplot[color=color0,line width = 1.5pt] table [x=deg, y=t0.2, col sep=comma] {data/fh005.csv};
\addlegendentry{$t=0.2$}

\addplot[color=color1,line width = 1.5pt] table [x=deg, y=t0.4, col sep=comma] {data/fh005.csv};
\addlegendentry{$t=0.4$}

\addplot[color=color2,line width = 1.5pt] table [x=deg, y=t0.6, col sep=comma] {data/fh005.csv};
\addlegendentry{$t=0.6$}

\addplot[color=color3,line width = 1.5pt] table [x=deg, y=t0.8, col sep=comma] {data/fh005.csv};
\addlegendentry{$t=0.8$}

\addplot[color=color4,line width = 1.5pt] table [x=deg, y=t1.0, col sep=comma] {data/fh005.csv};
\addlegendentry{$t=1.0$}
\end{axis}
\end{tikzpicture}
\caption{Average error for 1D Fermi-Hubbard model with $6$ sites. We plot the distance in Frobenius norm between the MP algorithm output and a pure Trotter evolution, as a function of the maximal degree $\ell$. Different lines indicate different time horizons. In all cases, there is an exponential suppression of the Frobenius norm as the maximal degree is increased. The time-step chosen for discretizing the dynamics is $\dt = 0.01$.}
\label{fig:fh005}
\end{figure}

In addition, similar to Ref.~\cite{danna2025majorana}, we consider also two-dimensional systems of size $L\times L$ for $L = 3,5,7$\footnote{For additional numerical stability, in these results we also discard strings with negligible coefficients, i.e. strings $\gamma_S$ whose Majorana coefficient satisfies $\abs{a_S} \leq$ 1e-5.}.
We consider an initial state where all the sites are filled in an antiferromagnetic pattern except for the central one which is empty. In other words, letting $N_\text{centre}=(L^2+1)/2$, we consider
\be
    \ket{\psi} = \prod_{i\neq N_\text{centre}} c_{i,\sigma(i)}^* \ket{\text{0}}
    \qquad
    \sigma(i) =
    \begin{cases}
        \uparrow &\text{ if ($i<N_\text{centre}$ and $i$ odd) or ($i>N_\text{centre}$ and $i$ even)} \\
        \downarrow &\text{ if ($i<N_\text{centre}$ and $i$ even) or (if $i>N_\text{centre}$ and $i$ odd)}.
    \end{cases}
\ee
For each system size, we plot in Fig.~\ref{fig:fh-twodimensional} the probability of the central site being empty as a function of time, i.e. we plot the time evolution of $\langle h_{N_\text{centre},\uparrow}h_{N_\text{centre},\downarrow}\rangle$ where $h_{j,\sigma}= 1-c^*_{j,\sigma}c_{j,\sigma}$. We do this for different values of interaction strength $U \in [0,4]$. Additionally, for $L=3$ we also plot the exact solution as this is readily available at this system size. Across all system sizes, at fixed $U$ we see that an increase in the maximal degree $\ell$ generally leads to higher accuracy. At the same time, as we increase $U$ it can be noticed that the plot for a given maximal degree tends to diverge at increasingly earlier times. This is testament of the behaviour of the characteristic maximal time $\tmax(u)$ with respect to the interaction strength $U$.
It is interesting to note that for all values of $L$ and for $U\leq 2$, the $\ell=8,10$ curves are roughly overlapping, which may be caused by the fact that the truncation parameter is already well converged across all these cases (with the deviation from the exact case for $L=3$ being caused by the Trotter approximation).

As a final remark, let us note that during simulation, each newly generated Majorana string $\gamma_S$ that satisfies $|S|>\ell$ is removed after every single Majorana rotation, as opposed to after having gone through the full Trotter sweep. This can only worsen the accuracy as it might be possible that, within a Trotter step, strings surpassing the threshold might recombine to produce lower-degree strings. However, as seen from the numerical results, the overall results remain highly accurate and reflect the analytical results concerning the MP algorithm.

\definecolor{ell4color}{HTML}{BD2327}
\definecolor{ell6color}{HTML}{C3A10C}
\definecolor{ell8color}{HTML}{769d5d}
\definecolor{ell10color}{HTML}{355379}
\begin{figure}[ht]
\centering
\begin{tikzpicture}
\begin{groupplot}[
    group style={
        group size=3 by 2,
        horizontal sep=1.25cm,
        vertical sep=1.25cm,
        ylabels at=edge left,
    },
    width=5cm,
    height=4cm,
    ylabel={$\langle h_{5,\uparrow}h_{5,\downarrow}\rangle$},
    ytick={0.0,0.2,0.4,0.6,0.8,1.0},
    ymin=-0.05,ymax=1.05,
]

\nextgroupplot[title={$3 \times 3$, $U=0.0$},
    legend to name=sharedlegendfh004,
    legend columns=-1,
]
\addplot[color=grey,line width = 1.5pt,style=dashed] table [x=time, y=U0.0exact, col sep=comma] {data/fh004.csv};
\addlegendentry{exact}
\addplot[color=ell4color,line width = 1.5pt] table [x=time, y=U0.0ell4, col sep=comma] {data/fh004.csv};
\addlegendentry{$\ell=4$}
\addplot[color=ell6color,line width = 1.5pt] table [x=time, y=U0.0ell6, col sep=comma] {data/fh004.csv};
\addlegendentry{$\ell=6$}
\addplot[color=ell8color,line width = 1.5pt] table [x=time, y=U0.0ell8, col sep=comma] {data/fh004.csv};
\addlegendentry{$\ell=8$}
\addplot[color=ell10color,line width = 1.5pt] table [x=time, y=U0.0ell10, col sep=comma] {data/fh004.csv};
\addlegendentry{$\ell=10$}

\nextgroupplot[title={$3 \times 3$, $U=0.5$}]
\addplot[color=grey,line width = 1.5pt,style=dashed] table [x=time, y=U0.5exact, col sep=comma] {data/fh004.csv};
\addplot[color=ell4color,line width = 1.5pt] table [x=time, y=U0.5ell4, col sep=comma] {data/fh004.csv};
\addplot[color=ell6color,line width = 1.5pt] table [x=time, y=U0.5ell6, col sep=comma] {data/fh004.csv};
\addplot[color=ell8color,line width = 1.5pt] table [x=time, y=U0.5ell8, col sep=comma] {data/fh004.csv};
\addplot[color=ell10color,line width = 1.5pt] table [x=time, y=U0.5ell10, col sep=comma] {data/fh004.csv};

\nextgroupplot[title={$3 \times 3$, $U=1.0$}]
\addplot[color=grey,line width = 1.5pt,style=dashed] table [x=time, y=U1.0exact, col sep=comma] {data/fh004.csv};
\addplot[color=ell4color,line width = 1.5pt] table [x=time, y=U1.0ell4, col sep=comma] {data/fh004.csv};
\addplot[color=ell6color,line width = 1.5pt] table [x=time, y=U1.0ell6, col sep=comma] {data/fh004.csv};
\addplot[color=ell8color,line width = 1.5pt] table [x=time, y=U1.0ell8, col sep=comma] {data/fh004.csv};
\addplot[color=ell10color,line width = 1.5pt] table [x=time, y=U1.0ell10, col sep=comma] {data/fh004.csv};

\nextgroupplot[title={$3 \times 3$, $U=1.5$}]
\addplot[color=grey,line width = 1.5pt,style=dashed] table [x=time, y=U1.5exact, col sep=comma] {data/fh004.csv};
\addplot[color=ell4color,line width = 1.5pt] table [x=time, y=U1.5ell4, col sep=comma] {data/fh004.csv};
\addplot[color=ell6color,line width = 1.5pt] table [x=time, y=U1.5ell6, col sep=comma] {data/fh004.csv};
\addplot[color=ell8color,line width = 1.5pt] table [x=time, y=U1.5ell8, col sep=comma] {data/fh004.csv};
\addplot[color=ell10color,line width = 1.5pt] table [x=time, y=U1.5ell10, col sep=comma] {data/fh004.csv};

\nextgroupplot[title={$3 \times 3$, $U=2.0$}]
\addplot[color=grey,line width = 1.5pt,style=dashed] table [x=time, y=U2.0exact, col sep=comma] {data/fh004.csv};
\addplot[color=ell4color,line width = 1.5pt] table [x=time, y=U2.0ell4, col sep=comma] {data/fh004.csv};
\addplot[color=ell6color,line width = 1.5pt] table [x=time, y=U2.0ell6, col sep=comma] {data/fh004.csv};
\addplot[color=ell8color,line width = 1.5pt] table [x=time, y=U2.0ell8, col sep=comma] {data/fh004.csv};
\addplot[color=ell10color,line width = 1.5pt] table [x=time, y=U2.0ell10, col sep=comma] {data/fh004.csv};

\nextgroupplot[title={$3 \times 3$, $U=4.0$}]
\addplot[color=grey,line width = 1.5pt,style=dashed] table [x=time, y=U4.0exact, col sep=comma] {data/fh004.csv};
\addplot[color=ell4color,line width = 1.5pt] table [x=time, y=U4.0ell4, col sep=comma] {data/fh004.csv};
\addplot[color=ell6color,line width = 1.5pt] table [x=time, y=U4.0ell6, col sep=comma] {data/fh004.csv};
\addplot[color=ell8color,line width = 1.5pt] table [x=time, y=U4.0ell8, col sep=comma] {data/fh004.csv};
\addplot[color=ell10color,line width = 1.5pt] table [x=time, y=U4.0ell10, col sep=comma] {data/fh004.csv};

\end{groupplot}
\node at (current bounding box.north) [above=1.5mm] {\pgfplotslegendfromname{sharedlegendfh004}};
\end{tikzpicture}

\begin{tikzpicture}
\begin{groupplot}[
    group style={
        group size=3 by 2,
        horizontal sep=1.25cm,
        vertical sep=1.25cm,
        ylabels at=edge left,
    },
    width=5cm,
    height=4cm,
    ylabel={$\langle h_{13,\uparrow}h_{13,\downarrow}\rangle$},
    ytick={0.0,0.2,0.4,0.6,0.8,1.0},
    ymin=-0.05,ymax=1.05,
]

\nextgroupplot[title={$5 \times 5$, $U=0.0$},
    legend to name=sharedlegendfh006,
    legend columns=-1,
]

\addplot[color=ell4color,line width = 1.5pt] table [x=time, y=U0.0ell4, col sep=comma] {data/fh006.csv};
\addlegendentry{$\ell=4$}
\addplot[color=ell6color,line width = 1.5pt] table [x=time, y=U0.0ell6, col sep=comma] {data/fh006.csv};
\addlegendentry{$\ell=6$}
\addplot[color=ell8color,line width = 1.5pt] table [x=time, y=U0.0ell8, col sep=comma] {data/fh006.csv};
\addlegendentry{$\ell=8$}
\addplot[color=ell10color,line width = 1.5pt] table [x=time, y=U0.0ell10, col sep=comma] {data/fh006.csv};
\addlegendentry{$\ell=10$}

\nextgroupplot[title={$5 \times 5$, $U=0.5$}]

\addplot[color=ell4color,line width = 1.5pt] table [x=time, y=U0.5ell4, col sep=comma] {data/fh006.csv};
\addplot[color=ell6color,line width = 1.5pt] table [x=time, y=U0.5ell6, col sep=comma] {data/fh006.csv};
\addplot[color=ell8color,line width = 1.5pt] table [x=time, y=U0.5ell8, col sep=comma] {data/fh006.csv};
\addplot[color=ell10color,line width = 1.5pt] table [x=time, y=U0.5ell10, col sep=comma] {data/fh006.csv};

\nextgroupplot[title={$5 \times 5$, $U=1.0$}]
\addplot[color=ell4color,line width = 1.5pt] table [x=time, y=U1.0ell4, col sep=comma] {data/fh006.csv};
\addplot[color=ell6color,line width = 1.5pt] table [x=time, y=U1.0ell6, col sep=comma] {data/fh006.csv};
\addplot[color=ell8color,line width = 1.5pt] table [x=time, y=U1.0ell8, col sep=comma] {data/fh006.csv};
\addplot[color=ell10color,line width = 1.5pt] table [x=time, y=U1.0ell10, col sep=comma] {data/fh006.csv};

\nextgroupplot[title={$5 \times 5$, $U=1.5$}]
\addplot[color=ell4color,line width = 1.5pt] table [x=time, y=U1.5ell4, col sep=comma] {data/fh006.csv};
\addplot[color=ell6color,line width = 1.5pt] table [x=time, y=U1.5ell6, col sep=comma] {data/fh006.csv};
\addplot[color=ell8color,line width = 1.5pt] table [x=time, y=U1.5ell8, col sep=comma] {data/fh006.csv};
\addplot[color=ell10color,line width = 1.5pt] table [x=time, y=U1.5ell10, col sep=comma] {data/fh006.csv};

\nextgroupplot[title={$5 \times 5$, $U=2.0$}]
\addplot[color=ell4color,line width = 1.5pt] table [x=time, y=U2.0ell4, col sep=comma] {data/fh006.csv};
\addplot[color=ell6color,line width = 1.5pt] table [x=time, y=U2.0ell6, col sep=comma] {data/fh006.csv};
\addplot[color=ell8color,line width = 1.5pt] table [x=time, y=U2.0ell8, col sep=comma] {data/fh006.csv};
\addplot[color=ell10color,line width = 1.5pt] table [x=time, y=U2.0ell10, col sep=comma] {data/fh006.csv};

\nextgroupplot[title={$5 \times 5$, $U=4.0$}]
\addplot[color=ell4color,line width = 1.5pt] table [x=time, y=U4.0ell4, col sep=comma] {data/fh006.csv};
\addplot[color=ell6color,line width = 1.5pt] table [x=time, y=U4.0ell6, col sep=comma] {data/fh006.csv};
\addplot[color=ell8color,line width = 1.5pt] table [x=time, y=U4.0ell8, col sep=comma] {data/fh006.csv};
\addplot[color=ell10color,line width = 1.5pt] table [x=time, y=U4.0ell10, col sep=comma] {data/fh006.csv};

\end{groupplot}
\end{tikzpicture}
 
\begin{tikzpicture}
\begin{groupplot}[
    group style={
        group size=3 by 2,
        horizontal sep=1.25cm,
        vertical sep=1.2cm,
        ylabels at=edge left,
        xlabels at=edge bottom,
    },
    width=5cm,
    height=4cm,
    xlabel={time horizon $t$},
    ylabel={$\langle h_{25,\uparrow}h_{25,\downarrow}\rangle$},
    ytick={0.0,0.2,0.4,0.6,0.8,1.0},
    ymin=-0.05,ymax=1.05,
]

\nextgroupplot[title={$7 \times 7$, $U=0.0$},
    legend to name=sharedlegendfh007,
    legend columns=-1,
]

\addplot[color=ell4color,line width = 1.5pt] table [x=time, y=U0.0ell4, col sep=comma] {data/fh007.csv};
\addlegendentry{$\ell=4$}
\addplot[color=ell6color,line width = 1.5pt] table [x=time, y=U0.0ell6, col sep=comma] {data/fh007.csv};
\addlegendentry{$\ell=6$}
\addplot[color=ell8color,line width = 1.5pt] table [x=time, y=U0.0ell8, col sep=comma] {data/fh007.csv};
\addlegendentry{$\ell=8$}
\addplot[color=ell10color,line width = 1.5pt] table [x=time, y=U0.0ell10, col sep=comma] {data/fh007.csv};
\addlegendentry{$\ell=10$}

\nextgroupplot[title={$7 \times 7$, $U=0.5$}]

\addplot[color=ell4color,line width = 1.5pt] table [x=time, y=U0.5ell4, col sep=comma] {data/fh007.csv};
\addplot[color=ell6color,line width = 1.5pt] table [x=time, y=U0.5ell6, col sep=comma] {data/fh007.csv};
\addplot[color=ell8color,line width = 1.5pt] table [x=time, y=U0.5ell8, col sep=comma] {data/fh007.csv};
\addplot[color=ell10color,line width = 1.5pt] table [x=time, y=U0.5ell10, col sep=comma] {data/fh007.csv};

\nextgroupplot[title={$7 \times 7$, $U=1.0$}]
\addplot[color=ell4color,line width = 1.5pt] table [x=time, y=U1.0ell4, col sep=comma] {data/fh007.csv};
\addplot[color=ell6color,line width = 1.5pt] table [x=time, y=U1.0ell6, col sep=comma] {data/fh007.csv};
\addplot[color=ell8color,line width = 1.5pt] table [x=time, y=U1.0ell8, col sep=comma] {data/fh007.csv};
\addplot[color=ell10color,line width = 1.5pt] table [x=time, y=U1.0ell10, col sep=comma] {data/fh007.csv};

\nextgroupplot[title={$7 \times 7$, $U=1.5$}]
\addplot[color=ell4color,line width = 1.5pt] table [x=time, y=U1.5ell4, col sep=comma] {data/fh007.csv};
\addplot[color=ell6color,line width = 1.5pt] table [x=time, y=U1.5ell6, col sep=comma] {data/fh007.csv};
\addplot[color=ell8color,line width = 1.5pt] table [x=time, y=U1.5ell8, col sep=comma] {data/fh007.csv};
\addplot[color=ell10color,line width = 1.5pt] table [x=time, y=U1.5ell10, col sep=comma] {data/fh007.csv};

\nextgroupplot[title={$7 \times 7$, $U=2.0$}]
\addplot[color=ell4color,line width = 1.5pt] table [x=time, y=U2.0ell4, col sep=comma] {data/fh007.csv};
\addplot[color=ell6color,line width = 1.5pt] table [x=time, y=U2.0ell6, col sep=comma] {data/fh007.csv};
\addplot[color=ell8color,line width = 1.5pt] table [x=time, y=U2.0ell8, col sep=comma] {data/fh007.csv};
\addplot[color=ell10color,line width = 1.5pt] table [x=time, y=U2.0ell10, col sep=comma] {data/fh007.csv};

\nextgroupplot[title={$7 \times 7$, $U=4.0$}]
\addplot[color=ell4color,line width = 1.5pt] table [x=time, y=U4.0ell4, col sep=comma] {data/fh007.csv};
\addplot[color=ell6color,line width = 1.5pt] table [x=time, y=U4.0ell6, col sep=comma] {data/fh007.csv};
\addplot[color=ell8color,line width = 1.5pt] table [x=time, y=U4.0ell8, col sep=comma] {data/fh007.csv};
\addplot[color=ell10color,line width = 1.5pt] table [x=time, y=U4.0ell10, col sep=comma] {data/fh007.csv};

\end{groupplot}
\end{tikzpicture}

\caption{Expectation value of the hole density in central site for $L\times L$ Fermi Hubbard model with $L\in\{3,5,7\}$. We plot, for different values of the interaction strength $U$, the behaviour of the hole density in the central site as a function of time. We compare across multiple values of $\ell\in\{4,6,8,10\}$ to show in a qualitative way that higher $\ell$ converges to the exact result. The time-step is $\dt =0.02$.}
\label{fig:fh-twodimensional}
\end{figure}
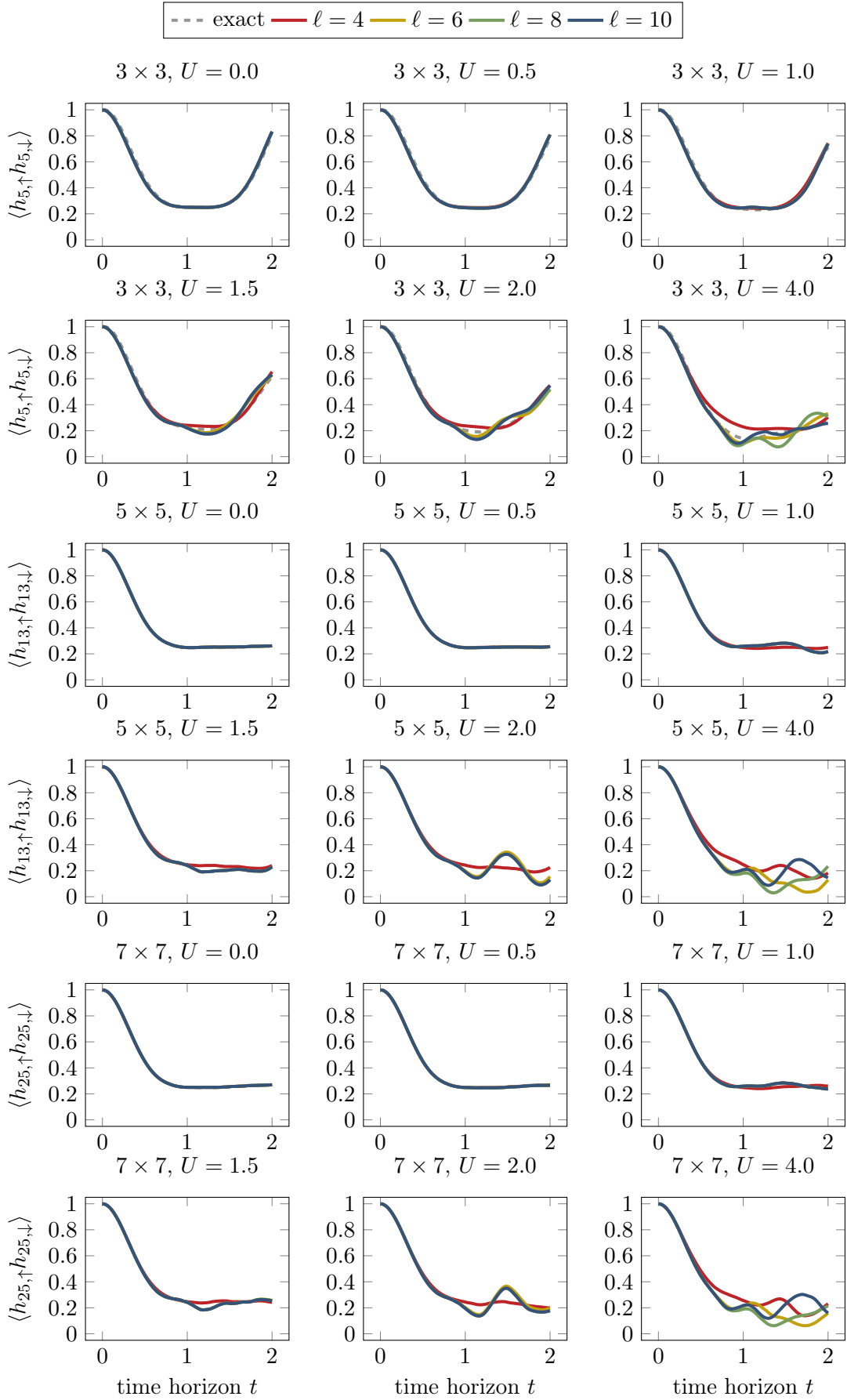

\clearpage
\newpage

\section*{Acknowledgments}
G.F. and H.F. acknowledge funding by UK Research and Innovation (UKRI) under the UK government’s Horizon Europe funding guarantee EP/X032051/1. O.F. acknowledges support by the European Research Council (ERC Grant AlgoQIP, Agreement No. 851716), by the European Union’s Horizon 2020 research and innovation programme under Grant Agreement No 101017733 (VERIqTAS) and by the Agence Nationale de la Recherche under the Plan France 2030 with the reference ANR-22-PETQ-0007.

\bibliography{fermibiblio}

\newcommand{\etalchar}[1]{$^{#1}$}
\begin{thebibliography}{AB{\v{C}}{\etalchar{+}}25b}

\bibitem[AAB{\etalchar{+}}20]{arute2020observation}
Frank Arute, Kunal Arya, Ryan Babbush, Dave Bacon, Joseph~C Bardin, Rami Barends, Andreas Bengtsson, Sergio Boixo, Michael Broughton, Bob~B Buckley, et~al.
\newblock Observation of separated dynamics of charge and spin in the {F}ermi-{H}ubbard model.
\newblock {\em arXiv preprint arXiv:2010.07965}, 2020.

\bibitem[AB{\v{C}}{\etalchar{+}}25a]{alam2025fermionic}
Faisal Alam, Jan~Lukas Bosse, Ieva {\v{C}}epait{\.e}, Adrian Chapman, Laura Clinton, Marcos Crichigno, Elizabeth Crosson, Toby Cubitt, Charles Derby, Oliver Dowinton, et~al.
\newblock Fermionic dynamics on a trapped-ion quantum computer beyond exact classical simulation.
\newblock {\em arXiv preprint arXiv:2510.26300}, 2025.

\bibitem[AB{\v{C}}{\etalchar{+}}25b]{alam2025programmable}
Faisal Alam, Jan~Lukas Bosse, Ieva {\v{C}}epait{\.e}, Adrian Chapman, Laura Clinton, Marcos Crichigno, Elizabeth Crosson, Toby Cubitt, Charles Derby, Oliver Dowinton, et~al.
\newblock Programmable digital quantum simulation of 2d {F}ermi-{H}ubbard dynamics using 72 superconducting qubits.
\newblock {\em arXiv preprint arXiv:2510.26845}, 2025.

\bibitem[AGL{\etalchar{+}}23]{aharonov2023polynomial}
Dorit Aharonov, Xun Gao, Zeph Landau, Yunchao Liu, and Umesh Vazirani.
\newblock A polynomial-time classical algorithm for noisy random circuit sampling.
\newblock In {\em Proceedings of the 55th Annual ACM Symposium on Theory of Computing}, STOC 2023, page 945–957, New York, NY, USA, 2023. Association for Computing Machinery.

\bibitem[ASR{\etalchar{+}}25]{angrisani2025classically}
Armando Angrisani, Alexander Schmidhuber, Manuel~S. Rudolph, M.~Cerezo, Zo\"e Holmes, and Hsin-Yuan Huang.
\newblock Classically estimating observables of noiseless quantum circuits.
\newblock {\em Phys. Rev. Lett.}, 135:170602, Oct 2025.

\bibitem[BFH{\etalchar{+}}17]{bennink2017unbiased}
Ryan~S. Bennink, Erik~M. Ferragut, Travis~S. Humble, Jason~A. Laska, James~J. Nutaro, Mark~G. Pleszkoch, and Raphael~C. Pooser.
\newblock Unbiased simulation of near-{C}lifford quantum circuits.
\newblock {\em Phys. Rev. A}, 95:062337, Jun 2017.

\bibitem[BHC25]{beguvsic2025simulating}
Tomislav Begu{\v{s}}i{\'c}, Kasra Hejazi, and Garnet~Kin Chan.
\newblock Simulating quantum circuit expectation values by {C}lifford perturbation theory.
\newblock {\em The Journal of Chemical Physics}, 162(15), 2025.

\bibitem[CLY23]{chen2023speed}
Chi-Fang~(Anthony) Chen, Andrew Lucas, and Chao Yin.
\newblock Speed limits and locality in many-body quantum dynamics.
\newblock {\em Reports on Progress in Physics}, 86(11):116001, sep 2023.

\bibitem[CRC{\etalchar{+}}25]{chen2025convergence}
Hongrui Chen, Cambyse Rouz{\'e}, Jielun Chen, Jiaqing Jiang, Samuel~O Scalet, Yongtao Zhan, Garnet~Kin Chan, Lexing Ying, and Yu~Tong.
\newblock Convergence of the cumulant expansion and polynomial-time algorithm for weakly interacting fermions.
\newblock {\em arXiv preprint arXiv:2512.12010}, 2025.

\bibitem[CST{\etalchar{+}}21]{childs2021theory}
Andrew~M. Childs, Yuan Su, Minh~C. Tran, Nathan Wiebe, and Shuchen Zhu.
\newblock Theory of {T}rotter error with commutator scaling.
\newblock {\em Phys. Rev. X}, 11:011020, Feb 2021.

\bibitem[Die25]{diestel2025graph}
R.~Diestel.
\newblock {\em Graph Theory}.
\newblock Springer Graduate Texts in Mathematics. Springer Nature, 2025.

\bibitem[DNC25]{danna2025majorana}
Matteo D'Anna, Jannes Nys, and Juan Carrasquilla.
\newblock Majorana string simulation of nonequilibrium dynamics in two-dimensional lattice fermion systems.
\newblock {\em arXiv preprint arXiv:2511.02809}, 2025.

\bibitem[Fey82]{feynman1982simulating}
Richard~P Feynman.
\newblock Simulating physics with computers.
\newblock {\em International Journal of Theoretical Physics}, 21(6/7), 1982.

\bibitem[FRD{\etalchar{+}}25]{fontana2025classical}
Enrico Fontana, Manuel~S Rudolph, Ross Duncan, Ivan Rungger, and Cristina C{\^\i}rstoiu.
\newblock Classical simulations of noisy variational quantum circuits.
\newblock {\em npj Quantum Information}, 11(1):1--12, 2025.

\bibitem[KEA{\etalchar{+}}23]{kim2023evidence}
Youngseok Kim, Andrew Eddins, Sajant Anand, Ken~Xuan Wei, Ewout Van Den~Berg, Sami Rosenblatt, Hasan Nayfeh, Yantao Wu, Michael Zaletel, Kristan Temme, et~al.
\newblock Evidence for the utility of quantum computing before fault tolerance.
\newblock {\em Nature}, 618(7965):500--505, 2023.

\bibitem[LAB{\etalchar{+}}15]{leblanc2015solutions}
J.~P.~F. LeBlanc, Andrey~E. Antipov, Federico Becca, Ireneusz~W. Bulik, Garnet Kin-Lic Chan, Chia-Min Chung, Youjin Deng, Michel Ferrero, Thomas~M. Henderson, Carlos~A. Jim\'enez-Hoyos, et~al.
\newblock Solutions of the two-dimensional {H}ubbard model: Benchmarks and results from a wide range of numerical algorithms.
\newblock {\em Phys. Rev. X}, 5:041041, Dec 2015.

\bibitem[Llo96]{lloyd1996universal}
Seth Lloyd.
\newblock Universal quantum simulators.
\newblock {\em Science}, 273(5278):1073--1078, 1996.

\bibitem[MHS{\etalchar{+}}25]{miller2025simulation}
Aaron Miller, Zo{\"e} Holmes, {\"O}zlem Salehi, Rahul Chakraborty, Anton Nyk{\"a}nen, Zolt{\'a}n Zimbor{\'a}s, Adam Glos, and Guillermo Garc{\'\i}a-P{\'e}rez.
\newblock Simulation of fermionic circuits using {M}ajorana {P}ropagation.
\newblock {\em arXiv preprint arXiv:2503.18939}, 2025.

\bibitem[PWB15]{pashayan2015estimating}
Hakop Pashayan, Joel~J Wallman, and Stephen~D Bartlett.
\newblock Estimating outcome probabilities of quantum circuits using quasiprobabilities.
\newblock {\em Physical review letters}, 115(7):070501, 2015.

\bibitem[RJT{\etalchar{+}}25]{rudolph2025pauli}
Manuel~S Rudolph, Tyson Jones, Yanting Teng, Armando Angrisani, and Zoe Holmes.
\newblock Pauli {Propagation}: A computational framework for simulating quantum systems.
\newblock {\em arXiv preprint arXiv:2501.13101}, 2025.

\bibitem[RLCK19]{rall2019simulation}
Patrick Rall, Daniel Liang, Jeremy Cook, and William Kretschmer.
\newblock Simulation of qubit quantum circuits via {P}auli {P}ropagation.
\newblock {\em Phys. Rev. A}, 99:062337, Jun 2019.

\bibitem[{\v{S}}MBB25]{vsmid2025polynomial}
{\v{S}}t{\v{e}}p{\'a}n {\v{S}}m{\'\i}d, Richard Meister, Mario Berta, and Roberto Bondesan.
\newblock Polynomial-time quantum {G}ibbs sampling for the weak and strong coupling regime of the {F}ermi-{H}ubbard model at any temperature.
\newblock {\em Nature Communications}, 16(1):10736, 2025.

\bibitem[WA23]{wild2023classical}
Dominik~S. Wild and \'Alvaro~M. Alhambra.
\newblock Classical simulation of short-time quantum dynamics.
\newblock {\em PRX Quantum}, 4:020340, Jun 2023.

\end{thebibliography}
\end{document}